\tikzstyle{vertex}=[circle, draw, fill, inner sep=0pt, minimum size=0.15cm]
\newcommand\X{\mathbf{X}}
\newcommand\Y{\mathbf{Y}}
\newcommand\A{\mathbf{A}}
\newcommand\Z{\mathbf{Z}}
\newcommand\Ptrain{\mathbb{P}_{\text{train}}}
\newcommand\Etrain{\mathbb{E}_{\text{train}}}
\DeclareMathOperator*{\argmin}{argmin}
\newdimen\arrowsize
\pgfsetroundcap    \pgfpathmoveto{\pgfpoint{0\arrowsize}{0\arrowsize}}
\newtheorem{theorem}{Theorem}
\newtheorem{example}{Example}
\newtheorem{lemma}{Lemma}
\newtheorem{proposition}{Proposition}
\title{Anchor regression: heterogeneous data meet causality}
\author{Dominik Rothenh\"ausler, Nicolai Meinshausen, Peter B\"uhlmann and Jonas Peters}
\begin{document}

\maketitle
\abstract{
We consider the problem of predicting a response variable from a set of covariates on a data set that differs in distribution from the training data. Causal parameters are optimal in terms of predictive accuracy if in the new distribution either many variables are affected by interventions or only some variables are affected, but the perturbations are strong. If the training and test distributions differ by a shift, causal parameters might be too conservative to perform well on the above task. This motivates anchor regression, a method that makes use of exogeneous variables to solve a relaxation of the “causal” minimax problem by considering a modification of the least-squares loss. The procedure naturally provides an interpolation between the solutions of ordinary least squares and two-stage least squares. We prove that the estimator satisfies predictive guarantees in terms of distributional robustness against shifts in a linear class; these guarantees are valid even if the instrumental variables assumptions are violated. If anchor regression and least squares provide the same answer (“anchor stability”), we establish that OLS parameters are invariant under certain distributional changes. Anchor regression is shown empirically to improve replicability and protect against distributional shifts.

}

\section{Introduction}
A substantial part of contemporaneous datasets are not collected under carefully designed experiments.
  Furthermore, data collected
from different sources are
often heterogeneous due to, e.g., changing
circumstances, batch effects, unobserved confounders or time-shifts in the
distribution. These heterogeneities or “perturbations” make it difficult to
gain actionable knowledge that generalizes well to new data
sets. Approaches to deal with inhomogeneities include robust methods
\citep{huber1964robust,huber1973robust}, mixed effects models
\citep{pinheiro2000linear}, time-varying coefficient models
\citep{hastie1993varying,fan1999statistical} and maximin effects
\citep{meinshausen2015maximin}.

On the other hand there is a growing literature on causal inference under
various types of assumptions and different frameworks, with applications
ranging from public health to biology and economics
\citep{lauritzen1988local,Bollen1989,greenland1999causal,Spirtes2000,robins2000marginal,dawid2000causal,rubin2005causal,Pearl2009,Peters2017}.
Often the goal is to find the
causes of some response variable $Y$ among a given set of covariates $X$ or
to quantify the causal relationships between a set of variables.  There are
two main reasons why one is
interested in
the identification and
quantification of causal
effects. On one hand, it answers questions of the type ``what happens to
variable $Y$ if we intervene on variable $X$'', perhaps being the
  classical viewpoint of causality.
On the other hand, predictions based on a causal model, that is, using the conditional mean of $Y$ given all its causal predictors, 
will in general work
equally well under arbitrary perturbations (interventions) on the covariates
  and thus, this provides an answer to the problem of generalization to new
  data sets mentioned above. The  invariance property
for prediction across interventions
  or perturbations has recently been exploited for causal inference
\citep{peters2016causal} and a form of invariance plays a
  crucial role here as well.

In causal inference, one often considers so-called hard interventions that set some covariates to a certain value. In this paper, we instead consider interventions that shift the distribution of a target variable, which corresponds to an intervention on a variable that enters the target equation linearly.
Using causal concepts for prediction under
heterogeneous data seems attractive due to invariance guarantees under
arbitrary shifts. In practice, however, exact invariance guarantees may be too conservative and can come with
 a price of
subpar predictive performance on observational and moderately shifted data. We propose a balanced approach for trading off predictive
performance on observational data and predictive
performance on perturbed (shifted)
new data, with rigorous
optimality guarantees under specific sets of perturbations or
  interventions. This can be cast as a form of distributional robustness, as discussed
next.
We consider being robust to
interventional shifts in a particular class of models where identifying functionals yield a
particularly elegant modification of OLS loss, with future work possibly allowing
more general types of robustness to be developed. In addition to
  distributionally robust prediction, we will also consider the problem
  of distributionally robust variable selection. In this context,
  distributionally robust variable selection refers to the question
  whether a statistical parameter is invariant under certain distribution
  changes. Distributionally robust prediction and variable selection are
  closely related, as we will see below.

\subsection{Distributionally robust prediction and variable selection}
In a linear setting, the goal of distributionally robust prediction can be expressed as the optimization problem
\begin{equation} \label{eq:optimrob}
\min_{b \in \mathbb{R}^{d}} \max_{F \in \mathcal{F}} \mathbb{E}_F[(Y-X^\intercal b)^{2}],
\end{equation}
where $X$ is a $d$-dimensional vector of covariates, $Y$ is the
target variable of interest, $\mathcal{F}$ is a class of distributions, and $\mathbb{E}_F$ takes the expectation w.r.t.\ $F
\in \mathcal{F}$.
Choosing different classes $\mathcal{F}$ results in
estimators with different properties, see for example
\citet{sinha2017certifiable,gao2017wasserstein,meinshausen18robust}.
We  first discuss two well-known choices of $\mathcal{F}$ and the corresponding estimators.

\subsubsection{No perturbations and ordinary least squares}
If
$\mathcal{F}$ contains only the training (or observational) distribution, we write $\Etrain$ and the optimization problem~\eqref{eq:optimrob} becomes ordinary least squares,
$$
b_{\text{OLS}} =
\argmin_{b} \Etrain[(Y-X^\intercal b)^{2}].
$$
This does not take into account any
distributional robustness.
The sample version substitutes
  $\Etrain$ by the sample mean over the observed data
  resulting in ordinary least squares estimation. We discuss in
  Section~\ref{sec:relatedwork} that $\ell_{2}$- and $\ell_1$-norm regularized regression can
  also be derived from a sample version of~\eqref{eq:optimrob} for a
  suitable class $\mathcal{F}$.

\subsubsection{Intervention perturbations and causality}
Assume now that the distribution $(X,Y)$ is induced by an (unknown) linear causal model, e.g., a linear structural causal model,
an example of which we will see in
Section~\ref{sec:setting-notation}.If the class $\mathcal{F}$ contains all interventions on subsets of variables not including $Y$,
then the
optimizer of~\eqref{eq:optimrob}
is the vector of causal coefficients
\citep[e.g.,][Theorem~1]{rojas2015causal}.
That is,
\begin{eqnarray}\label{eq:33b}
  b_{\text{causal}} = \argmin_{b} \max_{F \in {\cal F}} \mathbb{E}_{F}[(Y-X^\intercal b)^{2}],
\end{eqnarray}
for ${\cal F}$ containing all interventions on (components) of $X$. 
Similarly, the causal parameters are optimal if in all distributions $F \in \mathcal{F}$
there are hard interventions on all parents and children of $X$ (here, the interventions do not  need to be arbitrarily strong). Both of these results are direct implications of well-known invariance properties of causal models \citep{haavelmo1944,aldrich1989,Pearl2009}.

In this spirit, a causal model can be seen as a prediction mechanism that works best under interventions on subsets of $X$
that are arbitrarily strong or affect many variables.
Under the training distribution, however, this solution is usually not as good as $b_{\text{OLS}}$,
\begin{equation}
 \Etrain[(Y-X^\intercal b_{\text{causal}})^{2}] \ge \min_{b} \Etrain[(Y-X^\intercal b)^{2}] = \Etrain[(Y-X^\intercal b_{\text{OLS}})^{2}],
\end{equation}
with a potentially large difference.
Hence in many cases, estimating the causal parameter leads to conservative predictive performance compared to standard prediction methods.
The OLS solution 
on the other hand,
can have arbitrarily high predictive error when the test distribution is obtained under an
intervention.

This paper suggests a trade-off between these two estimation principles.
Several relaxations of the problem in equation~\eqref{eq:33b} are possible.
Instead of protecting against arbitrarily strong interventions
one can protect against interventions up to a certain size (norm).
Also, perturbations
in some directions
may be more important than in other directions.
Alternatively, instead of protecting against interventions on all subsets of variables $X_{1},\ldots, X_{d}$, one can attempt to find out which variables $S \subseteq \{X_{1},\ldots, X_{d}\}$ are likely to be perturbed in the future. Then one can protect against interventions on the variables in $S$. For example, we might know (e.g., through background knowledge) that shifts in the distribution of $X_{1}$ are
more likely than shifts in the distribution of $X_{2}$ on future data
sets, which may be included in the class
$\mathcal{F}$.

In
this paper, we
propose a new estimation principle, called \emph{anchor
regression}, see~\eqref{eq:anchor}.
We will see that under a linearity assumption,
the proposed estimator
can be written
as a solution to~\eqref{eq:optimrob},
where the class ${\cal F}$ consists of certain shift interventions,
  i.e.,
  interventions that shift numerical variables by
  a certain amount, which then propagate through the system.

  \subsubsection{Distributional replicability}

Distributional replicability aims to understand whether a statistical parameter is stable under certain distributional changes. Replicability in this sense is distinctly different from statistical uncertainties due to finite samples, but closely related to the concepts of invariance and distributionally robust prediction.   In the case of ordinary least-squares, it can be formalized as follows. The goal is to investigate whether
\begin{equation*}
\argmin_{b \in \mathbb{R}^{d}}  \mathbb{E}_F[(Y-X^\intercal b)^{2}] \approx \argmin_{b \in \mathbb{R}^{d}}  \mathbb{E}_{F'}[(Y-X^\intercal b)^{2}],
\end{equation*}
for all $F,F' \in \mathcal{F}$, where $\mathcal{F}$ is some set of distributions. For example, two researchers may collect data about the same research question in different locations. Due to different circumstances, the data may come from two distributions $F \neq F'$. Even if the researchers use the same OLS model, they might get different estimates if the estimator is sensitive to small distributional changes.

We will see that \emph{anchor regression} can be used to assess distributional replicability of OLS parameters across a certain set of distributions $\mathcal{F}$. 

\subsection{Our contribution}\label{sec:our-contribution}
We propose an estimator that regularizes ordinary least
squares with a penalty encouraging some form of invariance as
  mentioned above. The setting relies on the presence of exogenous variables which generate heterogeneity.
We denote by $A \in \mathbb{R}^{q}$
such exogeneous variables and call them
``anchors''. If $A$ is discrete, dummy encoding can be used in a pre-processing step to obtain $A \in \mathbb{R}^q$.
Let $X$ and $Y$ be predictors and target variable, and assume that all variables are centered and have finite variance. Let further $P_A$ denote the $L_2$-projection on the linear span from the components of $A$
     and write $\mathrm{Id}(Z) := Z$.
We then define, for $\gamma >0$,
the solution $b^{\gamma}$ to the population version of \emph{anchor
regression} as
\begin{equation}\label{eq:anchor}
  b^{\gamma} := \argmin_{b}  \Etrain[((\mathrm{Id}- \mathrm{P}_{A})(Y-X^\intercal b))^{2}] + \gamma  \Etrain[( \mathrm{P}_{A}(Y-X^\intercal b))^{2}],
\end{equation}
where $\Etrain$ denotes the expectation over the observational or training distribution.

Turning to the finite-sample case, let $\X \in \mathbb{R}^{n \times d}$ be a matrix containing observations of $X$. Analogously, the matrix containing observations of $A$ is denoted by $\A \in \mathbb{R}^{n \times q}$, and the vector containing the observations of $Y$ is denoted by $\Y \in \mathbb{R}^n$. We recommend a simple plug-in estimator for the \emph{anchor regression} coefficient $b^\gamma$:
\begin{equation}
  \hat b^{ \gamma} = \argmin_{b} \|(\mathrm{Id}-\Pi_{\A})( \Y- \X b) \|_{2}^{2} +  \gamma \|  \Pi_{\A} ( \Y- \X b)\|_{2}^{2},
\end{equation}
where $\Pi_{\A} \in \mathbb{R}^{n \times n}$ is the matrix that projects on the column space of $\A$, i.e., if $  \A^{\intercal} \A$ is invertible, then $\Pi_{\A} := \A ( \A^{\intercal} \A)^{-1} \A^{\intercal}$. 
Martin Emil Jakobsen realized that the family of finite sample estimators of anchor regression coincides with what is known as $k$-class estimators. These estimators have been suggested to improve IV-type estimation of structural parameters \citep{theil1958economic,nagar1959bias}.
 In the high-dimensional case where $d > n$, an $\ell_1$-penalty can be added to encourage sparsity. Computation of $\hat b^\gamma$ is
simple as it can be obtained by running a least-squares regression of $
\tilde \Y := (\mathrm{Id} + (\sqrt{\gamma}-1) \Pi_\A) \Y$ on $\tilde \X := (\mathrm{Id} + (\sqrt{\gamma}-1) \Pi_\A) \X$. More details
on finite-sample \emph{anchor regression} can be found in Section~\ref{sec:finite-sample}. 

For $\gamma = 1$ we obtain the least squares solution, while for $\gamma > 1$
  the \emph{anchor regression} concept enforces that the projection of
  the residuals onto the linear space spanned by $A$ is small (``near
  orthogonality''); the latter is related to the framework of instrumental
  variables regression.
We will prove that the penalty term
 corresponds to the maximal change in
expected loss
under certain shift
interventions.
In particular, we show that the solutions on the regularization path are
optimizing a worst case risk under shift-interventions up to a given
strength. In addition, we show that if  \emph{anchor regression}  and ordinary least squares provide the same answer, the coefficients have a causal interpretation and are stable under certain distributional changes. More specifically, in this case the \emph{anchor regression} coefficients are equal to OLS coefficients under certain perturbed distributions.

Under instrumental variables assumptions \citep{didelez2010assumptions}, $b^{\infty} = b_{\text{causal}}$, i.e.\ one endpoint of anchor regression corresponds to the solution of equation~\eqref{eq:33b}. Our framework substantially relaxes the assumptions from the
instrumental variables (IV) setting:
in particular, we allow that the
  exogeneous anchor variables $A$ are invalid instruments, as they are
  allowed now to directly influence (i.e., being direct causes of) $Y$ or some
  hidden confounders~$H$. The price to be paid for such cases is that the causal
  parameters are not identifiable any more. However, one can still exploit
  some invariance properties and obtain robust predictions in the sense of
  distributional robustness over a class ${\cal F}$ as introduced
  before. In addition, under the assumptions of instrumental variables, one
  can identify the causal parameters as the
  procedure naturally interpolates between the solution to
  ordinary least squares and two-stage least squares. One can also abandon
  causal and structural equation models and prove that the proposed anchor
  regression procedure minimizes quantiles of a conditional mean squared
  error.

The main benefits of the proposed \emph{anchor regression} concept are
robust predictions and replicability of variable selection on test data
sets when the training data set can be grouped according to some exogeneous
categorical variable (the ``anchor'') such as
different circumstances, time-spans, experiments or experimental batches,
or when certain numerical exogeneous variables are only available on the
training, but not on the test data set. The anchor variable can either be used to encode heterogeneity ``within'' a data set or heterogeneity  ``between'' data sets. More specifically, within one data set, each level of the anchor variable encodes a homogeneous group of observations of $(X,Y)$. Alternatively, the anchor variable can be an indicator of data sets, where each data set is an homogeneous set of observations of $(X,Y)$. In principle, it is possible to develop the theory for the case where the anchor is deterministic. However, for simplicity of exposition in this paper we will model the anchor variable as random.

Our \emph{anchor regression} framework allows to quantitatively relate causality,
invariance, robustness and replicability, under weaker assumptions than what is necessarily
required to infer causal effects. Our work seems to be the first attempt to
achieve this, with a practical procedure which is easy to compute and use
in practice.

\subsection{Related work} \label{sec:relatedwork}

The considered perturbations from the class ${\cal F}$ are modeled
by interventions in an underlying structural equation model
\citep{Pearl2009}. Furthermore, as the proposed procedure interpolates
between the solution to ordinary least squares and the instrumental
variables (two-stage least squares) approach, there are obvious connections
to the IV literature, see e.g.,
\citep{Wright1928,bowden1990instrumental,didelez2010assumptions}. 
$K$-class estimators have the same algebraic form as anchor regression. 
The former are used to estimate structural parameters 
and often possess 
improved
statistical properties
compared to two-stage least squares, 
for example
\citep{theil1958economic,nagar1959bias}. In
  \citet{leamer1978least} and \citet{klepper1984consistent} the authors
  show how backwards regressions can be used to bound the regression
  coefficients for errors-in-variables models. It is similar to our work in
  the sense that the considered model class forms a convex set, a
  structure  which can be explored by modified linear
  regressions.

As mentioned above, predictive invariance in causal models has been exploited in \citet{peters2016causal} for the purpose of learning direct causal effects.
However in this work, the main goal is not to learn causal parameters, but
to obtain predictive stability under perturbations.  The goals of achieving
robustness and learning causal parameters can be
different, as shown by the example discussed in Section~\ref{sec:example}.
In a different line of work, \citet{pearl2014external} have
  developed a formal language to treat the problem of generalizability of
  causal effects across environments or populations, assuming that the causal structure is known.

There exists a plethora of work on transfer learning in the machine
learning literature, which focuses on knowledge transfer across different
domains of the data \citep{pan2010survey}.
Furthermore, there is work on distributional robustness, which explores bounded
  distributional perturbations, e.g., 
in a Wasserstein ball \citep{sinha2017certifiable} or under noise scaling \citep{heinze2017guarding}.
In \citet{rojas2015causal} and \citet{magliacane2017causal}, the authors
propose to use the best predictive model under all invariant models.
In general, these methods
do not allow for
interventions on the target variable Y
and concentrate on strong perturbations.
Unlike pre-specifying the class ${\cal F}$, we aim to learn it from
   the training data: it has then the interpretation of an estimated class
   ${\cal F}$ which is generated from a structural equation model.
\citet{Pfister2018}
show for ODE based models that
by trading off predictability and invariance
under different experimental conditions in a
similar way as \emph{anchor regression},
one may still learn models that generalize better to unseen experiments. \citet{yu2019three} expand traditional statistical uncertainty considerations by adding new notions of stability to improve reliability and reproducibility of knowledge extraction from data.

In \citet{entner2013data} the authors derive two rules that are
  sound and complete for inferring whether a given variable has a causal
  effect or not. The first rule uses (conditional) instruments to deduce
  the presence of a causal effect. While the goal of their
  work is different from the main intention of anchor regression, the first rule
  is  similar to the condition that two version of anchor
  regression agree, as explained further below. 

Furthermore, from a rather different viewpoint, it is known that
many techniques for penalized regression
can be formulated as a solution
to~\eqref{eq:optimrob}, too.
To see this, consider some measurement error $\xi$ in $X$, i.e., that $(X+\xi,Y)$ under $\Ptrain$ has the same distribution as $(X,Y)$ under $\mathbb{P}_\text{test}$. If we assume further that the measurement errors $\xi_{k}$ are centered,
jointly independent and independent of $X$ and $Y$ under $\Ptrain$,
we can write
\begin{equation*}
\mathbb{E}_{\text{test}}[(Y-X^\intercal b)^{2}]
= \Etrain[(Y-(X+\xi)^{\intercal} b)^2]
= \Etrain[(Y-X^\intercal b)^{2}] + \sum_{k=1}^{d} \Etrain[\xi_{k}^{2}] b_{k}^{2}.
\end{equation*}
If $\mathcal{F}$
contains all such test distributions with measurement errors up to strength $\mathbb{E}[\xi_{k}^{2}] \le \gamma$, the optimization~\eqref{eq:optimrob} becomes
\begin{equation*}
\min_b \max_{F \in \mathcal{F}} \mathbb{E}_F[(Y-X^\intercal b)^{2}]
=  \min_b \Etrain[(Y-X^\intercal b)^{2}] + \gamma \sum_{k} b_{k}^{2}.
\end{equation*}
In words, under certain types of measurement errors, a (weighted) ridge
penalty is optimal for prediction under perturbations. This is well
  known in the measurement errors literature, see for example \citet{fuller2009measurement}. A similar result holds for the Lasso  \citep{xu2009robust}.

\section{Population anchor regression}\label{sec:pop_anchor}
We now discuss properties of the population version of the proposed estimator~\eqref{eq:anchor}.
The overall goal is to predict the target variable $Y \in \mathbb{R}$ with the observed covariate vector $X \in \mathbb{R}^{d}$. The covariates $X$ are potentially endogeneous, $A \in \mathbb{R}^{q}$ is a so-called anchor variable which is exogenous and $H \in \mathbb{R}^{ r}$ is a vector of unobserved, or ``hidden'', random variables.  In the case of categorical anchors, dummy encoding can be used to encode the categorical values with $A \in  \mathbb{R}^{q}$.

To understand \emph{anchor regression} and its properties, it is instructive to
recognize the difference to the following well-known estimation concepts:
\begin{align}\label{eq:28}
\begin{split}
  b_{\text{PA}} &:= \argmin_{b} \Etrain[((\mathrm{Id}- \mathrm{P}_{A})(Y-X^\intercal b))^{2}] = \argmin_{b} \Etrain[((Y - \mathrm{P}_{A} Y) - (X - \mathrm{P}_{A} X)^{\intercal} b)^{2}]\\
  b_{\text{OLS}} &:= \argmin_{b} \Etrain[(Y-X^\intercal b)^{2}] \\
b_{\text{IV}} &:= \argmin_{b}  \Etrain[ ( \mathrm{P}_{A}(Y - X^\intercal b))^{2}]\\
    b^{\gamma} &:= \argmin_{b}  \Etrain[((\mathrm{Id}- \mathrm{P}_{A})(Y-X^\intercal b))^{2}] + \gamma  \Etrain[( \mathrm{P}_{A}(Y-X^\intercal b))^{2}].
\end{split}
\end{align}
Here, PA stands for ``partialling out'', also sometimes called
  ``adjusting for'', and
refers to linearly regressing out A from X and Y and considering
residuals.
The abbreviation IV refers to the two-stage-least-squares
estimation principle in
instrumental variable settings.

Due to the decomposition $\Etrain[(Y-X^\intercal b)^{2}] = \Etrain[(P_A(Y-X^\intercal b))^{2}] + \Etrain[((\mathrm{Id} - P_A)(Y - X^\intercal b))^{2}]$,
  \emph{anchor regression} coincides with ordinary
least squares for $\gamma=1$. For $\gamma = 0$, \emph{anchor regression}
coincides with $b_{\text{PA}}$ and for $\gamma \rightarrow \infty$ it
converges to $b_{\text{IV}}$, that is:
\begin{align}
  b^{0} &= b_{\text{PA}} \nonumber\\
  b^{1} &= b_{\text{OLS}} \nonumber \\
b^{\rightarrow \infty} := \lim_{\gamma \rightarrow \infty} b^{\gamma} &= b_{\text{IV}}. \label{eq:btoinfty}
\end{align}
The latter equation holds if $b_{\text{IV}}$ is uniquely defined.
Hence, \emph{anchor regression} interpolates
between $b_{\text{PA}}$ and $b_{\text{OLS}}$ for $0 \le \gamma \le 1$ and
between $b_{\text{OLS}}$ and  $b_{\text{IV}}$ for $1 \le \gamma \le
\infty$.

  Generally speaking, with \emph{anchor regression}, we aim to learn a prediction
  mechanism that is reliable across $A$ such as specific time
  periods, circumstances,
  locations or experimental batches observed in the training data set, and
  has some robustness guarantees regarding distributional shifts of
    observed and potentially also hidden variables.  The structure of $A$ crucially
    determines the robustness which we aim to achieve. For example, if we desire
  to achieve robustness across locations, then $A$ should be chosen as a
  variable that encodes location in the training data set. If the desired
  robustness is with respect to experimental batches, then $A$ should be
  chosen as a variable that describes different batches in
  the training data set. 

While our estimator is defined under general conditions,
most of our theoretical results focus on a model class that we
introduce next.

\subsection{A linear structural causal model}\label{sec:setting-notation}
We assume that the data are generated from
a linear structural equation model (SEM),
also called
a structural causal model (SCM),
\citep{Bollen1989,Pearl2009}.
Let the distribution of $(X,Y,H,A)$ under $\Ptrain$ be a solution of the SEM
\begin{equation}\label{eq:32}
\begin{pmatrix}X \\ Y \\ H \end{pmatrix} = \mathbf{B} \cdot \begin{pmatrix}X \\ Y \\ H \end{pmatrix} + \varepsilon + \mathbf{M} A,
\end{equation}
where
$\mathbf{M} \in \mathbb{R}^{(d+1+r) \times q}$ and $\mathbf{B} \in \mathbb{R}^{(d+1+r) \times (d+1+r)}$ are unknown constant matrices and the anchors $A \in \mathbb{R}^q$, the hidden variables $H \in \mathbb{R}^r$, and the noise $\varepsilon \in \mathbb{R}^{d+1+r}$ are random vectors. We will call $\mathbf{M}$ the shift matrix.
The random vectors $A$ and $\varepsilon$ are assumed to be independent. Furthermore, we assume that under $\Ptrain$, $X$ and $Y$ are centered to mean zero, that $\varepsilon$ and $A$ have finite second moments and that the components of $\varepsilon$ are independent of each other.
Equation \eqref{eq:32} is potentially cyclic
and a priori
there may exist several or no distributions that satisfy this equation. In the
following, we assume that $\mathrm{Id}-\mathbf{B}$ is invertible. This
guarantees that the distribution of $(X,Y,H,A)$ under $\Ptrain$ is well-defined in terms of
$\mathbf{B}$, $\varepsilon$, $\mathbf{M}$ and $A$ as equation~\eqref{eq:32} has
only one solution (equilibrium) satisfying
\begin{equation*}
\begin{pmatrix}X \\ Y \\ H \end{pmatrix} =  (\mathrm{Id}- \mathbf{B})^{-1} ( \varepsilon + \mathbf{M} A ).
\end{equation*}
More details on the interpretation in the cyclic case can be found in the Appendix, Section~\ref{sec:interpr-model-class}. The model induces a directed graph $G$, with the edges given by the following construction: For every $\mathbf{M}_{k,l} \neq 0$, a directed edge is drawn from $A_{l}$ to the $k$-th variable in the $(d+1+r)$-dimensional vector $(X,Y,H)$. Analogously, for every $\mathbf{B}_{k,l} \neq 0$, a directed edge is drawn from the $l$-th variable in $(X,Y,H)$ to the $k$-th variable in $(X,Y,H)$. The (vector-valued) variable $A$ is called anchor since it corresponds to a source node in the directed graph, that is, there are no incoming edges
into $A$. We allow the graph $G$ to be cyclic. Note that the matrix $\mathrm{Id}-\mathbf{B}$ is always invertible if the graph $G$ is acyclic.
An exemplary graph $G$ that lies in our model class is given below. We also allow for self-cycles (for example an arrow from $Y$ to $Y$), which are not depicted in the example.
\noindent
\begin{center}
\begin{tikzpicture}[->,>=latex,shorten >=1pt,auto,node distance=1.2cm,
                    thick]
  \tikzstyle{every state}=[draw=black,text=black, inner sep=0.4pt, minimum size=17pt]

\node[state] (Y) {$Y$};
  \node[state] (H) [ left of=Y] {$H$};
  \node[state] (X) [below of=H] {$X$};
  \node[state] (A) [above left of = H] {$A$};
\path  (X)  edge  (Y);
\draw (Y) edge [bend left] (X);
\draw  (H)   edge  [bend left]  (Y);
\draw (Y) edge [bend left] (H) ;
\draw  (H)  edge [bend left]   (X);
\draw  (X)  edge [bend left]   (H);
\draw (A) edge [bend right] (X);
\draw (A) edge (H);
\draw (A) edge [bend left] (Y);
\end{tikzpicture}
\end{center}

Note that we do not assume $A$ to be an instrument \citep{didelez2010assumptions}; we explicitly allow that $A$ directly affects $H$ and/or $Y$. This has important consequences: predictive guarantees of \emph{anchor regression} do not exclusively apply to interventions on $X$ but potentially also cover interventions on $Y$ and $H$, depending on the data
generating mechanism. More exemplary graphs and a potential motivation can be found in the following example.

\begin{example}[Three examples of graphs $G$ which are in our
  model class]
  Consider a setting with one-dimensional variables $A$, $X$ and $H$. For example, $X$ could be the activity of a certain gene, $Y$ the activity of another gene and $H$ the activity of a third, unobserved gene that regulates the activity of both $X$ and $Y$. $A \in \{-1,1\}$ could be an indicator variable of data collected from several experimental batches. The distribution of $(X,Y,H)$ may change between the different batches $A \in \{-1,1\}$. The change in distribution can be "caused" through a change in the activity of gene $X$ (graph (i)), through a change in the activity of gene $Y$ (graph (ii)) or a change in the activity of gene $H$ (graph (iii)).
Our model class contains many more graphs $G$ than these three.
   Between the variables $(X,Y,H)$ there are up to $3 \cdot 2 = 6$ directed arrows that may be in the graph (or not) and there are up to 3 arrows from $A$ to $(X,Y,H)$ that may be in the graph (or not), leading to a total of $2^3 \cdot 2^6 = 512$ directed graphs that lie in our model class for one-dimensional $A$, $X$ and $H$.
\noindent
\begin{multicols}{3}
\begin{center}
\begin{tikzpicture}[->,>=latex,shorten >=1pt,auto,node distance=1.2cm,
                    thick]
  \tikzstyle{every state}=[draw=black,text=black, inner sep=0.4pt, minimum size=17pt]

\node[state] (Y) {$Y$};
  \node[state] (H) [ left of=Y] {$H$};
  \node[state] (X) [below of=H] {$X$};
  \node[state] (A) [above left of = H] {$A$};
\path  (X)  edge  (Y);
\draw  (H)   edge    (Y);
\draw  (H)  edge   (X);
\draw (A) edge  (X);
\end{tikzpicture}
(i)
\end{center}
\columnbreak
\begin{center}
\begin{tikzpicture}[->,>=latex,shorten >=1pt,auto,node distance=1.2cm,
                    thick]
  \tikzstyle{every state}=[draw=black,text=black, inner sep=0.4pt, minimum size=17pt]
\node[state] (Y) {$Y$};
  \node[state] (H) [ left of=Y] {$H$};
  \node[state] (X) [below of=H] {$X$};
  \node[state] (A) [above left of = H] {$A$};
\path  (Y)  edge   (X);
\draw  (H)   edge    (Y);
\draw  (H)  edge   (X);
\draw (A) edge  (Y);
\end{tikzpicture}
(ii)
\end{center}
\columnbreak
\begin{center}
\begin{tikzpicture}[->,>=latex,shorten >=1pt,auto,node distance=1.2cm,
                    thick]
  \tikzstyle{every state}=[draw=black,text=black, inner sep=0.4pt, minimum size=17pt]
\node[state] (Y) {$Y$};
  \node[state] (H) [ left of=Y] {$H$};
  \node[state] (X) [below of=H] {$X$};
  \node[state] (A) [above left of = H] {$A$};
\draw  (H)   edge   (Y);
\draw  (H)  edge  (X);
\draw (A) edge   (H);
\draw (X) edge (Y);
\end{tikzpicture}
(iii)
\end{center}
\end{multicols}
\end{example}

We aim to investigate the distribution of $(X,Y,H)$ under perturbations. In
the literature, so-called {point, hard or do-interventions are often employed for causal
modelling \citep{Pearl2009}.

Here, we aim to model the perturbed distributions as small, medium and
potentially large perturbations of the training
distribution. Interventions that act on the system in a linear fashion
are often natural as well as simple to study.
   Thus, we will consider so-called  shift interventions
  on
 $(X,Y,H)$, which simply shift a variable by a value,
   see~\eqref{eq:def-perturbed-distr} below. This change
 subsequently propagates through the system. Shift interventions
 can be
 seen as
 a special case of  a ``parametric'', ``imperfect'' or
 ``dependent'' intervention or a ``mechanism change''
 \citep{Eberhardt2007,korb2004,Tian2001}.
In particular, when $A$ represents a ``dummy encoding'' of different batches, for example, we regard this as a flexible class of interventions.

 The new interventional (perturbed) distribution is denoted by $\mathbb{P}_v$. The distribution of the variables $(X,Y,H)$ under $\mathbb{P}_v$ is defined as the solution of
\begin{equation}\label{eq:def-perturbed-distr}
\begin{pmatrix}X \\ Y \\ H \end{pmatrix} = \mathbf{B} \cdot \begin{pmatrix} X \\ Y \\ H \end{pmatrix} + \varepsilon + v, \end{equation}
where $v \in \mathbb{R}^{d+1+q}$ is a random or deterministic vector independent of $\varepsilon$, but not necessarily independent of $A$.
The distribution of $\varepsilon$ is assumed to be the same under $\Ptrain$ and under $\mathbb{P}_v$.
We call $v$ a shift. 
We potentially allow for interventions on $X$, $Y$ and $H$, i.e., we allow $v_{k} \not \equiv 0$ for all $k \in \{0,\ldots,d+q+1\}$.  The main intuition behind  shift interventions is that an external force shifts a certain variable by some amount.  This shift propagates through the SEM, changing the distribution of some of the other variables.

\subsection{Anchor regression: an example}\label{sec:example}

First, we give an example of a linear SEM and the effect of a shift
intervention. Then we will discuss the performance of ordinary least
squares (OLS), the instrumental variables approach (IV) and partialling out
$A$ (PA); and motivate \emph{anchor regression}.
We compare the estimators by training them on the training distribution $\Ptrain$ and evaluating their performance on a perturbed distribution $\mathbb{P}_v$.

 Consider a classical setting for the IV approach, where $A$ is an instrument, $X$ is endogenous and $H$ is a hidden confounder.
The structural equations of the unshifted distribution are defined on the left hand side of
Example~\ref{ex:2}. The equations under a shift $v=(1.8,0,0)^{\intercal}$
are depicted on the right-hand side.
The structural equations
are assumed to be the same, but the variable $X$ is shifted by
$+1.8$ and the change propagates through the SEM.
\begin{example}\label{ex:2}
The structural equations for $\Ptrain$ can be found on the left. On the right, structural equations for $\mathbb{P}_{v}$ with $v=(1.8,0,0)$.
\begin{align*}
  &A \sim \text{Rademacher} \qquad \qquad  &&  \text{ } \\
&\varepsilon_{H},\varepsilon_{X},\varepsilon_{Y}  \stackrel{\text{indep.}}{\sim} \mathcal{N}(0,1)  \qquad \qquad \qquad &&\varepsilon_{H},\varepsilon_{X},\varepsilon_{Y} \stackrel{\text{indep.}}{\sim} \mathcal{N}(0,1) \\
&H \leftarrow \varepsilon_{H} \qquad &&H \leftarrow \varepsilon_{H}\\
&X  \leftarrow A+H+\varepsilon_{X}   &&X  \leftarrow 1.8+H+\varepsilon_{X}\\
&Y  \leftarrow X+2H+\varepsilon_{Y}  &&Y  \leftarrow X+2H+\varepsilon_{Y}
\end{align*}

\end{example}
\noindent There are two extreme cases for dealing with the variable
$A$. The variation explained by $A$ can be removed by partialling out $A$, sometimes also called residualizing with respect to $A$ or adjusting for the effect from $A$. If we think about $A$ as a subpopulation indicator variable, doing so creates a more homogeneous
population and thus can correct for population stratification.
The other extreme case is to remove all variation except for the variation explained by $A$. Under instrumental variables assumptions, doing so removes possible confounding variables and allows estimation of causal effects.  For comparison, we thus consider partialling out the anchor variable (PA), ordinary least squares (OLS) and the instrumental variables approach (IV) in the form of two-stage least squares. All three are computed on $\Ptrain$, while their performance will be compared on the perturbed distribution $\mathbb{P}_v$.

If we regress $Y$ on $X$, we obtain regression coefficient
$b_{\text{OLS}} \approx 1.66$. The IV approach yields $b_{\text{IV}}=1$ and partialling out $A$ leads to $b_{\text{PA}}= 2$. 
For each coefficient $b^{\gamma}, \gamma \in [0,\infty)$ we compute the MSE on the shifted distribution $\mathbb{E}_{v}[(Y-X^\intercal b^{\gamma})^{2}]$. The results are depicted in Figure~\ref{fig:1}. None of the three methods IV, PA and OLS yield the lowest MSE. In fact, large
sections of the path of $b^{\gamma}, \gamma \in (1,\infty)$,
outperform IV, PA and OLS. In that sense, even if IV regression
identifies the true causal parameter, \emph{anchor regression} can exhibit
better prediction properties. This is not specific to the choice $v =
(1.8,0,0)^{\intercal}$ but holds for other perturbations $v$ as well. This
will be discussed further in Section~\ref{sec:optim-pred-perf}; it turns
out that we can give optimality guarantees under certain interventions $v$,
which depend on the underlying structural equation model. Furthermore,
\emph{anchor regression} will turn out to be useful even for cases where IV
regression cannot identify the causal parameter, i.e., when the exogenous
variable $A$ is a direct cause of $Y$ or the hidden confounder $H$. In the
next section we discuss why all three approaches OLS, PA and IV
have suboptimal performance in this example on the test data.

\begin{figure}
\begin{center}
\includegraphics[scale=0.75]{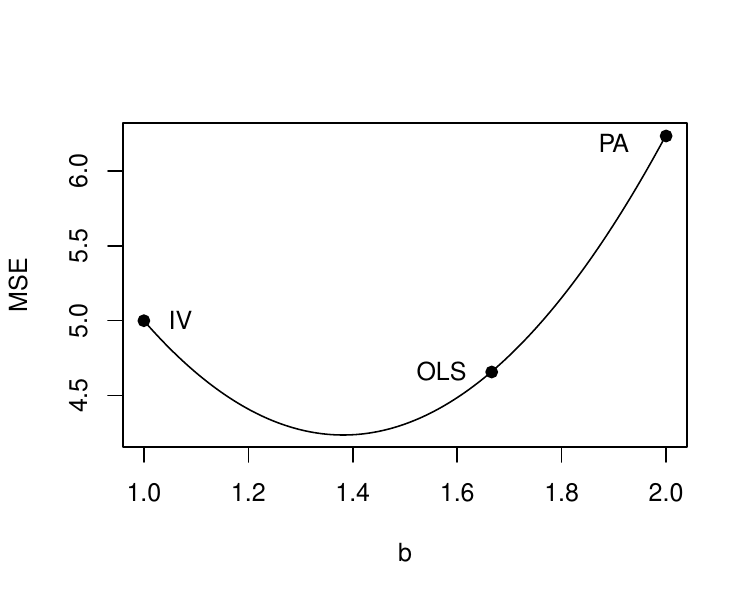}
\caption{IV, OLS, PA and \emph{anchor regression} coefficients are computed on unshifted data.  The plot shows the MSE  $\mathbb{E}_{v}[(Y-X^\intercal b)^{2}]$
on shifted variables for varying coefficients $b=b^{\gamma}$, $\gamma \in (0,\infty)$. The SEM for both shifted and unshifted data is given in Example~\ref{ex:2}. The optimal coefficient lies between IV and OLS.}\label{fig:1}
\end{center}
\end{figure}

\subsection{Trading off performance on perturbed and unperturbed data}\label{sec:trad-perf-pert}
Why did the three approaches OLS, IV and PA deliver suboptimal performance in the preceding example? Recall that the overall goal is to find $b$ such that predictive performance is not only good on the training distribution but also under perturbed distributions. In  this sense, we want to avoid ``overfitting'' to the particular distribution of the training data set. This can be investigated by considering the minimax loss
\begin{equation}\label{eq:6}
 \argmin_{b}  \sup_{v \in C} \mathbb{E}_{v}[(Y - X^\intercal b)^{2}] \text{ for a suitable set } C \subseteq \mathbb{R}^{d+q+1}.
\end{equation}
The crucial point here is to choose a ``reasonable'' set of perturbations
$C$. If $C$ is small, then the solution of equation~\eqref{eq:6} will
usually not deliver good predictive performance under perturbations. If $C$
is too large, then the solution may be unnecessarily conservative. Now let us return to the example of Section~\ref{sec:example}. It can be shown that $b_{\text{PA}}$ solves the minimax problem for $C_{\text{PA}} = \{0\}$, i.e.,
\begin{equation*}
  b_{\text{PA}} = \argmin_{b} \sup_{v \in C_{\text{PA}}} \mathbb{E}_{v}[(Y -X^\intercal b)^{2}].
\end{equation*}
Hence it is not surprising that $b_{\text{PA}}$ showed suboptimal performance under the intervention $v = (1.8,0,0)^{\intercal}$. Ordinary least squares solves the minimax problem for
$C_{\text{OLS}} = \{ v \in \mathbb{R}^{3} : v_{2} = v_{3} = 0 \text{ and } v_{1}^{2} \le \Etrain[A^{2}] \}$, i.e., \begin{equation*}
  b_{\text{OLS}} = \argmin_{b} \sup_{v \in C_{\text{OLS}}} \mathbb{E}_{v}[(Y -X^\intercal b)^{2}].
\end{equation*}
Loosely speaking, ordinary least squares optimizes the predictive performance under shifts in $X$ up to strength $v_1^2 \le \Etrain[A^2]$.
On the other hand, it can be shown that in the given example IV regression solves the minimax problem for $C_{\text{IV}} = \{ v \in \mathbb{R}^3 : v_2 = v_3 = 0\}$:
\begin{equation*}
  b_{\text{IV}} = \argmin_{b} \sup_{v \in C_{\text{IV}}} \mathbb{E}_{v}[(Y -X^\intercal b)^{2}].
\end{equation*}
In words, the causal parameter (IV) solves the minimax problem if the
supremum is taken over arbitrary strong shifts in $X$. Such shifts are not always realistic, hence from a prediction perspective the causal parameter can be unnecessarily conservative.
The vector $b_\text{PA}$ is optimized for prediction under zero perturbations $C_{\text{PA}} = \{ 0\}$ and does not exhibit stable predictive performance under shifts in $X$. As discussed earlier, ordinary least squares is somewhat in between.

 The tradeoff is depicted in Figure~\ref{fig:2}: predictive performance of
 the four methods (PA, IV, OLS and \emph{anchor regression} with $\gamma =
 5$) is shown under varying intervention strength. While the causal
 parameter (IV) is the most stable, for small and medium-sized shifts other
 methods are preferable. On the other hand, OLS and PA show good
 performance only under small perturbations, with rapidly growing MSE for
 larger perturbations. Let $C^{5} =  \{ v \in \mathbb{R}^{3}: v_{2} = v_{3} = 0 \text{ and } v_{1}^{2} \le 5 \}$. For the example it can be shown
 (cf. Theorem~\ref{thm:anchor-regression}) that \emph{anchor regression}
 for $\gamma=5$ solves the minimax problem
\begin{equation*}
  \argmin_{b} \sup_{v \in C^{5}} \mathbb{E}_{v}[(Y - X^\intercal b)^{2}].
\end{equation*}
This gives us a convenient interpretation of $b^{\gamma}$ for $\gamma=5$: it minimizes the risk under shift interventions on $X$ up to strength $|v_{1}| \le \sqrt{5}$.
The next section discusses the optimality of \emph{anchor regression} under perturbations up to a given strength beyond the specific SEM of Example~\ref{ex:2}.
\begin{figure}
\begin{center}
\includegraphics[scale=0.5]{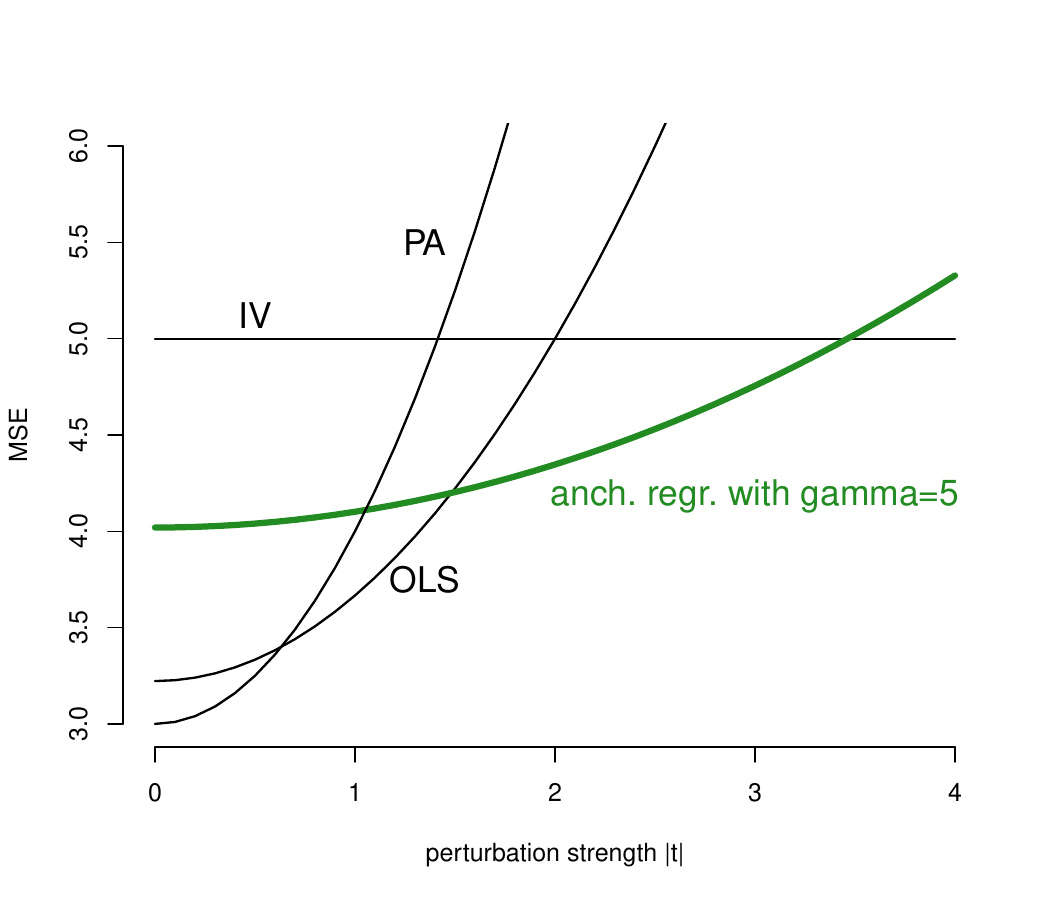}
\caption{Predictive performance of the direct causal effect (IV), AP, OLS and \emph{anchor regression} with $\gamma=5$ under varying interventions on $X$. The SEM is taken from Example~\ref{ex:2}. The MSE $\mathbb{E}_{v}[(Y - X^\intercal b)^{2}]$ is depicted under perturbation strength $v = (t,0,0)^{\intercal}$. The causal parameter (IV) exhibits constant predictive performance under arbitrary perturbation strength $|t|$, but predictive performance under small perturbations is subpar. PA and OLS have very good performance under small interventions but performance suffers under larger interventions. \emph{Anchor regression} with $\gamma=5$ trades performance on unperturbed data ($t=0$) for more stability, i.e., better performance on medium-sized interventions. In particular, it is minimax optimal under shifts $C^{5} =  \{(t,0,0)^{\intercal} : |t| \le \sqrt{5} \approx 2.24 \}$, cf. Theorem~\ref{thm:anchor-regression}. For large shifts $|t|$ the IV method eventually outperforms \emph{anchor regression}. Note that all shown solutions are anchor solutions, under respective penalties $\gamma=0$ (PA), $\gamma=1$ (OLS), $\gamma=5$ and $\gamma=\infty$ (IV).
}
\label{fig:2}
\end{center}
\end{figure}

\subsection{Optimal predictive performance under perturbations}\label{sec:optim-pred-perf}
In this section we will discuss a first main result, namely a
  fundamental connection between the population version of
  \emph{anchor regression} and the worst case
  risk over a class of shift interventions.
In Section~\ref{sec:example} we saw that neither PA, OLS nor IV are optimal for prediction under the given intervention strength.
The following theorem gives guarantees for the prediction error of
\emph{anchor regression} under shift interventions up to a given
perturbation strength.  
Recall that $P_A$ denote the $L_2$-projection on the linear span from the components of $A$.
Under the assumptions of Section~\ref{sec:setting-notation}, we have  $\mathrm{P}_{A}(X)  = \Etrain[ X |A]$ and $\mathrm{P}_{A}(Y) = \Etrain[Y|A]$.
Let $X$ and $Y$ have mean zero. \newpage
\begin{theorem}\label{thm:anchor-regression}
 Let the assumptions of Section~\ref{sec:setting-notation} hold. For any $b \in \mathbb{R}^{d}$
we have
\begin{equation}\label{eq:27}
 \Etrain[((\mathrm{Id}- \mathrm{P}_{A})(Y-X^\intercal b))^{2}] + \gamma  \Etrain[( \mathrm{P}_{A}(Y-X^\intercal b))^{2}]  = \sup_{v \in C^{\gamma}} \mathbb{E}_{v}[ (Y-X^\intercal b)^{2}  ],
\end{equation}
where
\begin{equation*}
  C^{\gamma} :=\{ v \in  \mathbb{R}^{d+q+1} \text{ such that } v v^{\intercal} \preceq \gamma \mathbf{M} \Etrain[A A^{\intercal}] \mathbf{M}^{\intercal}\}.
\end{equation*}
and $\mathbf{M}$ is the shift matrix, cf.\ equation~\eqref{eq:32}. A
formulation of the result where $v$ is allowed to be random can be found in
the Appendix, Section~\ref{sec:theor-refthm:-regr}. \end{theorem}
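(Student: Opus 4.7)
My plan is to reduce both sides of \eqref{eq:27} to the same quadratic form in a single ``reduced-form'' coefficient vector, using the fact that the SEM is linear and that $\varepsilon \perp A$ with $\varepsilon$ centered.

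First I would rewrite the residual in reduced form. Define $\tilde b \in \mathbb{R}^{d+1+r}$ by placing $-b$ on the $X$-coordinates, $1$ on the $Y$-coordinate and $0$ on the $H$-coordinates, so that $Y - X^\intercal b = \tilde b^\intercal (X,Y,H)^\intercal$. Using $(X,Y,H)^\intercal = (\mathrm{Id}-\mathbf{B})^{-1}(\varepsilon + \mathbf{M} A)$ under $\Ptrain$ and $(X,Y,H)^\intercal = (\mathrm{Id}-\mathbf{B})^{-1}(\varepsilon + v)$ under $\mathbb{P}_v$, set $c := (\mathrm{Id}-\mathbf{B})^{-\intercal}\tilde b$. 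Then
\begin{equation*}
Y - X^\intercal b \;=\; c^\intercal \varepsilon + c^\intercal \mathbf{M} A \quad (\text{under }\Ptrain), \qquad Y - X^\intercal b \;=\; c^\intercal \varepsilon + c^\intercal v \quad (\text{under }\mathbb{P}_v).
\end{equation*}

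Second, I would compute the LHS. Since $A \perp \varepsilon$ and $\mathbb{E}[\varepsilon]=0$, the term $c^\intercal \varepsilon$ is orthogonal to every component of $A$, while $c^\intercal \mathbf{M} A$ lies in the span of $A$. Hence $\mathrm{P}_A(Y-X^\intercal b) = c^\intercal \mathbf{M} A$ and $(\mathrm{Id}-\mathrm{P}_A)(Y - X^\intercal b) = c^\intercal \varepsilon$, and the LHS of \eqref{eq:27} equals
\begin{equation*}
c^\intercal \Etrain[\varepsilon \varepsilon^\intercal] c + \gamma\, c^\intercal \mathbf{M}\, \Etrain[A A^\intercal]\, \mathbf{M}^\intercal c.
\end{equation*}
For the RHS, since the law of $\varepsilon$ is unchanged under $\mathbb{P}_v$ and $v$ is deterministic, $\mathbb{E}_v[(Y-X^\intercal b)^2] = c^\intercal \Etrain[\varepsilon\varepsilon^\intercal] c + (c^\intercal v)^2$. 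So it remains to show
\begin{equation*}
\sup_{v \in C^\gamma} (c^\intercal v)^2 \;=\; \gamma\, c^\intercal \mathbf{M}\, \Etrain[A A^\intercal]\, \mathbf{M}^\intercal c.
\end{equation*}

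The main (and really the only) non-trivial step is this supremum, and it is a clean semidefinite-programming exercise. Write $\Sigma := \mathbf{M}\,\Etrain[AA^\intercal]\mathbf{M}^\intercal$. The constraint $vv^\intercal \preceq \gamma \Sigma$ is equivalent to $u^\intercal vv^\intercal u \le \gamma u^\intercal \Sigma u$ for every $u$; specializing to $u = c$ immediately gives $(c^\intercal v)^2 \le \gamma c^\intercal \Sigma c$, establishing the upper bound. For tightness, I would exhibit a maximizer of the form $v^* = \sqrt{\gamma}\, \Sigma c / \sqrt{c^\intercal \Sigma c}$ (taking a limit in case $c^\intercal \Sigma c = 0$, where both sides vanish), for which $(c^\intercal v^*)^2 = \gamma c^\intercal \Sigma c$. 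Feasibility $v^*(v^*)^\intercal \preceq \gamma \Sigma$ follows from the Cauchy--Schwarz inequality in the semi-inner product $\langle a,b\rangle_\Sigma = a^\intercal \Sigma b$: for every $u$,
\begin{equation*}
u^\intercal v^*(v^*)^\intercal u = \gamma\,\frac{(u^\intercal \Sigma c)^2}{c^\intercal \Sigma c} \;\le\; \gamma\, u^\intercal \Sigma u.
\end{equation*}

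Combining the two sides gives \eqref{eq:27}. The only place one has to be slightly careful is the degenerate case $c^\intercal \Sigma c = 0$, handled by continuity (or by noting that then $c^\intercal v = 0$ for all feasible $v$, so both sides are $0$); all other steps are linear algebra once the reduced-form identity for $Y - X^\intercal b$ is in hand.
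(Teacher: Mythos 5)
Your proposal is correct and follows essentially the same route as the paper's proof: both rewrite $Y-X^\intercal b$ in reduced form as $w^\intercal(\varepsilon+\mathbf{M}A)$ (your $c$ is the paper's $w$), identify the $\mathrm{P}_A$-decomposition with the $c^\intercal\mathbf{M}A$ versus $c^\intercal\varepsilon$ split, and reduce the claim to $\sup_{vv^\intercal\preceq\gamma\Sigma}(c^\intercal v)^2=\gamma c^\intercal\Sigma c$. The only difference is that you make the attainment step explicit (exhibiting $v^*=\sqrt{\gamma}\,\Sigma c/\sqrt{c^\intercal\Sigma c}$ and handling the degenerate case), whereas the paper asserts this supremum directly from the definition of $C^\gamma$.
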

Here, for two positive semidefinite matrices $A$ and $B$ we write $A\preceq B$ if and only if $B-A$ is positive semidefinite.
In particular, we have
$C^{\gamma} \subseteq \text{span}(\mathbf{M})$.
Readers familiar with the concept of interventions
may thus think about $\mathbb{P}_{v}$ as the distribution under a point intervention on $A$, where the condition $v \in C^{\gamma}$ restricts the set of interventions to a certain strength.

There are two important takeaways from this theorem:
First, the squared $L_2$-risk under certain worst-case shift interventions is equal to adding a
penalty to the squared $L_2$-risk.

Second, as population \emph{anchor
  regression} optimizes the penalized criterion (on the left-hand side of
equation~\eqref{eq:27}), \emph{anchor regression} minimizes the worst-case
MSE under shift interventions  up to a given strength in certain
directions, cf. equation~\eqref{eq:28}.
We have discussed in
Section~\ref{sec:trad-perf-pert} why it can be desirable to consider
interventions only up to a given strength. In the following we want to
briefly discuss the direction of the shift interventions in $C^{\gamma}$.
  To this end, note that
\begin{equation*}
 \text{span}(\mathbf{M}) = \lim_{\gamma \rightarrow \infty} C^{\gamma}.
\end{equation*}
Here, for ease of interpretation we made the assumption that $\Etrain[A A^{\intercal}]$ is positive definite.
We explicitly allow $A$ to have a direct effect on $X$, $Y$ or $H$. In other words, in the shift matrix $\mathbf{M}$, we allow $\mathbf{M}_{k \bullet} \not \equiv 0$ for some (or all)  $k \in \{1,\ldots,d+r+1\}$. Hence $C^{\gamma}$ potentially contains interventions that affect not only $X$ but also  $Y$ or $H$. We discuss this in more detail in Section~\ref{sec:three-examples} in the Appendix.

Generally speaking, we have introduced a penalty that encourages good
predictive performance under distributional shifts. Penalties of the form
$\gamma \| b \|_{2}^{2}$ or $ \gamma \| b \|_{1}$ are widely employed
for finite sample regression to prevent overfitting the data with
  estimated parameters. Here, we deal with a different
type of ``overfitting'' that may even affect the population version.
  For $\gamma=0$ the population estimator will
``overfit'' to the particular distribution $\Ptrain$, in the sense that it is not guaranteed to work well under shifted distributions $\mathbb{P}_{v}$.
For $\gamma >0$ we obtain predictive guarantees for both, shifted and
unshifted data. As $\gamma \rightarrow \infty$, population \emph{anchor
  regression} works
  increasingly well
  under strong interventions, at the
price of deteriorating MSE on unshifted or moderately shifted data. In the
finite sample case, additional regularization in form of an $\ell_{1}$-penalty can be advisable. This is discussed in Section~\ref{sec:high-dimens-estim}.

\subsection{Limitations of using direct causal effects for prediction}

In Section~\ref{sec:example} we saw that using causal effects for prediction is in general not recommended if the perturbation strength is relatively small. In this section, we show that a similar caveat holds for the directions of the perturbations. Using direct (or total) causal effects in settings with perturbations on $Y$ and $H$ can be ill-advised, even if the perturbation strength is arbitrarily strong. Using direct causal effects for prediction does not protect against arbitrary perturbations.

As an example, consider the following structural equation model and a shift in the distribution of the hidden confounder $H$. On the left, the structural equation for the unperturbed distribution $\Ptrain$ is defined. On the right, the data generating mechanism for the perturbed distribution $\mathbb{P}_{v}$ is given under a shift $v=(0,0,t)^{\intercal}$, $ t \in \mathbb{R}$. \newpage \begin{multicols}{2}
\noindent
\begin{align*}
\begin{split}
A &\sim \text{Rademacher} \\
\varepsilon_{H},\varepsilon_{X},\varepsilon_{Y} &\stackrel{\text{indep.}}{\sim} \mathcal{N}(0,1) \\
H &\leftarrow A+\varepsilon_{H} \\
X & \leftarrow H+\varepsilon_{X} \\
Y & \leftarrow { 1} \cdot X+2H+\varepsilon_{Y}
\end{split}
\end{align*}
\columnbreak
\begin{align}\label{eq:26}
\begin{split}
\text{ } \\ \varepsilon_{H},\varepsilon_{X},\varepsilon_{Y} &\stackrel{\text{indep.}}{\sim} \mathcal{N}(0,1) \\
H &\leftarrow  { t }  + \varepsilon_{H}  \\
X & \leftarrow H+\varepsilon_{X} \\
Y & \leftarrow { 1} \cdot X+2H+\varepsilon_{Y}
\end{split}
\end{align}
\end{multicols}
\noindent Assume that through some oracle (or previous experiments) we know
that the direct causal effect from $X$ to $Y$  \citep[][page
127]{Pearl2009} is $1$, that is it equals the coefficient for $X$ in the
  structural equation for $Y$.
\emph{Anchor regression} is trained on data from the SEM on the left; the
predictive performance of \emph{anchor regression} and the direct causal
effect are compared on the shifted distribution $\mathbb{E}_{v}[(Y -
X^\intercal b)^{2}]$. The results are shown in Figure~\ref{fig:causalbad}. The direct causal effect is uniformly outperformed by PA, OLS and \emph{anchor regression} with $\gamma =
5$. Roughly speaking, this is due to the fact that the direct causal effect
is geared towards prediction under interventions on $X$, as discussed in
Section~\ref{sec:trad-perf-pert}. Interventions on $H$ induce a very
different distributional shift. Comparing PA and \emph{anchor regression}
leads to a similar conclusion as in Figure~\ref{fig:2}. Under small
perturbations, PA and OLS are slightly better than \emph{anchor regression}. However,
\emph{anchor regression} exhibits a stable performance across a large
  range of perturbation strengths and outperforms the other methods for
  medium or strong perturbations.

\begin{figure}
\begin{center}
 \includegraphics[scale=0.75]{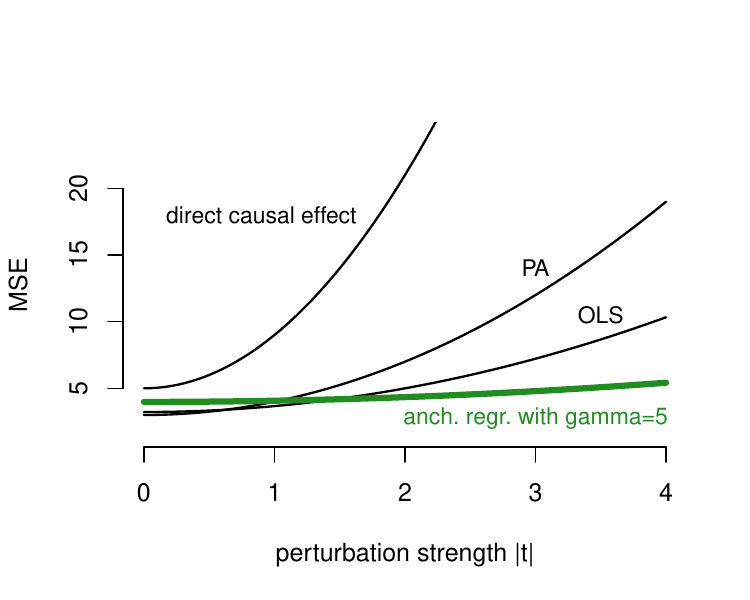}
\end{center}
\caption{Predictive performance of the direct causal effect, PA, OLS and \emph{anchor regression} under varying interventions on $H$. The MSE $\mathbb{E}_{v}[(Y-X^\intercal b)^{2}]$ is depicted under varying perturbations $v=(0,0,t)^{\intercal}$. The corresponding structural equation models are given in equation~\eqref{eq:26}. For small perturbations, PA and OLS perform  better than \emph{anchor regression}. The direct causal effect exhibits large MSE for all values of $t$. While the direct causal effect shows stable predictive performance under interventions on $X$ (as discussed in Section~\ref{sec:trad-perf-pert}), this is at the expense of predictive stability under interventions on $H$ or $Y$. The MSE of \emph{anchor regression} with $\gamma=5$ slowly grows in $|t|$.} \label{fig:causalbad}
\end{figure}

\subsection{Interpretation of anchor regression via quantiles}\label{sec:interpr-anch-regr-quant}
We now provide an interpetation of \emph{anchor regression} without using
structural equation models. For reasons of simplicity, we present the
result for continuous anchors. A similar result for discrete anchors can be
found in the Appendix, Section~\ref{sec:lemma-refl-anch}. For the result
of this section, the assumptions mentioned in
Section~\ref{sec:setting-notation} are not necessary, but instead we assume
multivariate Gaussianity of $(X,Y,A)$, see Lemma
\ref{lemma:interpr-anch-regr}. Define $Q(\alpha)$ as the $\alpha$-th
quantile of $\mathbb{E}[(Y-X^\intercal b)^{2} |A]$. Recall that with the
notation defined in Section~\ref{sec:our-contribution} if $(X,Y,A)$ is
multivariate Gaussian we have $(\mathrm{Id}-\mathrm{P}_{A})(Y-X^\intercal
b) = Y - X^\intercal b - \mathbb{E}[Y-X^\intercal b |A]$ and
$\mathrm{P}_{A}(Y-X^\intercal b) = \mathbb{E}[Y-X^\intercal b |A]$.  
\begin{lemma}\label{lemma:interpr-anch-regr}
  Assume that the variables $(X,Y,A)$ follow a centered multivariate normal distribution under $\mathbb{P}$. Then, for $0 \le \alpha \le 1$,
\begin{equation*}
  Q(\alpha) = \mathbb{E}[((\mathrm{Id}- \mathrm{P}_{A})(Y-X^\intercal b))^{2}] + \gamma  \mathbb{E}[( \mathrm{P}_{A}(Y-X^\intercal b))^{2}],
\end{equation*}
where $\gamma$
equals the $\alpha$-th quantile of a $\chi^{2}$-distributed random variable with one degree of freedom.
\end{lemma}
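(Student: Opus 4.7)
Write $R := Y - X^\intercal b$. Since $(X,Y,A)$ is jointly centered Gaussian, $R$ and $A$ are also jointly centered Gaussian. The plan is to use the Gaussian structure to compute $\mathbb{E}[R^2 \mid A]$ explicitly and then identify its $\alpha$-quantile with the right-hand side of the claimed identity.

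\textbf{Step 1: Gaussian decomposition.} For jointly Gaussian variables, the best linear predictor coincides with the conditional expectation, so $\mathbb{E}[R \mid A] = \mathrm{P}_A R$. Decompose $R = \mathrm{P}_A R + (\mathrm{Id} - \mathrm{P}_A) R$. By construction these two terms are orthogonal in $L^2$, and since they are jointly Gaussian, orthogonality implies independence. Hence $(\mathrm{Id} - \mathrm{P}_A) R$ is independent of $A$ and, in particular, of $\mathrm{P}_A R$.

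\textbf{Step 2: Conditional second moment.} Using independence,
\begin{equation*}
\mathbb{E}[R^2 \mid A] = (\mathrm{P}_A R)^2 + \mathbb{E}\bigl[((\mathrm{Id} - \mathrm{P}_A) R)^2 \bigm| A \bigr] = (\mathrm{P}_A R)^2 + \sigma^2,
\end{equation*}
where $\sigma^2 := \mathbb{E}[((\mathrm{Id} - \mathrm{P}_A) R)^2]$ is a deterministic constant (the conditional expectation of an $A$-independent random variable). Set $\tau^2 := \mathbb{E}[(\mathrm{P}_A R)^2]$. Since $\mathrm{P}_A R$ is a linear combination of Gaussian coordinates of $A$ and has mean zero, $\mathrm{P}_A R \sim \mathcal{N}(0, \tau^2)$, so $(\mathrm{P}_A R)^2 / \tau^2 \sim \chi^2_1$ (the case $\tau^2 = 0$ being trivial).

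\textbf{Step 3: Quantile identification.} The conditional MSE is therefore distributed as $\sigma^2 + \tau^2 Z$ with $Z \sim \chi^2_1$. Because $g(z) = \sigma^2 + \tau^2 z$ is a nondecreasing affine map, the $\alpha$-quantile transforms accordingly, giving
\begin{equation*}
Q(\alpha) = \sigma^2 + \tau^2 q_\alpha = \mathbb{E}[((\mathrm{Id} - \mathrm{P}_A) R)^2] + q_\alpha \, \mathbb{E}[(\mathrm{P}_A R)^2],
\end{equation*}
where $q_\alpha$ is the $\alpha$-quantile of a $\chi^2_1$ distribution. Taking $\gamma = q_\alpha$ matches the claimed formula.

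\textbf{Main obstacle.} The argument is short and the only subtle point is justifying that $(\mathrm{Id} - \mathrm{P}_A) R$ is independent of $A$ (not merely uncorrelated) and that the conditional variance is deterministic; both follow from joint Gaussianity of $(R, A)$, which in turn follows because $R$ is a linear functional of the jointly Gaussian vector $(X, Y, A)$. The degenerate case $\tau^2 = 0$ (i.e.\ $\mathrm{P}_A R = 0$ a.s.) should be noted: then $\mathbb{E}[R^2 \mid A]$ is a.s.\ constant equal to $\sigma^2$, and the identity holds trivially for every $\gamma$, in particular for $\gamma = q_\alpha$.
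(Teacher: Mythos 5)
Your proposal is correct and follows essentially the same route as the paper's proof: decompose $\mathbb{E}[(Y-X^\intercal b)^2\mid A]$ into the (deterministic, by Gaussianity) conditional variance plus the square of the Gaussian conditional mean, note that the latter squared is a scaled $\chi^2_1$, and read off the quantile. Your added remarks (independence rather than mere uncorrelatedness, monotone transformation of quantiles, the degenerate case $\tau^2=0$) just make explicit points the paper leaves implicit.
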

Note that the right-hand side of the equation in
Lemma~\ref{lemma:interpr-anch-regr} is the objective function of
\emph{anchor regression}. Thus, this shows that \emph{anchor regression} can be used to optimize quantiles of $\mathbb{E}[(Y-X^\intercal b)^{2}|A]$, for example  minimization of the $95 \%$-quantile of $\mathbb{E}[(Y-X^\intercal b)^{2}|A]$ is achieved by $b^{ \gamma}$ with $ \gamma = \chi_{1}^{2}(0.95)$. In spirit, this result is similar to Theorem~\ref{thm:anchor-regression}. The perturbed distributions $\mathbb{P}_{v}$ in Theorem~\ref{thm:anchor-regression} play a similar role as the conditional distributions $\mathbb{P}[\bullet|A=a]$ in Lemma~\ref{lemma:interpr-anch-regr}. For increasing $\gamma$, the predictions are increasingly reliable across distributions $\mathbb{P}[\bullet|A=a]$.

  \section{Replicability and Anchor Stability}

We consider here the question of replicability when estimation is done a
second time on a new perturbed dataset which has different data generating
distributions than the original unperturbed but typically heterogeneous
data. Replicability in this context is about potential differences in the
regression parameters or prediction losses under different distributions: it
is a ``first order'' problem instead of inferential
statements about statistical uncertainties due to finite samples.

For the following two sections, we sometimes need a condition that the loss of anchor regression remains finite for $\gamma\rightarrow \infty$. We say the  \emph{projectability condition} is fulfilled if
\begin{equation} \label{def:projectability} \text{rank}(\text{Cov}_{\text{train}}(A,X))= \text{rank}(\text{Cov}_{\text{train}}(A,X)| \text{Cov}_{\text{train}}(A,Y)),
\end{equation}
where $\text{Cov}_{\text{train}}(A,X)| \text{Cov}_{\text{train}}(A,Y)$ is a
$q\times (d+1)$ matrix, consisting of the $q\times d$ covariance matrix
$\text{Cov}_{\text{train}}(A,X)$, extended by the $q\times 1$ vector
$\text{Cov}_{\text{train}}(A,Y)$. The reason why we call this the
  ``projectability condition'' becomes clear in Lemma~\ref{lemma:project}
  below.

The projectability condition~\eqref{def:projectability} is fulfilled, for
example, if $\text{Cov}_{\text{train}}(A,X)$ is of full rank and
$q \leq d$  (sometimes called the under- or
just-identified case as the dimension of $A$ is less or equal to the
dimension of $X$). The condition can also be fulfilled for $q>d$ under
additional constraints on the 
nature of the link $A \to Y$. In general, the projectability
  condition allows that the anchor variables $A$ directly influence also $Y$
  or $H$, and the example above for $q \le d$ requires only a full rank
  condition on $\text{Cov}_{\text{train}}(A,X)$.
\begin{lemma}\label{lemma:project}
  Assume that $\mathbb{E}_{\text{train}}[ A A^{\intercal}]$ is invertible.

The projectability condition~\eqref{def:projectability} is fulfilled if and only if
\begin{equation}\label{eq:reachzero}
  \min_b \Etrain[( \mathrm{P}_{A}(Y-X^\intercal b))^{2}] = 0.
\end{equation}
\end{lemma}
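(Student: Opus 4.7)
The plan is to reduce the optimization in \eqref{eq:reachzero} to a linear-algebra solvability question about a matrix equation involving the covariance matrices $\text{Cov}_{\text{train}}(A,X)$ and $\text{Cov}_{\text{train}}(A,Y)$, and then to recognize the resulting condition as exactly the projectability condition~\eqref{def:projectability}.

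First I would rewrite $\mathrm{P}_A(Y - X^\intercal b)$ explicitly as a linear function of $A$. Since $\Etrain[AA^\intercal]$ is invertible and all variables are centered, we have $\mathrm{P}_A Y = \alpha_Y^\intercal A$ with $\alpha_Y := \Etrain[AA^\intercal]^{-1}\,\text{Cov}_{\text{train}}(A,Y)$, and componentwise $\mathrm{P}_A X_j = (\Gamma e_j)^\intercal A$, where $\Gamma := \Etrain[AA^\intercal]^{-1}\,\text{Cov}_{\text{train}}(A,X) \in \mathbb{R}^{q\times d}$. By linearity of the projection, this gives $\mathrm{P}_A(Y - X^\intercal b) = (\alpha_Y - \Gamma b)^\intercal A$.

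Next I would compute the squared loss. Using the invertibility of $\Etrain[AA^\intercal]$ once more,
\begin{equation*}
\Etrain\!\left[(\mathrm{P}_A(Y - X^\intercal b))^2\right] = (\alpha_Y - \Gamma b)^\intercal \Etrain[AA^\intercal]\, (\alpha_Y - \Gamma b).
\end{equation*}
Since $\Etrain[AA^\intercal]$ is positive definite, the minimum over $b$ of this quadratic form is zero if and only if the equation $\Gamma b = \alpha_Y$ admits a solution, i.e.\ if and only if $\alpha_Y \in \text{col}(\Gamma)$.

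Finally I would translate this back to the original covariance matrices: multiplying both $\alpha_Y$ and $\Gamma$ on the left by the invertible matrix $\Etrain[AA^\intercal]$ preserves the column-inclusion relationship, so $\alpha_Y \in \text{col}(\Gamma)$ is equivalent to $\text{Cov}_{\text{train}}(A,Y) \in \text{col}(\text{Cov}_{\text{train}}(A,X))$. The latter is exactly the statement that appending $\text{Cov}_{\text{train}}(A,Y)$ as a new column to $\text{Cov}_{\text{train}}(A,X)$ does not increase the rank, i.e.\ the projectability condition~\eqref{def:projectability}. There is no real obstacle here; the only care needed is to keep track of the fact that centering lets us identify $\Etrain[AY]$ and $\Etrain[AX^\intercal]$ with the corresponding covariances, and that multiplication by the invertible matrix $\Etrain[AA^\intercal]$ preserves both the rank and the column space on the left.
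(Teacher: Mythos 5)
Your proof is correct and follows essentially the same route as the paper: express the $L_2$-projection as a linear function of $A$ with coefficients $\Etrain[AA^\intercal]^{-1}\mathrm{Cov}_{\text{train}}(A,\cdot)$, reduce the vanishing of the minimum to solvability of the linear system $\mathrm{Cov}_{\text{train}}(A,X)\,b=\mathrm{Cov}_{\text{train}}(A,Y)$, and identify that with the rank (Rouch\'e--Capelli) condition, which is exactly the projectability condition. The only cosmetic difference is that the paper phrases the projection as $\Etrain[\,\cdot\,|A]$ using the linear SEM, whereas you work directly with $\mathrm{P}_A$, which changes nothing of substance.
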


The projectability assumption is testable in practice.
The following results cover predictive stability and replicability under
perturbations.

\subsection{Replicability of the parameter $b^{\rightarrow \infty}$}\label{sec:repl-param-bright}

Our first goal is to investigate the replicability of the parameter
$b^{\rightarrow \infty}$. As stated in Theorem~\ref{thm:anchor-regression},
this parameter vector is protecting against certain worst case shift
perturbations of arbitrary strength and as such, it has an interesting
interpretation; in analogy to causality which corresponds to worst case
risk optimization for a different class of perturbations of arbitrary
strength, see \eqref{eq:33b}.

We consider two different data-generating distributions, and for notational
coherence with before we denote them by ``train'' and ``test''. The
training data is generated according to \eqref{eq:32} and
\eqref{eq:def-perturbed-distr}
\begin{equation}\label{mod-add1}
\begin{pmatrix}X \\ Y \\ H \end{pmatrix} = \mathbf{B}
  \cdot \begin{pmatrix} X \\ Y \\ H \end{pmatrix} + \varepsilon + v,\\
  v = \mathbf{M} \delta,\ \delta = \kappa A + \xi,
\end{equation}
where $\xi$ is a random vector with mean zero and independent of
$\varepsilon$ and $A$ and $\kappa \neq 0$. Note that with $\kappa = 1$ and $\xi = 0$ we have
the model in \eqref{eq:32}.

The test data is from the following model:
\begin{equation}\label{mod-add2}
\begin{pmatrix}X' \\ Y' \\ H' \end{pmatrix} = \mathbf{B}
  \cdot \begin{pmatrix} X' \\ Y' \\ H' \end{pmatrix} + \varepsilon' + v',\\
  v' = \mathbf{M} \delta',\ \delta' = \kappa' A' + \xi',
\end{equation}
where $\xi'$ is a random vector with mean zero and independent of
$\varepsilon'$ and $A'$ and $\kappa' \neq 0$. We note that $v'$ and $A'$ can have arbitrarily
different distributions than $v$ and $A$ but we assume that the
dimensionalities are the same. The parameters $\mathbf{B}$ and $\mathbf{M}$ are the same in
both models \eqref{mod-add1} and \eqref{mod-add2} and we assume that
\begin{equation}\label{cov-add}
  \text{Cov}_{\text{test}}(\varepsilon') = L
  \text{Cov}_{\text{train}}(\varepsilon)\ \mbox{for some $L > 0$},\
  \mathbb{E}_{\text{test}}[\varepsilon'] = \Etrain[\varepsilon] = 0.
\end{equation}
Roughly speaking, the models in the training and test dataset differ by
arbitrary shifts in $\text{span}(\mathbf{M})$ and a scalar factor in the noise
distribution.

Consider the parameter $b^{\rightarrow \infty}$ as defined in
\eqref{eq:btoinfty},
\begin{eqnarray*}
  & &b^{\rightarrow \infty} = \argmin_{b \in I} \Etrain[(Y - X^\intercal b)^2],\\
  & &I = \{b; \Etrain[Y - X^\intercal b|A] \equiv 0\},
\end{eqnarray*}
which is
a functional of the distribution in model \eqref{mod-add1}. For its
analogue on a
new test dataset with observed variables $A', X',Y'$ we define
\begin{eqnarray*}
  & &b'^{\rightarrow \infty} = \argmin_{b \in I'}
      \mathbb{E}_{\text{test}}[(Y' - (X')^{\intercal}b)^2],\\
  & &I' = \{b;\ \mathbb{E}_{\text{test}}[Y' - (X')^{\intercal}b|A'] \equiv 0\}.
\end{eqnarray*}
\begin{theorem}[Replicability of $b^{\rightarrow \infty}$]\label{thm:repbinfty}
Consider the models in \eqref{mod-add1} and \eqref{mod-add2} for the
training and test data, respectively. Assume \eqref{cov-add} and
$\Etrain[A A^{\intercal}]$ and $\mathbb{E}_{\text{test}}[A' (A')^{\intercal}]$ are
invertible  and assume that the projectability
condition~\eqref{def:projectability} holds.

 Then,
 \begin{eqnarray*}
 b'^{\rightarrow \infty} = b^{\rightarrow \infty}.
 \end{eqnarray*}
\end{theorem}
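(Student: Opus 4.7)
The strategy is to show that both the constraint set $I$ and the constrained objective defining $b^{\rightarrow\infty}$ depend only on the structural parameters $\mathbf{B}$ and $\mathbf{M}$ (shared across training and test) with the shift distribution washing out and the noise covariance entering only as a positive scalar. I will accomplish this by first solving the SEM equilibrium, then rewriting the IV constraint in purely structural terms, and finally comparing the two objectives on the common constraint set.

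\emph{Step 1: Equilibrium form of the residual.} From $(\mathrm{Id}-\mathbf{B})(X,Y,H)^{\intercal} = \varepsilon + \mathbf{M}\delta$, any linear combination $Y-X^{\intercal}b$ can be written as
\begin{equation*}
Y - X^{\intercal}b \;=\; c_b^{\intercal}\bigl(\varepsilon + \mathbf{M}\delta\bigr),
\end{equation*}
where $c_b \in \mathbb{R}^{d+1+r}$ is obtained by taking the $Y$-row of $(\mathrm{Id}-\mathbf{B})^{-1}$ minus $b^{\intercal}$ times the $X$-rows. Crucially, $c_b$ depends only on $\mathbf{B}$ and $b$; in particular the same relation, with the identical vector $c_b$, holds for the test model $Y' - (X')^{\intercal}b = c_b^{\intercal}(\varepsilon' + \mathbf{M}\delta')$.

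\emph{Step 2: Structural characterization of $I$.} Since $\varepsilon$ and $\xi$ are independent of $A$ and have mean zero,
\begin{equation*}
\Etrain[Y - X^{\intercal}b \mid A] \;=\; c_b^{\intercal}\mathbf{M}\,\kappa\,A.
\end{equation*}
Because $\Etrain[AA^{\intercal}]$ is invertible and $\kappa\neq 0$, this vanishes almost surely iff $c_b^{\intercal}\mathbf{M}=0$. Hence $I = \{b \in \mathbb{R}^d : c_b^{\intercal}\mathbf{M}=0\}$, a set depending only on $\mathbf{B}$ and $\mathbf{M}$. Repeating the argument on the test side with $\mathbb{E}_{\text{test}}[A'(A')^{\intercal}]$ invertible and $\kappa'\neq 0$ gives $I' = \{b : c_b^{\intercal}\mathbf{M} = 0\} = I$. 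The projectability condition together with Lemma~\ref{lemma:project} ensures this set is non-empty.

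\emph{Step 3: Objective on the common constraint set.} For $b \in I$ we have $c_b^{\intercal}\mathbf{M} = 0$, so $c_b^{\intercal}\mathbf{M}\delta = 0$ and $c_b^{\intercal}\mathbf{M}\delta' = 0$, reducing the residual in both datasets to $c_b^{\intercal}\varepsilon$ and $c_b^{\intercal}\varepsilon'$ respectively. Using assumption~\eqref{cov-add},
\begin{align*}
\Etrain[(Y - X^{\intercal}b)^2] &= c_b^{\intercal}\,\text{Cov}_{\text{train}}(\varepsilon)\,c_b,\\
\mathbb{E}_{\text{test}}[(Y' - (X')^{\intercal}b)^2] &= c_b^{\intercal}\,\text{Cov}_{\text{test}}(\varepsilon')\,c_b \;=\; L\cdot c_b^{\intercal}\,\text{Cov}_{\text{train}}(\varepsilon)\,c_b.
\end{align*}
Since $L > 0$ and $I = I'$, the two constrained minimization problems have identical argmin sets, proving $b'^{\rightarrow\infty} = b^{\rightarrow\infty}$. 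The main obstacle is Step~2: extracting the clean, distribution-free condition $c_b^{\intercal}\mathbf{M}=0$ from the conditional-mean constraint, which is what lets the shift distribution (through $\kappa$, $\xi$, and the law of $A$) drop out. Everything else is straightforward linear algebra, with non-emptiness of $I$ supplied by the projectability assumption via Lemma~\ref{lemma:project}.
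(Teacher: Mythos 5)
Your proposal is correct and follows essentially the same route as the paper's proof: the residual representation $Y-X^{\intercal}b = c_b^{\intercal}(\varepsilon+\mathbf{M}\delta)$ is the paper's $w_b$-construction, the invertibility of $\Etrain[AA^{\intercal}]$ (with $\kappa\neq 0$) is used identically to reduce the constraint to $c_b^{\intercal}\mathbf{M}=0$ and hence $I=I'$, non-emptiness comes from projectability via Lemma~\ref{lemma:project}, and the objectives agree up to the positive factor $L$. The only cosmetic difference is that you state the constraint set as an if-and-only-if characterization directly, while the paper establishes $I=I'$ through the two inclusions; the substance is the same.
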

Replicability of statistical estimands is arguably a desirable property, but it is a separate question whether $b^{\infty}$ is a meaningful quantity. As discussed at the beginning of this section, $b^{\rightarrow \infty}$ has an interpretation as a coefficient vector that optimizes a certain worst-case risk. 
Beyond this interpretation, we believe that the role of $A$ matters to determine whether the components of $b^{\rightarrow \infty}$ are scientifically relevant.
Loosely speaking, in instrumental variables settings, $A$ induces
associations between $X$ and $Y$ that are due to the causal pathway between
$X$ and $Y$. Hence, $b^{\rightarrow \infty}$ has a scientific
interpretation as the causal effect from $X$ to $Y$. However, if $A$ plays
the role of a confounder (a variable that induces spurious associations
between $X$ and $Y$), then it is common practice to adjust for $A$, leading
to $b^{0}$. Under slightly weaker assumptions than in the result above we also get replicability of $b^{0}$. 
In practice, there may be 
 some uncertainty about whether $A$ is an instrument or a confounder, or whether both sets of assumptions are violated. In the next section we will show that  anchor regression can be used in such settings to screen for replicable coefficients that have a causal interpretation.

\subsection{Anchor stability}\label{sec:anchor-stability}

If all solutions of \emph{anchor regression} agree (i.e., if $b^0 = b^\gamma$ for all $\gamma \in [0,\infty)$) we call the coefficient vector \emph{anchor stable}.

We will show that under anchor stability we have predictive stability and
replicability of variable selection under
certain perturbations. Additionally, we will show that  \emph{anchor
  stability} allows a causal interpretation of the coefficient vector under
otherwise comparatively weak assumptions.
As in the previous section, in the following we assume that the limit $b^{\rightarrow \infty} := \lim_{\gamma  \rightarrow \infty} b^{\gamma}$ exists.

One of the anchor stability results (Theorem~\ref{thm:stabcaus}) can be generalized to cases where the anchor is endogeneous.
This relaxation is relevant for our application in Section~\ref{sec:genotype-tiss-expr}. A rigorous treatment of endogeneous anchors warrants the introduction of a class of models that subsumes acyclic models in Section~\ref{sec:setting-notation}. Thus, for reasons of readability we defer the most general version of the theorem to the Appendix, Section~\ref{sec:gener-vers-theor}.

  Our first result shows that we have anchor stability if the two endpoints of anchor regression agree.
\begin{proposition}\label{prop:constant}
  If $b^0 = b^{\rightarrow \infty}$ then
  \begin{equation*}
    b^0 = b^\gamma \text{ for all } \gamma \in (0,\infty).
  \end{equation*}
\end{proposition}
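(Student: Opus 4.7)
Write $f(b) := \Etrain[((\mathrm{Id} - \mathrm{P}_A)(Y - X^\intercal b))^2]$ and $g(b) := \Etrain[(\mathrm{P}_A(Y - X^\intercal b))^2]$, so that the anchor objective reads $f(b) + \gamma g(b)$, with $b^0 = \argmin_b f(b)$ and $b^\gamma = \argmin_b [f(b) + \gamma g(b)]$. Both $f$ and $g$ are convex quadratics in $b$ and hence continuous. The plan is to use the following elementary observation: if a point minimizes two nonnegative convex functions simultaneously, it minimizes every nonnegative linear combination of them. So the whole task reduces to showing that $b^{\rightarrow \infty}$ is a global minimizer of $g$; then combined with $b^0 = b^{\rightarrow \infty}$ being a global minimizer of $f$, the conclusion is immediate.

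The main step is therefore to verify $g(b^{\rightarrow \infty}) = \min_b g(b)$. Pick any $\tilde b$ achieving $\min_b g(b)$. Optimality of $b^\gamma$ yields
\begin{equation*}
f(b^\gamma) + \gamma g(b^\gamma) \;\leq\; f(\tilde b) + \gamma g(\tilde b),
\end{equation*}
so, using $f(b^\gamma) \geq 0$,
\begin{equation*}
0 \;\leq\; g(b^\gamma) - g(\tilde b) \;\leq\; \frac{f(\tilde b) - f(b^\gamma)}{\gamma} \;\leq\; \frac{f(\tilde b)}{\gamma}.
\end{equation*}
Letting $\gamma \to \infty$ and using continuity of $g$ together with the assumed existence of $b^{\rightarrow \infty} = \lim_\gamma b^\gamma$ gives $g(b^{\rightarrow \infty}) = g(\tilde b) = \min_b g(b)$, as required.

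Now set $b^* := b^0 = b^{\rightarrow \infty}$. By the previous step, $g(b^*) \leq g(b)$ for every $b$, and by definition $f(b^*) \leq f(b)$ for every $b$. Adding the inequalities with weights $1$ and $\gamma > 0$ gives
\begin{equation*}
f(b^*) + \gamma g(b^*) \;\leq\; f(b) + \gamma g(b) \quad \text{for every } b \in \mathbb{R}^d,
\end{equation*}
so $b^*$ is a minimizer of the anchor objective for every $\gamma > 0$, i.e.\ $b^\gamma = b^*$, invoking the uniqueness implicit in the definition of $b^\gamma$ as the argmin. The only delicate point is the continuity/limit argument for $g(b^{\rightarrow \infty})$; everything else is the trivial fact that a common minimizer of two nonnegative convex functions minimizes any conic combination.
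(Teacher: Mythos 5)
Your proof is correct, and it shares the paper's basic skeleton --- show that the common point $b^0=b^{\rightarrow\infty}$ minimizes both the unpenalized term and the penalty term, and then conclude that it minimizes every combination $f+\gamma g$ --- but your justification of the key step is genuinely different. The paper argues via first-order conditions: it asserts that the gradients of both terms vanish at $b^0=b^{\rightarrow\infty}$ (the vanishing gradient of the penalty at $b^{\rightarrow\infty}$ being taken ``by assumption'', i.e.\ implicitly through the identification $b^{\rightarrow\infty}=b_{\mathrm{IV}}$ when the latter is well defined), and then uses convexity to upgrade stationarity of the combined objective at $b^0$ to global minimality. You instead prove from the limit definition of $b^{\rightarrow\infty}$ that it globally minimizes the penalty, via the squeeze $0\le g(b^\gamma)-\min_b g(b)\le f(\tilde b)/\gamma$ together with continuity of $g$; after that no differentiation is needed, since a common minimizer of two functions trivially minimizes any conic combination. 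Your route is more self-contained (it leans neither on $b^{\rightarrow\infty}=b_{\mathrm{IV}}$ nor on first-order calculus), at the cost of one small point you should make explicit: the minimizer $\tilde b$ of the penalty exists because $g$ is a nonnegative convex quadratic in $b$, and a convex quadratic that is bounded below always attains its infimum (this does not require the projectability condition, which the paper explicitly does not assume for this proposition). As in the paper's own proof, the final identification $b^\gamma=b^0$ rests on the tacit uniqueness of the argmin, which you correctly flag.
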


The proposition is valid without necessarily assuming the projectability condition, which is, however, needed for the following result on anchor stability in the case that the solutions match for $\gamma\in \{0,\infty\}$.

\begin{theorem}[Anchor stability, predictive stability and replicability]\label{thm:anchor-stability-pred-stab}

  Let the assumptions of Section~\ref{sec:setting-notation} hold,
and in addition assume the projectability condition~\eqref{def:projectability}
 and that the Gram matrix $\Etrain [A A^\intercal]$ is invertible.
If $b^0 = b^{\rightarrow \infty}$, then, for all random or constant vectors $v$ that are
uncorrelated of $\varepsilon$ and take values in $\text{span}(\mathbf{M})$,
\begin{enumerate}
  \item $\Etrain[(Y-X^\intercal b^0)^2] = \mathbb{E}_{v}[(Y - X^\intercal b^0)^2]$, \text{ and}
\item $b^0 = \argmin_b \mathbb{E}_{v}[(Y - X^\intercal b)^2]$. 
\end{enumerate}
\end{theorem}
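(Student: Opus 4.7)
My plan is to use Proposition~\ref{prop:constant} and the projectability condition to pin down a single algebraic constraint on the structural residual, and then to read off both claims from an orthogonal decomposition of the test mean squared error. First, Proposition~\ref{prop:constant} gives $b^0 = b^\gamma$ for every $\gamma \in [0,\infty]$; picking $\gamma = 0$ identifies $b^0$ with the partialled-out minimizer $b_{\text{PA}}$, and picking $\gamma = \infty$ together with Lemma~\ref{lemma:project} (using projectability and invertibility of $\Etrain[AA^\intercal]$) shows that $\Etrain[(\mathrm{P}_A(Y-X^\intercal b^0))^2] = 0$.

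Next, I would unfold the SEM by writing $Y - X^\intercal b = w(b)^\intercal \varepsilon + w(b)^\intercal \mathbf{M} A$ under $\Ptrain$ and $Y - X^\intercal b = w(b)^\intercal \varepsilon + w(b)^\intercal v$ under $\mathbb{P}_v$, where $w(b)^\intercal$ is the row vector $(e_Y - E_X b)^\intercal (\mathrm{Id}-\mathbf{B})^{-1}$ obtained from the $Y$- and $X$-coordinates of the reduced-form solution in Section~\ref{sec:setting-notation}. Since $\varepsilon$ is independent of $A$ with mean zero, the $L^2$-projection of $Y - X^\intercal b$ onto the linear span of $A$ equals exactly $w(b)^\intercal \mathbf{M} A$, so the vanishing norm from the previous step combined with invertibility of $\Etrain[AA^\intercal]$ forces the algebraic identity $w(b^0)^\intercal \mathbf{M} = 0$.

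Both conclusions then follow directly. For Claim 1, any $v$ taking values in $\text{span}(\mathbf{M})$ satisfies $w(b^0)^\intercal v = 0$, so $Y - X^\intercal b^0 = w(b^0)^\intercal \varepsilon$ under both measures, yielding identical MSE. For Claim 2, the cross term $\mathbb{E}[(w(b)^\intercal \varepsilon)(w(b)^\intercal v)]$ vanishes because $v$ is uncorrelated with $\varepsilon$ and $\mathbb{E}[\varepsilon] = 0$, which gives $\mathbb{E}_v[(Y - X^\intercal b)^2] = \mathbb{E}[(w(b)^\intercal \varepsilon)^2] + \mathbb{E}[(w(b)^\intercal v)^2] \ge \mathbb{E}[(w(b)^\intercal \varepsilon)^2]$. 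Since $b^0 = b_{\text{PA}}$ minimizes $\mathbb{E}[(w(b)^\intercal \varepsilon)^2] = \Etrain[((\mathrm{Id}-\mathrm{P}_A)(Y-X^\intercal b))^2]$ over all $b$, this bound is at least $\mathbb{E}[(w(b^0)^\intercal \varepsilon)^2] = \mathbb{E}_v[(Y - X^\intercal b^0)^2]$.

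The only real obstacle is being careful at the step that promotes the vanishing $L^2$-norm of $\mathrm{P}_A(Y - X^\intercal b^0)$ to the deterministic identity $w(b^0)^\intercal \mathbf{M} = 0$: this genuinely needs invertibility of $\Etrain[AA^\intercal]$ and not merely its positive semidefiniteness, and it is the single place where the structural form of the SEM is used. Once that identity is in hand, both claims are essentially two lines of orthogonal arithmetic, and no additional distributional assumption on $v$ beyond uncorrelatedness with $\varepsilon$ and $v \in \text{span}(\mathbf{M})$ is required.
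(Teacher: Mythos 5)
Your proposal is correct and follows essentially the same route as the paper's own proof: use projectability (via Lemma~\ref{lemma:project} and the hypothesis $b^0=b^{\rightarrow\infty}$) to get $\Etrain[\mathrm{P}_A(Y-X^\intercal b^0)]\equiv 0$, pass through the reduced-form representation $Y-X^\intercal b = w(b)^\intercal(\varepsilon+\mathbf{M}A)$ and invertibility of $\Etrain[AA^\intercal]$ to obtain $w(b^0)^\intercal \mathbf{M}=0$, and then read off both claims from the resulting invariance of the residual and the orthogonal decomposition of $\mathbb{E}_v[(Y-X^\intercal b)^2]$. The paper phrases the final step as the chain $\mathbb{E}_{v}[(Y-X^\intercal b)^2]\ge \mathbb{E}_{0}[(Y-X^\intercal b)^2]\ge \mathbb{E}_{0}[(Y-X^\intercal b^0)^2]=\mathbb{E}_{v}[(Y-X^\intercal b^0)^2]$, which is the same arithmetic you describe.
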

Part (a) of the theorem implies that the risk is constant as long as the perturbations
$v$ lie in the span of the shift matrix $\mathbf{M}$, i.e.\ in $\text{span}(\mathbf{M})$. This can be seen as a form of
predictive stability
across a range of distributions. Part (b) together with Proposition~\ref{prop:constant} imply that running a
regression on perturbed data sets in the
population case returns the same coefficients as the ones computed on the
training data as long as the perturbations $v$ lie in
$\text{span}(\mathbf{M})$.
In this sense, we have replicability
across certain distributions. 

Now let us turn to the interpretation of the
individual coefficients in this case.
The individual coefficients can be interpreted using the concepts
of d-separation, causal directed acyclic graphs and
do-interventions. For reasons of readability and as the concepts are
otherwise not needed in this paper, we will not define them here
but rather refer the reader to e.g.\ \citet{Pearl2009}, Chapter~1. An
interpretation of the result in the one-dimensional case is given in Section~\ref{sec:anchor-stability-one}.
The faithfulness assumption
\citep{Spirtes2000, Pearl2009}
connects d-separation statements to statements of conditional independences.
As \emph{anchor regression} only deals with covariances, we have to make an assumption that connects d-separation statements to partial correlations.
We assume that $G$ is acyclic and that for every disjoint sets of variables $V_1, V_2, V_3 \subset (X,Y,H,A)$, $V_1$ is d-separated of $V_2$ in $G$ given $V_3$ if and only if the partial correlation $\text{part.cor}(V_1,V_2|V_3) = 0$. This can be seen as a linear version of faithfulness.
\begin{theorem}[Anchor stability implies causality] \label{thm:stabcaus}
    Let the assumptions of Section~\ref{sec:setting-notation} hold with an acyclic graph $G$,
    and assume the projectability
    condition~\eqref{def:projectability}.

 Furthermore, assume that for every disjoint sets of variables $V_1, V_2, V_3
      \subset (X,Y,H,A)$,  $V_1$ is d-separated of $V_2$ in $G$
      given $V_3$ if and only if the partial correlation
      $\text{part.cor}(V_1,V_2|V_3) = 0$. Furthermore assume that for each
  $X_k$ there exists $k'$ such that $A_{k'} \rightarrow X_k$.
    If $b^{\rightarrow \infty}
  = b^0$, then
    \begin{equation}
      b^{\rightarrow \infty} = b^0 = \partial_x \mathbb{E}[Y|do(X=x)],
    \end{equation}
    where the do-operator $\mathbb{E}[\bullet |do(X=x)]$ is defined as in \citet{Pearl2009}, Chapter~1. In addition, there is no confounder between $X$ and $Y$, i.e., there is no $H_{k}$ that is both an ancestor of some $X_{k'}$ and $Y$ in $G$.
  \end{theorem}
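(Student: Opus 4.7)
}
The first move is to turn the anchor-stability hypothesis into two simultaneous orthogonality conditions on the residual $U := Y - X^\intercal b^{*}$, where $b^{*} := b^{0} = b^{\rightarrow\infty}$. By the projectability assumption and Lemma~\ref{lemma:project}, the IV criterion vanishes at $b^{\rightarrow\infty}$, so $P_{A}(Y - X^\intercal b^{*}) = 0$, which gives $\Etrain[A\,U] = 0$. On the other hand, the first-order condition for the PA objective evaluated at $b^{0}$ yields $\Etrain[(\mathrm{Id}-P_{A})X\,(\mathrm{Id}-P_{A})U] = 0$, which combined with $P_{A}U = 0$ collapses to $\Etrain[X\,U] = 0$. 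Hence $b^{*}$ is simultaneously the OLS, IV, and PA solution; in particular $U$ is uncorrelated with every linear combination of $X$ and $A$.

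Next, I would unfold the acyclic SEM in reduced form. Writing $D := (\mathrm{Id}-\mathbf{B})^{-1}$ with row blocks $D_{X}, D_{Y}, D_{H}$ so that $(X^\intercal, Y, H^\intercal)^\intercal = D\,(\varepsilon + \mathbf{M}A)$, the two orthogonality conditions become the algebraic identities
\begin{equation*}
(D_{Y} - b^{*\intercal} D_{X})\,\mathbf{M} = 0,\qquad (D_{Y} - b^{*\intercal} D_{X})\,\Sigma_{\varepsilon}\,D_{X}^{\intercal} = 0,
\end{equation*}
where $\Sigma_{\varepsilon}$ is the diagonal noise covariance. The first identity says the reduced-form effect of $A$ on $Y$ factors through $X$ with coefficient $b^{*}$, while the second, after subtracting off the $A$-contribution, is a pure no-confounding statement on the innovation-noise coordinates of $X$ and $Y$. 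The assumption that each $X_{k}$ has an anchor parent $A_{k'}\to X_{k}$, together with linear faithfulness, guarantees that the columns of $D_{X}\mathbf{M}$ contain enough linearly independent information to pin down $b^{*}$ along each coordinate of $X$.

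To obtain the two conclusions, I would reason as follows. For the no-confounding claim, suppose for contradiction there is an $H_{j}$ that is an ancestor of some $X_{k'}$ and of $Y$ in $G$. In the acyclic SEM this means there exist directed paths $H_{j} \rightsquigarrow X_{k'}$ and $H_{j} \rightsquigarrow Y$, so $H_{j}$ contributes to the $\varepsilon_{H_{j}}$-coordinate of both $D_{X}$ (row $k'$) and $D_{Y}$; by linear faithfulness this contribution produces a nonzero partial correlation $\mathrm{part.cor}(X_{k'}, Y \mid A, X_{-k'}) \neq 0$ that is not absorbable into any linear combination $b^{*\intercal} D_{X}$ once the $A$-identity has fixed $b^{*}$. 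This contradicts $(D_{Y} - b^{*\intercal} D_{X})\,\Sigma_{\varepsilon}\,D_{X}^{\intercal} = 0$ at the $\varepsilon_{H_{j}}$-column. Having ruled out $H$-confounding, the do-calculus formula in an acyclic linear SEM reduces $\partial_{x}\mathbb{E}[Y\mid do(X=x)]$ to the sum of path coefficients along $X \rightsquigarrow Y$ through non-$X$ mediators; but all such mediators are either in $H$ (excluded, since no $H_{j}$ is simultaneously an ancestor of $X$ and $Y$) or are themselves in $X$ (and therefore intervened upon). Hence $\partial_{x}\mathbb{E}[Y\mid do(X=x)]$ equals the coefficient identified by OLS, i.e.\ $b^{*}$.

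\paragraph{Main obstacle.} The delicate step is the faithfulness argument in the third paragraph: translating the algebraic identity $(D_{Y}-b^{*\intercal}D_{X})\Sigma_{\varepsilon}D_{X}^{\intercal}=0$ into a graph-structural claim, while allowing that $A$ may directly affect both $Y$ and $H$. One must carefully isolate the contribution of each unobserved noise coordinate $\varepsilon_{H_{j}}$ to the two sides of the identity, and use the anchor-parent assumption on each $X_{k}$ to ensure that faithfulness applied to a judiciously chosen conditioning set converts the would-be confounding pathway into a testable nonzero partial correlation.
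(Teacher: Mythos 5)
Your first paragraph is fine and matches the paper's opening move: from $b^0=b^{\rightarrow\infty}$ and projectability one gets $\Etrain[A\,U]=0$ and (via the PA/OLS first-order condition, or equivalently Proposition~\ref{prop:constant}) $\Etrain[X\,U]=0$, so the residual is orthogonal to both $X$ and $A$. The gap is in the core step that turns this into graph structure. The faithfulness hypothesis of the theorem only converts \emph{vanishing partial correlations among the variables} $(X,Y,H,A)$ into d-separations; to use it you must exhibit such a vanishing partial correlation. The paper does exactly this: the two orthogonalities give $\text{part.cor}(Y,A\mid X)=0$, hence $A$ is d-separated from $Y$ given $X$, and then a purely graph-theoretic argument using the anchor-parent condition (Step 1: no $X_k$ can be a descendant of $Y$, else $A_{k'}\to X_k\leftarrow\cdots\leftarrow Y$ is open given $X$; Step 2: any backdoor path from some $X_k$ to $Y$ open given $\emptyset$ lifts, via $A_{k'}\to X_k$, to an open $A$--$Y$ path given $X$) shows the empty set satisfies the backdoor criterion. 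Your sketch never identifies this key d-separation; instead you try to contradict the identity $(D_Y-b^{*\intercal}D_X)\Sigma_\varepsilon D_X^\intercal=0$ ``at the $\varepsilon_{H_j}$-column''. But that identity has one entry per covariate $X_k$, each entry being a \emph{sum over all noise coordinates} $\sum_j \sigma_j^2 (D_Y-b^{*\intercal}D_X)_j (D_X)_{k,j}$, so there is no per-noise column to isolate; ruling out cancellation across noise coordinates is a genericity claim that the assumed linear faithfulness (a statement about partial correlations, not about non-cancellation of reduced-form sums) does not deliver. You flag this as the ``main obstacle'', but the route you sketch does not supply the missing d-separation/graph argument, which is precisely where the anchor-parent assumption does its work in the paper.

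Your final step is also incorrect as stated: a mediator $H_j$ on a directed path $X\to H_j\to Y$ is a \emph{descendant} of $X$, not an ancestor, so it is not excluded by the no-confounding conclusion --- nor should it be, since $\partial_x\mathbb{E}[Y\mid do(X=x)]$ is the total effect and legitimately includes such $H$-mediated contributions. The correct conclusion does not require removing mediators at all: once every backdoor path from $X$ to $Y$ is blocked by the empty set, the backdoor criterion plus linearity gives $\partial_x\mathbb{E}[Y\mid do(X=x)] = b_{\text{OLS}} = b^1 = b^0$, and the absence of an $H_k$ that is an ancestor of both some $X_{k'}$ and $Y$ is a byproduct of the same backdoor argument, not an input to the identification of the total effect.
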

A more general version of this result that allows for endogeneous anchors can be found in Section~\ref{sec:gener-vers-theor}. Roughly speaking, the theorem says that under anchor stability, the coefficients $b^{\rightarrow \infty} = b^0$ have a causal interpretation and there is no confounder between $X$ and $Y$.  If confounders were present between $X$ and $Y$, intervening (or conditioning) on them could potentially change the anchor regression coefficient $b^{0}$. In this sense, the absence of confounding between $X$ and $Y$ may be seen as a positive indication for distributional replicability.    

Anchor stability is testable on data and if it holds, under relatively weak assumptions, the coefficients allow for a causal interpretation. In empirical studies using instrumental variables, one often compares IV estimates with OLS estimates. The above result formalizes the implications when these estimates are equal.

\subsection{Anchor stability in the one-dimensional case}\label{sec:anchor-stability-one}
In the special case where $X$, $Y$, $H$ and $A$ are all one-dimensional random variables, the theorem
 can be interpreted in the following way: Suppose we know that $A$ is exogeneous and $A \rightarrow X$ but we do not know whether it is a valid instrument, i.e., potentially we have $A \rightarrow Y$ or $A \rightarrow H \rightarrow Y$. We may not know either
whether we could obtain the causal coefficients by simply regressing $Y$ on
$X$ or $Y$ on $(X,A)$, i.e.,
  we are unsure
whether there exists a hidden confounder $H$ with $X \leftarrow H
\rightarrow Y$. Under the assumptions of Theorem~\ref{thm:stabcaus} and if $b^{0} \neq 0$, the models agree if and only if $A \rightarrow X \rightarrow Y$ and if no other arrows (or confounders) are present. Using the theorem, if the two anchor solutions agree, then
both the IV and regression adjustment are correct for estimating the causal effect. This approach is restrictive, but
can potentially be useful in cases
where we have little knowledge about the underlying structure and not much
reason to prefer one of these models over the other. An application of this
approach is shown in the data section.
We anticipate that the concept of \emph{anchor stability} is most
useful for screening causal effects in large-scale settings. An analogous statement holds for the multivariate case.

\section{Properties of anchor regression estimators}\label{sec:finite-sample}

In this section we discuss the properties of finite-sample \emph{anchor regression}. Section~\ref{sec:low-dimensional-case} treats the low-dimensional case; the high-dimensional case is discussed in Section~\ref{sec:high-dimens-estim}.
In the following we assume to have $n$ i.i.d. observations of $(X,Y,A)$. Concatenating the observations of $X$ row-wise forms an $n \times d$-dimensional matrix that we denote by $\X$. Analogously, the matrix containing the observations of $A$ is denoted by  $\A \in \mathbb{R}^{n \times q}$ and the vector containing the observations of $Y$ is denoted by $\Y \in \mathbb{R}^{n}$. In the following, we tacitly assume that the population parameter $b^{\gamma}$ as defined in equation~\eqref{eq:28} is unique.

\subsection{Estimator in the
    low-dimensional setting}\label{sec:low-dimensional-case}

As discussed before, in the low-dimensional case where $d < n$ we recommend using a simple plug-in estimator for the anchor-regression coefficient $b^{\gamma}$:
\begin{equation}\label{eq:29}
   \hat b^{ \gamma} = \argmin_{b} \|(\mathrm{Id}-\Pi_{\A})( \Y- \X b) \|_{2}^{2} +  \gamma \|  \Pi_{\A} ( \Y- \X b)\|_{2}^{2},
\end{equation}
where $\Pi_{\A} \in \mathbb{R}^{n \times n}$ is the matrix that
projects on the column space of $\A$, i.e., if $  \A^{\intercal} \A$
is invertible, then $\Pi_{\A} := \A ( \A^{\intercal} \A)^{-1}
\A^{\intercal}$. In Section~\ref{sec:setting-notation} we made the
assumption that $X$ and $Y$ have mean zero. Hence, in practice, we
recommend to center $\X$ and $\Y$ in a pre-processing step.

 Computation of the \emph{anchor
     regression} estimator in \eqref{eq:29} is simple, as it can be cast
 as an ordinary least squares problem on a transformed data set. To this
 end, define
\begin{equation}\label{eq:14}
  \tilde \X :=(\mathrm{Id} - \Pi_{\A}) \X + \sqrt{ \gamma}  \Pi_{\A}  \X  \quad \text{ and } \quad \tilde \Y :=   (\mathrm{Id} - \Pi_{\A}) \Y + \sqrt{ \gamma}  \Pi_{\A}  \Y.
\end{equation}
The estimator in \eqref{eq:29} can then be represented as
  follows:
\begin{equation*}
\hat{b}^\gamma = \argmin_{b} \| \tilde \Y - \tilde \X b \|_{2}^{2}.
\end{equation*}
The transformed data set $(\tilde \X, \tilde \Y)$ can be interpreted as artificially generated interventional (``perturbed'') data. In this sense, \emph{anchor regression} can be seen as a two-step procedure. First, generate perturbed data $(\tilde \X, \tilde \Y)$ for a given perturbation strength $\gamma$. Then, run ordinary least squares on the artificial data set.

By the law of large numbers for $n \rightarrow \infty$ the empirical
covariance matrix of $(X,Y,A)$ converges to the population covariance
matrix of $(X,Y,A)$. By continuity, $\hat b^{\gamma} = (\tilde
\X^{\intercal} \tilde \X)^{-1} \tilde \X^{\intercal} \tilde \Y$ converges
to the population parameter $b^{\gamma}$. Hence, $\hat b^{\gamma}$ is a
consistent estimator of $b^{\gamma}$.

The transformation \eqref{eq:14} is for computational reasons only.

Even if $(X,Y,A)$ follows a multivariate Gaussian distribution, in
  general it might \emph{not} be true that $\hat b^{\gamma} \sim
  \mathcal{N}(b^{\gamma}, V)$ for some covariance matrix $V$ since possible
  confounding complicates the matter. Hence $p$-values or confidence
  intervals from ordinary least squares regression of the transformed data
  $(\tilde{\Y},\tilde{\X})$ cannot be
  used.

Since a  main goal in this paper is to establish good predictive performance on future data
sets, it is less important to provide distributional results for  $\hat
b^{\gamma} - b^{\gamma}$, than to quantify
the excess predictive risk on new data sets. A finite sample bound for the excess risk, even covering the
  high-dimensional setting, can be found in Section~\ref{sec:finite-sample-bound}.

\subsection{Estimator in the
    high-dimensional setting}\label{sec:high-dimens-estim}
If the number of predictors $d$ exceeds the number of observations $n$, then the sample estimate defined in~\eqref{eq:14} is not well-defined. In high-dimensional settings, one typically employs $\ell_1$- or
  $\ell_2$-norm penalties
for regularization and shrinkage. The $\ell_{1}$-penalized estimators are
usually consistent under appropriate sparsity and distributional
assumptions, see for example \citet{buhlmann2011statistics}.

While high-dimensionality is
  allowed in terms of $d \gg n$, we will assume here that the number of
  anchor variables $q$ is of smaller order than $n$. High-dimensionality in
  terms of $q \gg n$ would be another issue, as $\Pi_A$ is ill-posed, and
  should be addressed with an $\ell_{\infty}$ regularization scheme,
  replacing the $\ell_2$-norm term $\gamma \|\Pi_A(Y - X^\intercal b)\|_2^2$.
We propose high-dimensional estimation of \emph{anchor regression} as a solution of
\begin{equation}\label{eq:12}
  \hat b^{\gamma,\lambda} =\argmin_{b} \|(\mathrm{Id}-\Pi_{\A})( \Y- \X b) \|_{2}^{2} +  \gamma \|  \Pi_{\A} ( \Y- \X b)\|_{2}^{2} + 2\lambda \| b \|_{1}.
\end{equation}

Compared to unregularized \emph{anchor regression}, the penalty term
$2\lambda \| b \|_{1}$ favours coefficient vectors $b$ that are sparse. For
$\gamma=1$, the estimator coincides with the Lasso
\citep{tibshirani1996regression}, whereas for $\lambda=0$, the estimator
 coincides with unregularized \emph{anchor regression}.

As in the low-dimensional case with the linear transformation in
  \eqref{eq:14}, computation of regularized \emph{anchor regression}
is easy. We can rewrite regularized \emph{anchor regression} as
\begin{align*}
 &\argmin_{b} \|(\mathrm{Id}-\Pi_{\A})( \Y- \X b) \|_{2}^{2} +  \gamma \|  \Pi_{\A} ( \Y- \X b)\|_{2}^{2} +2 \lambda \| b \|_{1} \\
=& \argmin_{b} \| \tilde \Y - \tilde \X b \|_{2}^{2} + 2 \lambda \| b \|_{1},
\end{align*}
where $\tilde \Y$ and $\tilde \X$ are defined as in
equation~\eqref{eq:14}. Hence, solving a high-dimensional
\emph{anchor regression} for fixed $\gamma$ is reduced to solving a Lasso
problem. This is typically done by coordinatewise descent
\citep{friedman2007pathwise} to approximately compute the solution path. In the next section we will investigate finite-sample performance of $\ell_1$-norm regularized \emph{anchor regression}.

\subsection{Finite-sample bound for discrete anchors}\label{sec:finite-sample-bound}

We will derive a finite sample bound for discrete anchors. 
There are no fundamental issues that prevent the derivation of similar results for continuous anchors.
We write $\mathcal{A}$ for the set of levels of the random variable $A$. Unbalanced settings can impose difficulties in the finite-sample case as it becomes more challenging to estimate the penalty term. We analyse the behaviour of \emph{anchor regression} in the case where all anchor levels $A = a$, $a \in \mathcal{A}$, are explicitly given equal weight in the optimization procedure, i.e., the objective function for population \emph{anchor regression} is
\begin{equation*}
R(b) :=  \Etrain[(Y-X^\intercal b - \Etrain[Y-X^\intercal b |A])^{2}] + \frac{\gamma}{| \mathcal{A}|} \sum_{a \in \mathcal{A}}  (\Etrain[Y-X^\intercal b|A=a])^{2}.
\end{equation*}
Such a re-weighting is usually advisable in unbalanced settings. Otherwise, very few levels of $A$ can dominate the penalty term and limit its usefulness.  Note that, by Theorem~\ref{thm:anchor-regression}, $R(b)$ corresponds to the maximum $\ell_{2}$-risk under a uniform distribution on the levels of $A$:
\begin{align*}
R(b) &= \sup_{v \in C^{\gamma}} \mathbb{E}_{v}[(Y - X^{\intercal} b)^{2}].
\end{align*}
For data with unbalanced discrete anchor levels, the shape of $C^{\gamma}$
changes as anchor levels that occur with small probability are given less
weight. For discrete anchors, interpreting \emph{anchor regression} via
quantiles is only justified under re-weighting, see
Lemma~\ref{lemma:interpr-anch-regr-discr} in the Appendix.. 

To formulate the assumptions in a convenient form  we introduce additional notation for the special case of discrete anchors. We write $n_{a}$ for the number of observations for level $A=a$ and $n_{\text{min}}$ for the minimum number of observations, i.e., $n_{\text{min}} := \min_{a \in \mathcal{A}} n_{a}$. We write $\X^{(a)} \in \mathbb{R}^{n_{a} \times d}$ for the observations for which $A=a$. In other words, the rows of $\X^{(a)}$ consist of observations $\X_{i, \bullet}$ for which $\A_{i} = a$. Furthermore we write $\overline{\X}^{(a)}$ for the mean within the group, i.e., $\overline{\X}^{(a)} = \frac{1}{n_{a}} \sum_{i=1}^{n_{a}} \X_{i,\bullet}^{(a)} $. Analogously we define $\Y^{(a)} \in \mathbb{R}^{n_{a}}$ and $\overline{\Y}^{(a)}$.  
Using this notation, the high-dimensional \emph{anchor regression} estimator in \eqref{eq:12} but with equal weight regularization, analogous to the definition of $R(b)$ above, equals
\begin{align*}
\begin{split}
\hat b := \, & \argmin_{b} \frac{1}{| \mathcal{A} |} \sum_{a \in \mathcal{A}} \frac{1}{n_{a}}  \sum_{i=1}^{n_{a}} \left( \Y^{(a)}_{i} - \overline{\Y}^{(a)} - (\X^{(a)}_{i,\bullet} - \overline{\X}^{(a)} ) b \right)^{2} + \frac{\gamma}{| \mathcal{A}|} \sum_{a \in \mathcal{A}}  \left( \overline{\Y}^{(a)} - \overline{\X}^{(a)} b \right)^{2} + 2 \lambda  \| b \|_{1}.
\end{split}
\end{align*}
Here and in the following, we suppress the dependence of $\hat b$ on $\gamma$ and $\lambda$.
 For any $S \subseteq \{1,\ldots,d\}$ and stretch factor $L>0$ define the \emph{anchor compatibility constant}
\begin{align*}
&\hat{\phi}^{2}(L,S) := \\
&\min_{\| b_{S} \|_{1} =1 , \| b_{-S} \|_{1} \le L} |S|  \left( \frac{1}{| \mathcal{A}|} \sum_{a \in \mathcal{A}} \frac{1}{n_{a}} \sum_{i=1}^{n_{a}}  \left( (\X_{i,\bullet}^{(a)} - \overline{\X}^{(a)})  b \right)^{2}  + \frac{\gamma}{| \mathcal{A}|} \sum_{a \in \mathcal{A}} ( \overline{\X}^{(a)} b)^{2}  \right).
\end{align*}
To proceed, we need a lower bound on the compatibility constant $\hat{\phi}^{2}(L,S^{*})$ for $S^{*} := \{k : b_{k}^{\gamma} \neq 0\}$, the active set of $b^{\gamma}$. Note that for all $S$
\begin{equation*}
  \hat{\phi}^{2}(L,S) \ge  \min(\gamma,1)  \min_{\| b_{S} \|_{1} =1 , \| b_{-S} \|_{1} \le L}  \frac{|S|}{| \mathcal{A}|} \sum_{a \in \mathcal{A}} \frac{1}{n_{a}} \sum_{i=1}^{n_{a}}  \left( \X_{i,\bullet}^{(a)} b \right)^{2}.
\end{equation*}

For $|\mathcal{A}| =1$ the quantity on the right corresponds to the
ordinary compatibility constant in high-dimensional linear regression
\citep{van2016estimation}. The anchor compatibility constant can be
bounded analogously as the ordinary compatibility constant, see e.g.\
\citet{van2016estimation}.

When presenting asymptotic results as both $d = d_n > n \to
  \infty$, we  allow that the set $\mathcal{A}$, the shift matrix $\mathbf{M}$,
  the target quantity $b^{\gamma}$ and the structural equation model change for varying $n$.
\begin{theorem}\label{thm:finite-sample-bound-1}

Consider the model in \eqref{eq:32} and assume that $\varepsilon$ is
multivariate Gaussian. Moreover, assume that
$(\X_{i,\bullet}^{(a)},\Y_{i}^{(a)})$,  $i=1,\ldots,n_{a}$, are
i.i.d. random variables that follow the distribution of $(X,Y)|A=a$ under $\Ptrain$. Fix
$\gamma > 0$
and assume that $\hat \phi^{2}(8,S^{*})
\ge c$ for some constant $c>0$ with probability
$1-\delta$, and that $S^{*} \neq \emptyset$.
  Choose $t \ge 0$ such that
\begin{equation*}
  |S^{*}|^{2} (t+\log(d)+\log(|\mathcal{A}|))/n_{\mathrm{min}} \le c',
\end{equation*}
for some constant $c'>0$. Then, for $\lambda \ge C \sqrt{ (t + \log(d)+ \log( | \mathcal{A}|))/n_{\mathrm{min}}}$, with probability exceeding $1-10 \exp(-t)-\delta$,
\begin{equation*}
   R(\hat b) \le \min_{b}  R(b)  + C' \lambda^{2} |S^{*}|,
\end{equation*}
where the constants $ C,C' < \infty$ depend on $\max_{k} \mathrm{Var}(X_{k})$, $ \mathrm{Var}( Y - X^\intercal b^{\gamma}) )$,
$\max_{a \in \mathcal{A}} \|\Etrain[X | A=a] \|_{\infty}$, $\max_{a \in \mathcal{A}} |\Etrain[Y - X^\intercal b^{\gamma} | A=a] | ) $, $\gamma$, $c$ and $c'$. The variances are meant with respect to the measure $\Ptrain$.
\end{theorem}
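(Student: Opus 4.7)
The plan is to follow the classical Lasso oracle inequality strategy, adapted to the anchor-regression quadratic form. Write $\hat R$ for the empirical anchor loss (the first two terms in the definition of $\hat b$, without the $\ell_1$ penalty) and observe that it is quadratic in $b$: $\hat R(b) - \hat R(b^\gamma) = (b - b^\gamma)^\intercal \hat\Sigma (b - b^\gamma) - 2 \hat g^\intercal (b - b^\gamma)$, where $\hat\Sigma$ is the empirical quadratic form built from the within-group centered covariates and the between-group means (reweighted by $\gamma/|\mathcal{A}|$), and $\hat g$ is the empirical analogue of the negative population score at $b^\gamma$, i.e., an inner product of the residuals $\varepsilon^*_i := Y_i - X_i^\intercal b^\gamma$ with the same transformed covariates. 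By optimality of $\hat b$, the basic inequality $\hat R(\hat b) + 2\lambda\|\hat b\|_1 \le \hat R(b^\gamma) + 2\lambda\|b^\gamma\|_1$ yields $(\hat b - b^\gamma)^\intercal \hat\Sigma (\hat b - b^\gamma) \le 2 \hat g^\intercal(\hat b - b^\gamma) + 2\lambda(\|b^\gamma\|_1 - \|\hat b\|_1) \le 2\|\hat g\|_\infty\|\hat b - b^\gamma\|_1 + 2\lambda(\|b^\gamma\|_1 - \|\hat b\|_1)$.

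First I would control $\|\hat g\|_\infty$. Under the Gaussian SEM, $(X,Y)\mid A=a$ is multivariate Gaussian, so within-group residuals $\X_{i,k}^{(a)} - \overline{\X}_k^{(a)}$, group means $\overline{\X}_k^{(a)}$, and centered residuals $\varepsilon^*_i - \Etrain[\varepsilon^*|A = a]$ are all Gaussian with variances controlled by the quantities appearing in the theorem: $\mathrm{Var}(X_k)$, $\mathrm{Var}(Y - X^\intercal b^\gamma)$, $\max_a\|\Etrain[X\mid A=a]\|_\infty$, and $\max_a|\Etrain[Y-X^\intercal b^\gamma\mid A=a]|$. Each coordinate of $\hat g$ is a linear combination of bilinear Gaussian expressions, so Hanson--Wright / Gaussian chaos tail bounds combined with a union bound over the $d$ coordinates and the $|\mathcal{A}|$ levels yield $\|\hat g\|_\infty \le C\sqrt{(t + \log d + \log|\mathcal{A}|)/n_{\min}}$ with probability at least $1 - 9\exp(-t)$. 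Hence for $\lambda \ge 2\|\hat g\|_\infty$, the standard Lasso cone argument applies: splitting $\hat b - b^\gamma$ by its support $S^*$ and using $\|b^\gamma\|_1 - \|\hat b\|_1 \le \|(\hat b - b^\gamma)_{S^*}\|_1 - \|(\hat b - b^\gamma)_{-S^*}\|_1$ puts the error into the cone $\|v_{-S^*}\|_1 \le 8\|v_{S^*}\|_1$ on which the anchor compatibility assumption $\hat\phi^2(8,S^*) \ge c$ applies.

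Invoking compatibility gives $\|(\hat b - b^\gamma)_{S^*}\|_1^2 \le |S^*|\cdot (\hat b - b^\gamma)^\intercal \hat\Sigma(\hat b - b^\gamma)/c$. Substituting this into the basic inequality and solving the resulting quadratic in $\|\hat b - b^\gamma\|_{\hat\Sigma}$ produces the empirical-risk bound $(\hat b - b^\gamma)^\intercal \hat\Sigma(\hat b - b^\gamma) \le \tilde C\lambda^2 |S^*|/c$ and, as a by-product, the $\ell_1$-bound $\|\hat b - b^\gamma\|_1 \le \tilde C'\lambda|S^*|/c$. To pass from the empirical to the population risk, I use $R(\hat b) - R(b^\gamma) = (\hat b - b^\gamma)^\intercal \Sigma(\hat b - b^\gamma)$ (since $b^\gamma$ is the population minimizer) and the splitting $(\hat b - b^\gamma)^\intercal \Sigma(\hat b - b^\gamma) = (\hat b - b^\gamma)^\intercal \hat\Sigma(\hat b - b^\gamma) + (\hat b - b^\gamma)^\intercal(\Sigma - \hat\Sigma)(\hat b - b^\gamma)$, where the cross term is bounded by $\|\Sigma - \hat\Sigma\|_\infty \|\hat b - b^\gamma\|_1^2$. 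Gaussian concentration of second moments gives $\|\Sigma - \hat\Sigma\|_\infty \le C''\sqrt{(t + \log d + \log|\mathcal{A}|)/n_{\min}}$ on an additional event of probability at least $1 - \exp(-t)$, and the condition $|S^*|^2(t + \log d + \log|\mathcal{A}|)/n_{\min} \le c'$ ensures that this residual term is of smaller order than $\lambda^2|S^*|$, yielding the claimed bound.

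The main obstacle is the careful bookkeeping in the concentration step, because the anchor loss mixes two quadratic forms with very different scaling: the within-group term averages over $n_a$ samples, so individual terms have variance of order $1/n_a$, while the between-group term involves products of group means $\overline{\X}^{(a)}$, which already have variance of order $1/n_a$, and then gets multiplied by $\gamma$. Tracking the unbalanced group sizes carefully so that only $n_{\min}$ appears in the final rate, and absorbing the constants depending on $\max_a\|\Etrain[X\mid A=a]\|_\infty$ and $\max_a|\Etrain[Y - X^\intercal b^\gamma \mid A=a]|$ into $C$ and $C'$, requires patience but no fundamentally new ideas beyond the Gaussian chaos / Hanson--Wright toolbox available in \citet{van2016estimation}.
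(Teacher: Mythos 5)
Your proposal follows essentially the same route as the paper: expanding the excess risk as a population quadratic form around $b^\gamma$, concentrating the empirical score at $b^\gamma$ (the paper's $z^*$) and the empirical design (in elementwise sup-norm, paying $\|\hat b-b^\gamma\|_1^2$), invoking the anchor compatibility constant through the standard cone argument to bound the empirical quadratic form and the $\ell_1$-error, and using the condition $|S^*|^2(t+\log d+\log|\mathcal{A}|)/n_{\min}\le c'$ to absorb the empirical-to-population transfer term. The only cosmetic difference is that you re-derive the Lasso oracle inequality directly, whereas the paper obtains it by a change of notation from Theorem~2.2 of \citet{van2016estimation}; the concentration steps correspond to the paper's auxiliary lemmas and your plan is correct.
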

  There are no fundamental issues that prevent the derivation of similar
  results for continuous anchors. The constant $8$ in the anchor compatibility constant $\hat \phi^{2}(8,S^{*})$ does not represent a theoretically meaningful critical value, it was chosen in an ad-hoc fashion to simplify the result.

Under the assumptions mentioned above, if we choose $\lambda \asymp \kappa  C \sqrt{ (t + \log(d)+ \log(|\mathcal{A}|))/n_{\mathrm{min}}} $ for $\kappa > \sqrt{2}$, $ t = \log(d)$ and assume that $\delta \rightarrow 0$, we obtain the following asymptotic result. For $d,n \rightarrow \infty$, with probability going to one,
\begin{equation*}
  R(\hat b) - \min_{b}   R(b) = \mathcal{O} \left(\frac{|S^{*}|(\log(d)+ \log(|\mathcal{A}|))}{n_{\mathrm{min}}} \right).
\end{equation*}
As $\hat b$ coincides with the Lasso for $\gamma=1$ and $|\mathcal{A}|=1$, it is worthwhile to compare this bound to risk bounds of the Lasso. The excess predictive risk of the Lasso in a comparable setting with appropriate choice of $\lambda$ is of the order  $\mathcal{O} \left( |S^{*}| \log(d)/n \right)$, see, e.g., \citet[][Chapter 6]{buhlmann2011statistics}. Hence the risk bounds will be of comparable order as long as $n/n_{\text{min}}$ is bounded.

\section{Numerical examples}\label{sec:real-world-example}

We provide two numerical examples. The first example shows how anchor regression can be used to improve replicability across perturbed data. In the second example, we discuss a prediction problem under distributional shifts. The code is available on \url{github.com/rothenhaeusler}.
\subsection{Genotype-tissue expression}\label{sec:genotype-tiss-expr}
The data was obtained from the Genotype-Tissue Expression (GTEx)
  portal \citep{gtex}.
  One of the GTEx datasets contains gene expression data from 53 tissues of 714
human donors,
in total comprising $n=11688$ observations of
$d=12948$ genes.

These samples were collected postmortem.  Gene expressions are subject to
various types of heterogeneity.
They vary not only between humans but also between different
tissues and individual cells. 13 out of the 53 tissues contain more than
300 observations. We conducted our analysis on these 13 tissues.

We will compare features that are relevant for prediction on one tissue with the features that are relevant for prediction on another tissue.  Our goal is to find relevant features that are not particular to the specific tissue at hand, but can also be found
  (replicated) on the other tissues.  Due to the heterogeneity between
  the tissues, this is a challenging task. The response variable $Y$ is the expression  of a target gene and the covariates $X$ are the expressions of all other
  genes. Mathematically, we associate with $y \in \{1,\ldots ,d\}$ the gene
  index of the target
  variable and $x = \{1,\ldots ,d\} \setminus y$ the gene indices of the
  expression covariates.

 For each tissue, the gene expressions and additional covariates
 are available.  These covariates contain geno-typing principal components,
 PEER factors, sex and genotyping platform. The geno-typing principal
 components and PEER factors 
(which are constructed from covariates
and gene expressions)
 account for some (but not
 all) of the
 confounding sources of expression variation, such as batch effects, environmental influences and sample history \citep{stegle2012using}.
Originally, it has been suggested to include the PEER factors when regressing
gene expression on genotype. Here, we use them  in an analysis of co-expression,
in spirit similarly to \citet{furlotte2011} or \citet{Steglenips}.
We will use these additional covariates as the anchor
 variables\footnote{From a theoretical standpoint, using the tissues as anchor is a reasonable choice as well. However, the empirical conditional expectations
     of each gene expression given the tissues is zero. The gene
   expressions have been normalized within each tissue and hence using the
   tissues as the anchor variable is not  meaningful for this dataset.}. 
We consider combinations of biological entities, and the PEER factors are
partially computed from the gene expressions.
Therefore, strictly speaking, the assumptions in Section~\ref{sec:anchor-stability} are not satisfied.
Assuming, however, that these PEER factors and geno-typing principal
components
are correlated with confounding sources of variation, 
using anchor stability with these proxy variables as anchor may 
still increase 
replicability of feature selections across data sets. Note that using anchor stability is justified even in cases where anchors are endogeneous, see 
the discussion in Section~\ref{sec:anchor-stability}  and the
corresponding theorem in the Appendix,
Section~\ref{sec:gener-vers-theor}.

\subsubsection{Improved replicability with stable anchor regression}
The goal is to investigate whether features that are relevant for prediction on one tissue are also relevant for prediction on other tissues. 
More specifically, we compute and rank
variables using the Lasso  and penalized \emph{anchor regression}
 on one specific tissue $t$. Then, we check whether the discoveries can also be replicated on the other tissues $t' \neq t$.

 How should we rank the covariates in an \emph{anchor regression} framework? By the discussion below Theorem~\ref{thm:stabcaus}, anchor stability is potentially a positive indicator for distributional replicability.
This suggests that ranking by anchor stability should improve replicability across heterogeneous domains of the data set. In cases where the anchor is only weakly correlated with the covariates, estimation of $b^{\gamma}$ will be unstable for $\gamma \rightarrow \infty$.
Thus, in the following, we do not test whether the coefficients are
 invariant across $\gamma \in [0,\infty)$ but check whether the
 individual \emph{anchor regression} coefficients are bounded away
 from $0$ for $\gamma \in [0,1]$. This can be seen as a weak form of
 anchor stability. 

Consider a fixed tissue $t$. For the \emph{anchor regression} method, we compute
\begin{equation}\label{eq:anchor-ranking}
  a_{y,k,t} := \min_{\gamma \in [0,1]}| \hat b_k^{\gamma,\lambda} |,
\end{equation}
where $\hat b^{\gamma,\lambda}$
is the $p-1$-dimensional anchor coefficient of a \emph{anchor regression} of
target variable $y \in \{1,\ldots,p \}$ on the other gene expressions
$x =  \{1,\ldots,p \} \setminus \{ y \}$. As regularization parameter
$\lambda$ we use the same as for the Lasso  regression (see
below). We also consider for \eqref{eq:anchor-ranking} the ranges
  $\gamma \in \{[0,0.25],[0,16]\}$ and show the results in Section~\ref{sec:addfig}.

 For comparison, we compute the Lasso  coefficients
\begin{equation}\label{eq:lasso-ranking}
  l_{y,k,t} := | (\hat b_\text{lasso})_k |,
\end{equation}
where $\hat b_\text{lasso}$ is the $p-1$-dimensional Lasso coefficient of a
Lasso regression of target variable $y \in \{ 1, \ldots,d \}$ on all other
variables $x = \{1,\ldots ,d\} \setminus y$, after removing the
effect of the anchor variables. By definition, $\hat b_{\text{lasso}}=  \hat b^{0,\lambda} $, i.e.\ the Lasso coefficient vector coincides with \emph{anchor regression} for $\gamma=0$ which implies $a_{y,k,t} \le l_{y,k,t}$. Hence, any nonzero effect found using anchor regression is also a nonzero effect using the Lasso. However the ranking for the two methods is different.
For both methods, a regularization parameter $\lambda$ has to be
chosen. We use the one from cross-validation as implemented in the function
\texttt{cv.glmnet} in the \texttt{R}-package \texttt{glmnet}. To make the methods comparable, this regularization parameter was also used for the \emph{anchor regression} method.

 We evaluate how many of the largest effects found by stable anchor
   regression or Lasso can be replicated on
 another tissue. The results are depicted in
 Figure~\ref{fig:gtexcomparison}. The black solid line depicts how many of
 the $K=1,\ldots,20$ largest effects $l_{y,k,t}$ are also among the
   $K$ largest effects $l_{y,k,t'}$ on another tissue $t' \neq t$ for a fixed
 target $y$ (and then averaged over $y$, see below). Analogously,
 the red dashed line shows how many of the $K$
 largest effects $a_{y,k,t}$ are also among the $K$ largest effects
 $l_{y,k,t'}$ on a tissue $t' \neq t$. Finally, the green dotted line shows
 how many of the $K$ largest effects $a_{y,k,t}$ are also among the
   $K$ largest effects $a_{y,k,t'}$ on a tissue $t' \neq t$. The results are
 summed over all choices of $t' \neq t$ and averaged over 200 random
 choices of $y \in \{1,\ldots,12948\}$.

 Both anchor stable and Lasso methods are better than random guessing.
 Ranking by anchor stable regression results in improved
 replicability across tissues. 
Note that this is a challenging data set and
the predictive power among genes is small: the average $R^2$ for a Lasso
run estimated and evaluated on disjoint parts of one tissue is .37. The
average $R^{2}$ for a Lasso run estimated on one tissue and evaluated on
another tissue is slightly negative. In Section \ref{sec:addfig}, we also discuss the degree of
  replicability for the parameter $b^{\rightarrow \infty}$. 

\begin{figure}
  \begin{center}
  \includegraphics[scale=0.5]{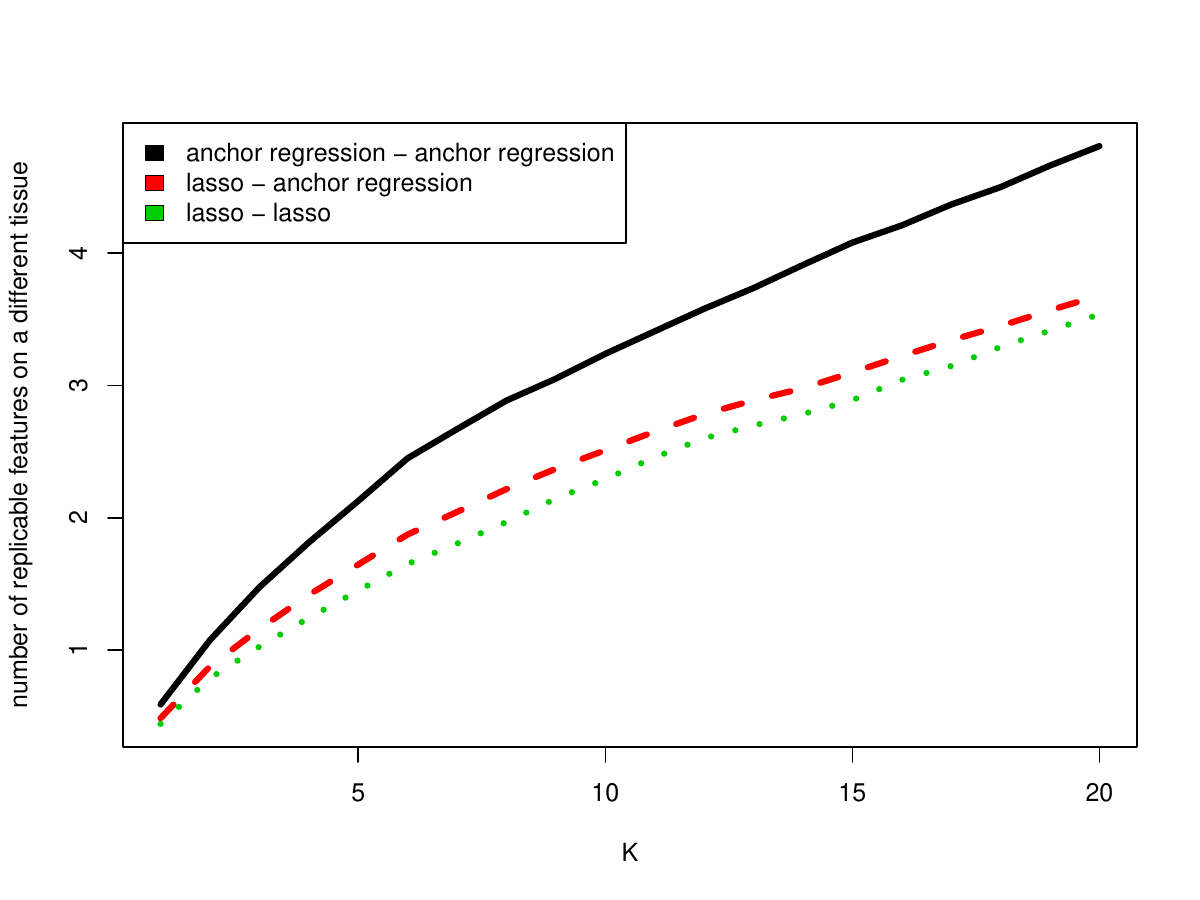}
  \end{center}
  \caption{Replicability of variable selection in GTEx
      data. Plotting how many of the $K \in \{1,\ldots ,20\}$ top-ranked
    features found by \emph{anchor regression} and Lasso  on one tissue $t$ are  also one of the $K$ top-ranked features on another
    tissue $t'$. The results are summed over all
    other 
    tissues $t' \neq t$, averaged over all tissues $t$ and
    averaged over 200 random choices of $y$, and they are plotted
    as $y$-coordinates.
    For \emph{anchor regression} the ranking is according to
      \eqref{eq:anchor-ranking}, and for 
      Lasso according to \eqref{eq:lasso-ranking}. The
      legend describes the method used on one tissue $t$ and the method used on another
      tissue $t'$. \emph{Anchor regression}
      exhibits the highest degree of
      replicability.}\label{fig:gtexcomparison}
\end{figure}

\subsection{Bike sharing data set}\label{sec:bike-sharing-data}

The data set is taken from the UCI machine learning repository \citep{bikedataset,Dua:2017}. It contains $n=17379$ hourly counts of bike rentals from 2011 to 2012 of the Capital bike share in Washington D.C. The goal is to predict bike rentals (variable cnt) using weather data reliably across days. As the variable cnt is a count, a square-root transformation was carried out. The effect of categorical variables, for which shift interventions are not meaningful
(this includes the variables working day, weekday, holiday), was removed in a pre-processing step. While we generally recommend removing the effect of variables that cannot be shifted, in this particular example the pre-processing step makes no discernible difference in the resulting plot, see Figure~\ref{fig:pre-processing} in the Appendix.
The data set contains the numerical covariates temperature, feeling temperature, humidity and windspeed. The variable hour is nested within the variable ``date''. We will first conduct the analysis ignoring the variable ``hour'' as this application is closest to Theorem~\ref{thm:anchor-regression}, Lemma~\ref{lemma:interpr-anch-regr} and Lemma~\ref{lemma:interpr-anch-regr-discr}. In practice, one would also want to include ``hour'' as a predictor in the model. We discuss this case further below.

There are large fluctuations in the usage of bikes that cannot be explained
by weather data alone \citep{bikedataset}. Instead of using the discrete
variable 'date' for prediction, we use it as an anchor $A$.  
More detailed, the anchor variable is discrete with one level per
  day.

This choice of anchor variable allows us to investigate the
performance of the algorithm in a setting with strong
heterogeneities. The goal is to predict the count of bike rentals in a
reliable fashion using the covariates temperature, feeling
temperature, humidity and windspeed.

As evaluation metric, we consider quantiles of the
conditional mean squared error given the anchor variable. Intuitively speaking, we want to train a prediction rule that works reliably across days. Practically, this means that for each fixed day, we average over the prediction loss and then compute quantiles across days.   The quantiles
of the conditional squared error
$\mathbb{E}[(Y-X^\intercal b)^{2} |A]$ are a proxy for the right-hand side of
equation~\eqref{eq:27} being the worst case risk across perturbations of a
certain level, cf. Lemma~\ref{lemma:interpr-anch-regr-discr} in the Appendix.
The data was split into $5$ consecutive blocks. The estimator was trained on $4$ of the $5$ blocks and tested on the left-out block. Results are averaged over the five possible train-test split.
Quantiles of the daily averaged squared error on the test data set $\hat{\mathbb{E}}_{\text{test}}[(Y-X^{\intercal} \hat{b}^\gamma)^{2} |A]$,
are depicted in Figure~\ref{fig:cvxquantiles}.

The optimal choice of $\gamma$ as evaluated on the test data set as a function
of the quantile and the corresponding predictive performance can be found
in Figure~\ref{fig:optgamma}. This motivates choosing $\gamma$ by minimizing quantiles of the loss on held-out data. We describe this procedure in more detail below. 
Figure~\ref{fig:cvxquantiles} shows that for small quantiles, small values
of $\gamma$ are slightly preferred, while for quantiles close to one, large values of $\gamma$ clearly outperform smaller values. This is in line with the theory
presented in Section~\ref{sec:optim-pred-perf}.

However, as the direction and strength of the perturbations usually also
changes to some extent between training and test data set we do not
recommend simply using $\lim_{\gamma \rightarrow \infty} \hat
b^{\gamma}$. In practice, we do not advise to choose $\gamma$ based on
Lemma~\ref{lemma:interpr-anch-regr} or Lemma~\ref{lemma:interpr-anch-regr-discr} as the interplay of the penalization parameter and quantiles of $\mathbb{E}[(Y-X^\intercal b^{\gamma})^{2} |A]$ is more
involved for non-Gaussian distributions.
 Instead, we recommend choosing an optimal $\gamma$ based on cross-validation.

 The cross-validation approach (as used in Figure~\ref{fig:optgamma}) proceeds as follows. First, choose a quantile $\alpha$ (for example $\alpha=90\%$). In each of the folds, the data is split in a training data set and a test data set, such that each level of the anchor variable only appears in one of the data sets. Then, for varying $\gamma$, compute $\hat b^{\gamma}$ on the training data set and estimate
the $\alpha$-quantile of $\mathbb{E}[(Y-X^\intercal b^{\gamma})^{2}|A]$ on the test data set.  After averaging the estimated quantiles over the folds, choose $\gamma$ such that the chosen quantile is
minimized. For this approach to work, we have to make an assumption that heterogeneities of the future data generating process are in some sense similar to the heterogeneities observed in the training data set. This assumption is made precise in Lemma~\ref{lemma:interpr-anch-regr-discr} in the Appendix for discrete anchors.  
\begin{figure}
\begin{center}
\includegraphics[scale=0.4]{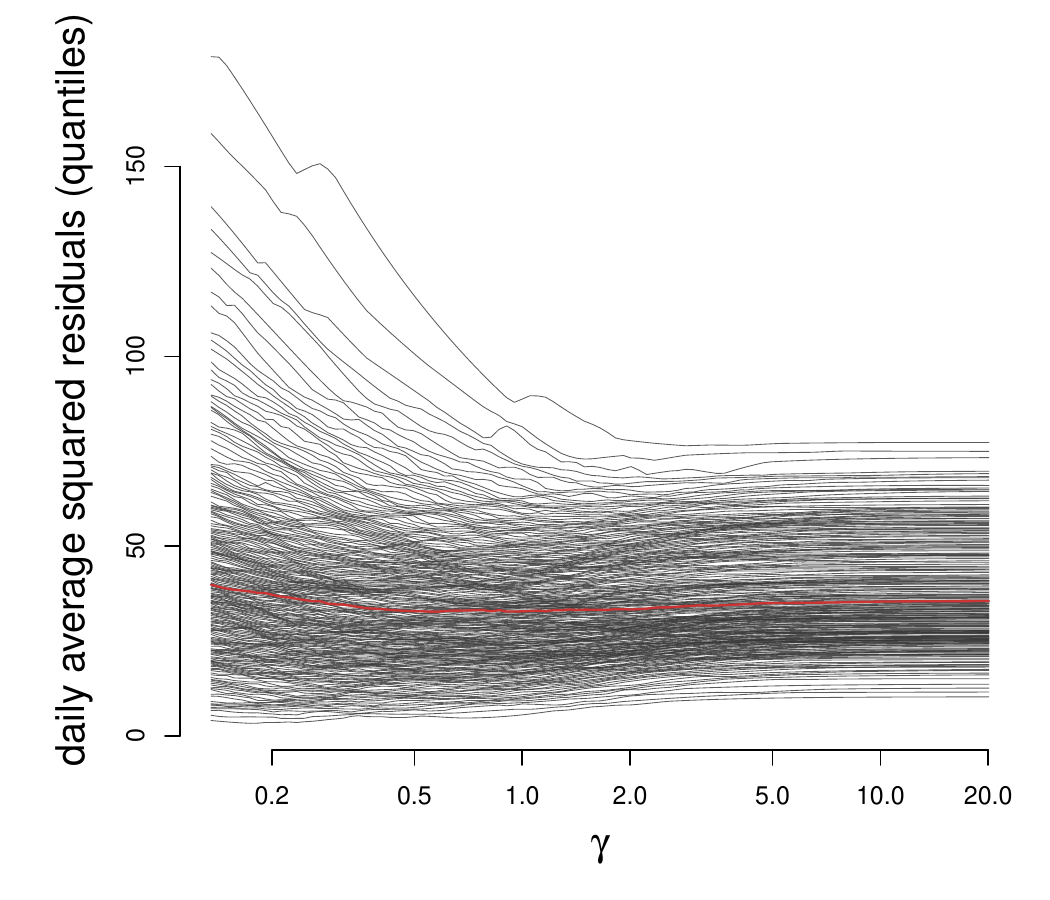}
\end{center}
\caption{Daily average squared residuals $\hat{\mathbb{E}}_{\text{test}}[(Y-X^{\intercal} \hat b^{\gamma})^{2}|A]$ as a function of $\gamma$. Each line corresponds to a quantile of $\hat{\mathbb{E}}_{\text{test}}[(Y-X^{\intercal} \hat b^{\gamma})^{2}|A]$. The quantiles are chosen in the set $\{0.05,0.01,\ldots,0.995\}$, with the median marked in red. For growing $\gamma$, the upper percentiles of $\hat{\mathbb{E}}_{\text{test}}[(Y-X^{\intercal} \hat b^{\gamma})^{2}|A]$
are decreasing while the lower percentiles are slightly increasing. This is in line with the theory presented in Section~\ref{sec:optim-pred-perf}. The distribution of bike rentals is expected to change from day to day. For growing $\gamma$, the upper percentiles of the loss are reduced, i.e., predictions are increasingly reliable across days.
A comparison to OLS with $\gamma = 1$ is given in the right panel of
  Figure \ref{fig:optgamma}.
}
\label{fig:cvxquantiles}
\end{figure}

\begin{figure}
\begin{center}
\includegraphics[scale=0.6]{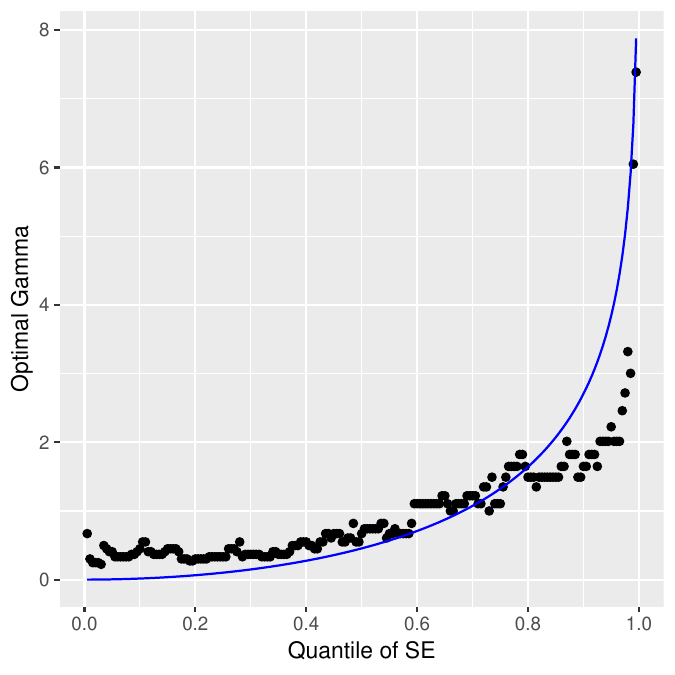}
\includegraphics[scale=0.6]{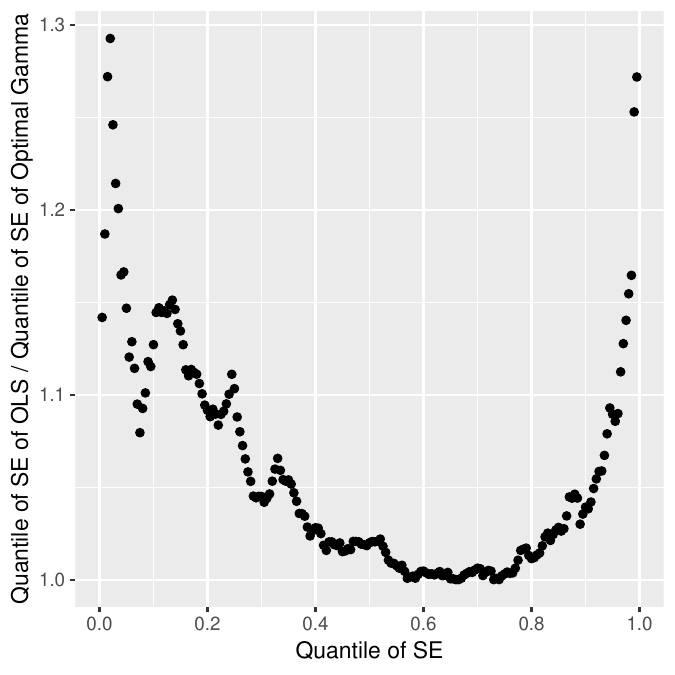}
\end{center}
\caption{
Optimal choice of $\gamma$ and predictive performance of
  $\emph{anchor regression}$ for varying quantiles of the squared error on
  the bike-sharing data set. On the left-hand side, the optimal choice of
  $\gamma$ is depicted as a function of quantiles of the daily averaged error, $\hat{\mathbb{E}}_{\text{test}}[(Y-X^{\intercal} \hat{b}^{\gamma})^{2} |A]$. The blue line shows the theoretically optimal
  choice of $\gamma$ using Lemma~\ref{lemma:interpr-anch-regr}. The black dots show the optimal choice of $\gamma$ as evaluated on the test data set.   For
  growing quantiles, the optimal choice $\gamma=\gamma_{\text{opt}}$
  increases. For example, $\gamma \approx 0.35$ is optimal for minimizing
  the $5\%$-Quantile of $\hat{\mathbb{E}}_{\text{test}}[(Y-X^{\intercal} \hat
  b^{\gamma})^{2}|A]$. Similarly, $\gamma \approx 2$ is optimal for
  minimizing the $90\%$-Quantile of $\hat{\mathbb{E}}_{\text{test}}[(Y-X^{\intercal}
  \hat{b}^{\gamma})^{2} |A]$. On the right-hand side, the performance with
  the optimal estimated $\gamma$ is shown in terms of quantiles of
  $\hat{\mathbb{E}}_{\text{test}}[(Y - X^{\intercal} \hat{b}^{\gamma})^{2}|A]$, relative to
  ordinary least squares (OLS). For example, for the
  $90\%$-quantile, the optimal choice of $\gamma$ leads to a
  $10\%$-improvement of $\emph{anchor regression}$ compared to ordinary
  least squares. The biggest improvements compared to OLS are obtained for both very small and very large quantiles. The quantiles of $\hat{\mathbb{E}}_{\text{test}}[(Y-X^{\intercal} \hat{b}^{\gamma})^{2}|A]$ were estimated using 5-fold cross-validation.}\label{fig:optgamma}
\end{figure}

As discussed above, the application above is close to the theory presented in Section~\ref{sec:pop_anchor}, but in practice one would also want to include the predictor ``hour''. 
As an alternative experiment to the one shown above, we run
a regression of the target variable on the predictor ``hour'' and run anchor regression on the residuals. For the final prediction, we then 
add the predictions from both models. The variable hour differs from the other variables in the sense that it is nested within the anchor date. Thus, building the overall model in such a hierarchical fashion is not supported by our current theory. 
  The results can be found in the Appendix: Figure~\ref{fig:hour} in Section~\ref{sec:add-hour} 
is equivalent to Figure~\ref{fig:cvxquantiles}, but \emph{anchor regression} is run on the residuals after regressing out the effect of ``hour''. For large quantiles of the conditional loss, $\gamma \gg 1$ outperforms $\gamma < 1$, but the relationship is not monotonous. Figure~\ref{fig:optgamma_hour} in Section~\ref{sec:add-hour} of the Appendix is similar to Figure~\ref{fig:cvxquantiles} but with the modified anchor regression procedure described above. The anchor regression procedure performs better than ordinary least-squares $(\gamma = 1)$ for all considered quantiles.

\section{Practical guidance}

In this section we summarize our results and give high-level
  guidance for using \emph{anchor regression}, based on our empirical experience
  and theoretical results. 
 
\paragraph{Possible Applications.} 
  \emph{Anchor regression} can be applied 
in settings, where we are given data from a target 
variable $Y$ and covariates $X$ and
are interested in generalizing across
heterogeneous data sets.
Examples of such distribution changes include batch effects, population
shifts, and heterogeneity across time or locations. In
the case of prediction, 
the approach aims to achieve robust predictions
across data sets. 
\emph{Anchor regression} is optimal 
if the
data sets differ by (restricted)
shift interventions.
For the goal of parameter estimation, \emph{anchor regression}
can be used to find features that are invariant 
across a (restricted) set of distributions, see Section~\ref{sec:anchor-stability}.
Thus,  the approach might  help to increase
 the replicability of discoveries across data sets. 

\paragraph{Choice of the anchor variable.}
In the case of prediction, the main assumptions are linearity of the system
and exogeneity of the anchor. We recommend to choose the anchor based on
the type of robustness or invariance one aims to obtain. For example, if one intends to
obtain robustness of the prediction rule across locations, we recommend using location as an anchor variable. If the goal is to achieve robustness across time, we recommend using discretized time windows as an anchor variable.
In our theory, this recommendation is justified by
Theorem~\ref{thm:anchor-regression}. Different choices of the anchor
correspond to different matrices $\mathbf{M}$, which in turn provide
protection against different distributional shifts.

In the case of estimation, the exogeneity assumption for the anchor
variable can be dropped. Details can be found in
Section~\ref{sec:anchor-stability} and in
Section~\ref{sec:gener-vers-theor} in the Appendix. In that case, the anchor should be chosen such that it affects the covariates of interest as much as possible. 

\paragraph{Choice of the regularization parameter.} When using \emph{anchor regression} for prediction, one has to
choose a regularization parameter $\gamma$. If
possible, this should be done 
based on subject matter knowledge. For example, if one expects
perturbations on future data sets to be at most 1.5 times as
large as on the training data sets, $\gamma = 1.5$ is a sensible choice. 
If the anchor variable has many categorical levels, it is also
possible to choose $\gamma$ using some form of leave level out
cross-validation. This approach is described in
Section~\ref{sec:bike-sharing-data}. For data sets where the above
considerations do not apply, we believe that $\gamma = 2$ is a good default
choice. 

For screening via anchor stability, in theory it is sufficient to test
whether the two endpoints $\gamma=0$ and $\gamma=\infty$ of \emph{anchor regression} agree, see
Proposition~\ref{prop:constant}. In cases where the anchor is only weakly associated with the covariates, estimation of $b^{\infty}$ will be unstable. Thus, in practice we recommend to screen based
on a weak form of anchor stability, as in
equation~\eqref{eq:anchor-ranking}. 
That choice can be considered a heuristic, as its theoretical implications are yet to be investigated.

\paragraph{Limitations.}

All extrapolation statements of \emph{anchor regression}
rely on 
the assumption of linearity.
Using \emph{anchor regression} for prediction generally does not guarantee
protection against ``black swan events''.  More specifically, \emph{anchor
  regression} 
  is not leading to robust
  prediction
  when the heterogeneity between the data sets is different from the
  restricted set of shift interventions that have been observed on the
  training data sets.

For example, in Theorem~\ref{thm:anchor-regression}, the set $C^{\gamma}$ contains shifts that lie in the span of $\mathbf{M}$, as opposed to shifts in arbitrary directions. In cases where distribution shifts are 
complex, in the sense that distributions change arbitrarily between data sets, 
neither \emph{anchor regression} nor any other method can provide reliable predictions. If the anchor does not shift any distributions, i.e.\ if the distribution of $(X,Y)$ is constant across values of $A$ then there is no benefit from using the \emph{anchor regression} approach. Note however, that in this case there is also little harm from using the \emph{anchor regression} approach as the penalty term in equation~\eqref{eq:27} will be close to zero.

\section{Discussion and outlook}

  We have introduced \emph{anchor regression}, a regularization approach
  for fitting linear models. We have shown that this approach optimizes
  worst case prediction risk over a class of perturbations and that it also
  leads to improved replicability of variable selection across different
  perturbed heterogeneous datasets. The methodology has relations to
  invariance properties from causality and the concrete proposed procedure of
  \emph{anchor regression} interpolates between three common statistical
  estimation schemes, namely partialling out (i.e., adjusting for)
  exogeneous variables, ordinary least squares and two-stage least squares
  from instrumental variables regression (with exogeneous instruments).

The penalty in \emph{anchor regression} corresponds to the change in
prediction loss under certain perturbations. More specifically, these
perturbations are modelled as random or deterministic shift interventions
and are estimated
from a heterogeneous training data set. We have explored the prediction
behavior, both in terms of size and direction of the considered
perturbations. When considering the regularization path of \emph{anchor
  regression} as a function of the penalty or regularization parameter, we
prove some stability and replicability for variable importance or variable
selection over a range of perturbations, i.e., a range of potentially new
heterogeneous data sets.
Thus, \emph{anchor regression} also contributes to
much desired improved replicability of variable importance.
We also derived
a finite sample bound for worst case prediction in the
high-dimensional case.

We consider the behavior of \emph{anchor regression} on real-data applications, in terms of replicability of variable selection and prediction on new
potentially perturbed data. We believe that it is worthwhile to explore
penalization schemes that exploit heterogeneities that occur in the
training distribution and lead to robustness and replicability on new
perturbed test data, i.e., generalizing to new unobserved heterogeneity.
Such a regularization allows to
explicitly balance the tradeoffs between predictive performance on
perturbed and unperturbed data sets, while avoiding
the loss in prediction accuracy that is incurred when using more
conservative approaches (e.g., causal parameters).

Looking ahead, there are
some avenues which we think are worthwhile to pursue. In the following, we
outline two directions that seem particularly promising.

\paragraph{Beyond shift interventions.}

Instead of considering shift interventions, it may be interesting to look
at penalty schemes that arise from other types of perturbations, such as
noise, edge functions and do-interventions. Depending on the
application, such interventions may be more appropriate than
shift-interventions. In this light, structural equation modelling can serve
as a scheme to generate and explore new types of perturbation penalties.
Furthermore, it allows to obtain optimality statements to better understand
the tradeoffs between perturbation stability and predictive performance.

\paragraph{Nonlinear models.}
For the \emph{anchor regression} method to be practical in a wide range of scenarios, it is important to extend it to non-linear models. Using a bias-variance decomposition, with $P_{A} = \Etrain[\bullet|A]$ the prediction loss of a non-linear function $g(X)$ can be decomposed as
\begin{align*}
  \Etrain[(Y-g(X))^2|A] &= \Etrain[((\mathrm{Id} - P_A) (Y-g(X)))^2 |A] + (P_A(Y-g(X)))^2
\end{align*}
If the conditional variance is constant across strata defined by $A=a$, then the conditional loss simplifies to
\begin{align*}
  \Etrain[(Y-g(X))^2|A] &= \Etrain[((\mathrm{Id} - P_A) (Y-g(X)))^2] + (P_A(Y-g(X)|A))^2.
\end{align*}
This decomposition motivates non-linear \emph{anchor regression}, which we define as the solution to
\begin{equation*}
  g^\gamma := \arg \min_{g \in \mathcal{G}} \Etrain[((\mathrm{Id} - P_A) (Y-g(X)))^2] + \gamma \Etrain[(P_A(Y-g(X)))^2],
\end{equation*}
for an appropriate set of functions $\mathcal{G}$. Qualitatively this
estimator behaves similarly to \emph{anchor regression}. As before, it
interpolates between nonlinear versions of PA, OLS and IV. For $\gamma
\rightarrow \infty$, non-linear \emph{anchor regression} will strive for
invariance in the sense that it tries to keep $\mathbb{E}[(Y-g(X))^2|A]$
constant across all levels of $A$. The set of interventions that
nonlinear \emph{anchor regression} protects against for a fixed $\gamma$ is not as
straightforward to describe as in
Theorem~\ref{thm:anchor-regression}. However, we conjecture that this
estimator behaves similarly to linear \emph{anchor regression} on data sets, in the sense that it potentially improves replicability across heterogeneous
regimes and improves robustness of prediction rules across the strata
defined by $A$. Other non-linear extensions of \emph{anchor regression} and some preliminary empirical evidence can be found in \citet{buhlmann2018invariance}.
We believe that it is a promising avenue to further investigate the behaviour of these and related estimators both in theory and practice.

\section*{Acknowledgements}
We thank Martin Emil Jakobsen for pointing out the link between anchor regression and $k$-class estimators. 
We thank several reviewers for various helpful comments.
DR received funding from the ONR grant N00014-17-1-2176. PB received funding from the European Research Council under the grant agreement No. 786461 (CausalStats -- ERC-2017-ADG).

\bibliographystyle{plainnat}
\bibliography{bibliography}

\newpage
\section{Appendix}

\subsection{Interpretation of the model class in the cyclic case}\label{sec:interpr-model-class}

If the graph $G$ is cyclic, then the model class in Section~\ref{sec:setting-notation} describes the distribution in an equilibrium state. To see this, let us write
\begin{equation*}
    V_0 = \varepsilon
\end{equation*}
and
\begin{equation*}
V_t = \mathbf{B} V_{t-1} + \varepsilon \text{ for all $t \ge 1$,}
\end{equation*}
where $V = (X,Y,H)^{\intercal}$. If the spectral norm of $ \mathbf{B} $ is strictly smaller than one, then for each $\varepsilon$ we have
\begin{equation*}
    \lim_{t \rightarrow \infty} V_t = \sum_{k \ge 0 } \mathbf{B}^k \varepsilon = (\mathrm{Id} - \mathbf{B})^{-1} \varepsilon.
\end{equation*}
Note that if $\mathbf{B}$ is acyclic then this limit always exists. Analogously one can define the shifted distribution as
\begin{align*}
V_0 &= \varepsilon + v \\
V_t &= (\mathrm{Id}+\mathbf{B}) V_{t-1} \\
\lim_{t \rightarrow \infty} V_t &= \sum_{k \ge 0}  \mathbf{B}^k (\varepsilon+v) = (\mathrm{Id} -\mathbf{B})^{-1} (\varepsilon + v)
\end{align*}
ence, by the definition of $V$, we have $V = \lim_{t \rightarrow \infty} V_t$ and our model describes the distribution of a cyclic causal model in its equilibrium.

\subsection{Sets $C^{\gamma}$ for three examples}\label{sec:three-examples}

In this section we discuss three examples to shed more light on Theorem~\ref{thm:anchor-regression} and the behaviour of \emph{anchor regression}. In particular, the sets $C^{\gamma}$ are discussed for the three simple examples. We will see that $C^{\gamma}$ can contain interventions not only on $X$ but potentially also on $Y$ and $H$. 
 The SEM and graph in each case are given in Example~\ref{fig:cgammaq}.
\begin{example}[Three SEMs and corresponding sets $C^{\gamma}$]\label{fig:cgammaq}
 In each of these SEMs for simplicity we assume that $\varepsilon \sim \mathcal{N}(0,\mathrm{Id}_{3})$ and $A \sim \mathcal{N}(0,1)$.
For (i), the corresponding SEM is $H \leftarrow  \varepsilon_{3}$, $X \leftarrow H + A + \varepsilon_{1}$, $Y \leftarrow 2H +X+ \varepsilon_{2}$. In this example, $C^{\gamma}$ contains interventions on $X$ up to strength $\gamma$, i.e.,
$C^{\gamma} = \{ (t,0,0)^{\intercal} : t^{2} \le \gamma \}$.
For (ii), the corresponding SEM is $H \leftarrow \varepsilon_{3}$, $X \leftarrow H+Y + \varepsilon_{1}$, $Y \leftarrow A+2H + \varepsilon_{2}$. In this example, $C^{\gamma}$ contains interventions on $Y$ up to strength $\gamma$, i.e., $C^{\gamma} = \{ (0,t,0)^{\intercal} : t^{2} \le \gamma \}$.
For (iii), the corresponding SEM is $H \leftarrow A + \varepsilon_{3}$, $X \leftarrow H + \varepsilon_{1}$, $Y \leftarrow 2H + X+ \varepsilon_{2}$. In this example, $C^{\gamma}$ contains interventions on $H$ up to strength $\sqrt{\gamma}$, i.e., $C^{\gamma} = \{ (0,0,t)^{\intercal} : t^{2} \le \gamma \}$.
$C^{\gamma}$ takes more complex forms when $A$ points to several variables. Examples of this phenomenon are discussed in Section~\ref{sec:full-rank-assumption}.\begin{multicols}{3}
\begin{center}
\begin{tikzpicture}[->,>=latex,shorten >=1pt,auto,node distance=1.2cm,
                    thick]
  \tikzstyle{every state}=[draw=black,text=black, inner sep=0.4pt, minimum size=17pt]

\node[state] (Y) {$Y$};
  \node[state] (H) [ left of=Y] {$H$};
  \node[state] (X) [below of=H] {$X$};
  \node[state] (A) [above left of = H] {$A$};

\path  (X)  edge node[below] {1} (Y);
\draw  (H)   edge node[above] {2}  (Y);
\draw  (H)  edge node[right] {1} (X);
\draw (A) edge  node[below left ] {1}(X);

\end{tikzpicture}
(i) \\
\vspace{0.5cm}
$C^{\gamma} = \{(t,0,0)^{\intercal} : t^{2} \le \gamma\}$
\end{center}

\columnbreak

\begin{center}
\begin{tikzpicture}[->,>=latex,shorten >=1pt,auto,node distance=1.2cm,
                    thick]
  \tikzstyle{every state}=[draw=black,text=black, inner sep=0.4pt, minimum size=17pt]

\node[state] (Y) {$Y$};
  \node[state] (H) [ left of=Y] {$H$};
  \node[state] (X) [below of=H] {$X$};
  \node[state] (A) [above left of = H] {$A$};

\path  (Y)  edge node[below] {1} (X);
\draw  (H)   edge node[below] {2}  (Y);
\draw  (H)  edge node[left] {1} (X);
\draw (A) edge  node[above right] {1}(Y);

\end{tikzpicture}
(ii) \\
\vspace{0.5cm}
$C^{\gamma} = \{(0,t,0)^{\intercal} : t^{2} \le \gamma\}$
\end{center}

\columnbreak

\begin{center}
\begin{tikzpicture}[->,>=latex,shorten >=1pt,auto,node distance=1.2cm,
                    thick]
  \tikzstyle{every state}=[draw=black,text=black, inner sep=0.4pt, minimum size=17pt]

\node[state] (Y) {$Y$};
  \node[state] (H) [ left of=Y] {$H$};
  \node[state] (X) [below of=H] {$X$};
  \node[state] (A) [above left of = H] {$A$};

\draw  (H)   edge node[above] {2}  (Y);
\draw  (H)  edge node[left] {1} (X);
\draw (A) edge  node[above right] {1} (H);
\draw (X) edge node[below right] {1}(Y);

\end{tikzpicture}
(iii) \\
\vspace{0.5cm}
$C^{\gamma} = \{(0,0,t)^{\intercal} :  t^{2} \le \gamma \}$
\end{center}
\end{multicols}
\end{example}
Example (i) corresponds to a classic IV setting. Here, we have $\mathbf{M}=(1,0,0)^{\intercal}$. Hence, $C^{\gamma}$ is the
set of interventions on $X$ up to ``strength'' $\gamma$,
i.e., $C^{\gamma}=\{(t,0,0)^{\intercal}: t^{2} \le \gamma \}$. By
Theorem~\ref{thm:anchor-regression}, $b^{\gamma}$ minimizes the
$\ell_{2}$-loss under shift interventions on $X$ up to ``strength''
$\gamma$. Similarly for example (ii): \emph{anchor regression} minimizes
the $\ell_{2}$-loss under interventions on $Y$. In example (iii),
\emph{anchor regression} minimizes the $\ell_{2}$-loss under
interventions on $H$. 
In the following we want to investigate  whether \emph{anchor regression} can achieve predictive stability, i.e., stable predictive performance in these SEMs under strong interventions.
 This question can be
answered by investigating the limit $b^{\rightarrow \infty}= \lim_{\gamma
  \rightarrow \infty} b^{\gamma}$.
  In example (i), we  obtain
$b^{\rightarrow \infty} = 1$. In example (ii), we have $b^{\rightarrow \infty} = 1$ and in example (iii) we have  $b^{\rightarrow \infty} = 3$.
A short calculation shows that the
distribution of $Y-X^\intercal b$ under $\mathbb{P}_{v}$ is invariant under shift interventions on
$X$. Formally,
\begin{equation*}
  Y-X^\intercal b^{\rightarrow \infty} \text{ under $\mathbb{P}_{v}$ has the same distribution for all }
  v=(t,0,0)^{\intercal}.
\end{equation*}
In particular, the MSE $\mathbb{E}_{v}[(Y-X^\intercal b)^{2}]$ is constant under shift interventions on $X$. Similarly in
example (ii), the distribution of $Y - X^\intercal b^{\rightarrow \infty}$ is invariant
under shift interventions on $Y$. And in example (iii), the distribution
of $Y - X^\intercal b^{\rightarrow \infty}$ is invariant under shift interventions
on $H$. This holds for any set of edge coefficients with one of the graph structures as in Example~\ref{fig:cgammaq}. However, for some graphs (for example for the graph that arises from reversing the edge between $X$ and $Y$ in (ii)), the invariance statement above does not hold.

In all examples, $A$ is correlated with $X$. Let $c_{x}$ denote the effect of $A$ on $X$, i.e.\ the regression coefficient when regressing $X$ on $A$. Let $c_{y}$ denote the effect of $A$ on $Y$. Thus, for $b = \frac{c_{y}}{c_{x}}$, the effect of $A$ on the synthetic variable $r = Y - Xb$ is zero and $r$ has invariant distribution under conditioning on $A$. Conditioning on $A$ can be interpreted as certain shift interventions on $(X,Y,H)$. This in turn implies the invariance properties discussed above. Thus, as long as the effect of a one-dimensional anchor variable $A$ on $X$ is non-zero, invariance is attainable.

Summarizing, in these examples, \emph{anchor regression} exhibits constant
predictive performance even under arbitrarily strong shift interventions. In Section~\ref{sec:invariance} we investigate the phenomenon of ``invariance under interventions''.

\subsection{Data-driven invariance}\label{sec:invariance}
In Section~\ref{sec:three-examples} we discussed three examples for which the distribution of $Y-X^\intercal b^{\rightarrow \infty}$ under $\mathbb{P}_{v}$ is invariant under certain shift interventions $v$. Here and in the following, we tacitly assume that the limit $b^{\rightarrow \infty} := \lim_{\gamma \rightarrow \infty} b^{\gamma}$ exists. 

We want to investigate the conditions under which we have invariance. Define $I := \{ b \in \mathbb{R}^{d} : \Etrain[A \cdot(Y-X^\intercal b)]=0\}$.
Then we have the following theorem.
\begin{theorem}\label{thm:invariance}
Assume that the Gram matrix $\mathbb{E}[A A^{\intercal}]$ is positive definite. Then,
\begin{equation*}
b \in I\qquad \iff \qquad  Y - X^\intercal b \text{ under $\mathbb{P}_{v}$ has the same distribution for all } v \in \mathrm{span}(\mathbf{M}).
\end{equation*}
\end{theorem}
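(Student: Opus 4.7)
The plan is to reduce both sides of the biconditional to a single linear-algebra condition, namely $c(b)\mathbf{M}=0$, where $c(b)$ is the row vector of length $d+1+r$ expressing $Y-X^{\intercal}b$ as a linear functional of the reduced form of the SEM. Concretely, using the assumed invertibility of $\mathrm{Id}-\mathbf{B}$, I would write
\begin{equation*}
(X^{\intercal}, Y, H^{\intercal})^{\intercal}=(\mathrm{Id}-\mathbf{B})^{-1}(\varepsilon+v)
\end{equation*}
under $\mathbb{P}_v$, with $\Ptrain$ corresponding to $v=\mathbf{M}A$. Setting $c(b):=(e_Y-b^{\intercal}e_X)(\mathrm{Id}-\mathbf{B})^{-1}$, where $e_X$ and $e_Y$ are the canonical projections onto the $X$- and $Y$-coordinates of the stacked vector, one reads off $Y-X^{\intercal}b=c(b)(\varepsilon+v)$ under $\mathbb{P}_v$, and in particular $Y-X^{\intercal}b=c(b)\varepsilon+c(b)\mathbf{M}A$ under $\Ptrain$.

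Next I would translate $b\in I$ into an algebraic condition on $c(b)\mathbf{M}$. Because $A$ and $\varepsilon$ are independent and $A$ is centered (cf.\ Section~\ref{sec:our-contribution}), the cross term $\Etrain[A\cdot c(b)\varepsilon]$ vanishes, leaving
\begin{equation*}
\Etrain[A(Y-X^{\intercal}b)]=\Etrain[AA^{\intercal}](c(b)\mathbf{M})^{\intercal}.
\end{equation*}
Since $\Etrain[AA^{\intercal}]$ is positive definite by hypothesis, this vanishes iff $c(b)\mathbf{M}=0$, which in turn is equivalent to $c(b)v=0$ for every $v\in\text{span}(\mathbf{M})$.

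Finally, I would tie $c(b)\mathbf{M}=0$ to the distributional statement. In the forward direction, if $c(b)\mathbf{M}=0$ then for any $v\in\text{span}(\mathbf{M})$ we have $Y-X^{\intercal}b=c(b)\varepsilon+c(b)v=c(b)\varepsilon$ under $\mathbb{P}_v$, whose law is independent of $v$ because the law of $\varepsilon$ is fixed. In the converse direction, if the law of $Y-X^{\intercal}b$ under $\mathbb{P}_v$ is the same for every $v\in\text{span}(\mathbf{M})$, then comparing expectations under a deterministic $v$ and under $v=0$ forces $c(b)v=0$ for all such $v$, giving $c(b)\mathbf{M}=0$ and hence $b\in I$.

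I do not expect a substantive obstacle here; the argument is essentially linear algebra plus independence. The main care points are the centering of $A$ (needed to kill the $A$-$\varepsilon$ cross term), the positive definiteness of $\Etrain[AA^{\intercal}]$ (needed to turn the moment identity into $c(b)\mathbf{M}=0$), and the bookkeeping of dimensions when embedding scalar $Y$ and the $d$-vector $X$ inside the $(d+1+r)$-dimensional stacked vector. Random $v$ supported in $\text{span}(\mathbf{M})$ is handled by the same argument, since $c(b)v\equiv 0$ almost surely in the forward direction and the converse only needs deterministic shifts.
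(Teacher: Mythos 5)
Your proposal is correct and follows essentially the same route as the paper: your $c(b)$ is exactly the paper's $w_b^\intercal$, and both arguments reduce $b\in I$ and the distributional invariance to the single condition $w_b^\intercal \mathbf{M}=0$ via the reduced form $(\mathrm{Id}-\mathbf{B})^{-1}(\varepsilon+v)$ and the invertibility of $\Etrain[AA^\intercal]$. The only differences are cosmetic: the paper kills the cross term via $\Etrain[\varepsilon]=0$ and uses a $\mathbf{G}^{1/2}$ factorization where you argue directly from positive definiteness, and your explicit mean-comparison for the converse is if anything slightly more careful than the paper's one-line assertion.
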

Note that if the set $I$ is non-empty, then $b^{\rightarrow \infty}  \in
I$. Hence \emph{anchor regression} will have this invariance property for
$\gamma \rightarrow \infty$ if and only if there exists a $b$ that has this
property.

This invariance can be interpreted as follows. Assume we have an anchor $A \in \{-1,1\}$ that represents data collected from two environments. Data from environment $A=1$ and environment $A=-1$ differ by a shift intervention of $2\mathbf{M}$ on $(X,Y,H)$. The theorem above tells us that for all $b \in I$ the residual distribution (that means, the distribution of $Y-X^\intercal b$) is invariant under $\mathbb{P}_{v}$ with $v = \alpha \mathbf{M}, \alpha \in \mathbb{R}$. In this sense, \emph{anchor regression} with $\gamma \rightarrow \infty$ is invariant with respect to the heterogeneities that are observed in the training distribution (to be more precise, we obtain invariance of the residuals with respect to linear combinations of inhomogeneities in the training distribution, cf. Theorem~\ref{thm:invariance}).

 In the following discussion we will make an assumption that facilitates interpretation of the span of the shift matrix $\mathbf{M}$, i.e.\  of span($\mathbf{M}$). Define $T := \{k : \mathbf{M}_{ k, \bullet }\not \equiv 0\}$ as the rows of $\mathbf{M}$ that are not identically zero. In the following we will refer to $T$ as  \emph{children of $A$}.
Let  the Gram matrix of $(\mathbf{M}A)_{T}$ be positive definite. In the following, we will call this the \emph{full-rank assumption}. Then $\text{span}(\mathbf{M})=\{v : v_{-T} \equiv 0\}$, the set that contains arbitrary interventions on the children of $A$. In particular, for all $b \in \mathbb{R}^{d}$,
\begin{equation*}
  b \in I \qquad \iff \qquad  Y-X^\intercal b \text{ under $\mathbb{P}_{v}$ has the same distribution for all } v \in \mathbb{R}^{d+1+r} \text{ with } v_{-T} \equiv 0.
\end{equation*}
Hence the set $I = \{ b :  \Etrain[A \cdot(Y-X^\intercal b)] = 0 \}$ is exactly the
set of vectors $b$ for which $Y-X^\intercal b$ is invariant under
interventions on the children of $A$. This has consequences for the
interpretation of $b^{\rightarrow \infty} $. If $I$ is nonempty, i.e., if
invariance is attainable, then loosely speaking
\begin{align*}
  b^{\rightarrow \infty} = \argmin_{b} \Etrain[ ( Y-X^\intercal b )^{2}] &\text{ s.t. the distribution of } Y-X^\intercal b \text{ under $\mathbb{P}_{v}$ has invariant distribution  }\\
& \text{under shift interventions on the children of $A$.}
\end{align*}
Note that $\lim_{\gamma \rightarrow \infty }b^{\gamma}$ may exist, even in cases where $\mathbb{P}_{v}$ is not invariant under shift interventions $v \in C^{\gamma}$. Under the assumptions of Theorem~\ref{thm:anchor-regression}, we have
\begin{equation*}
 b^{\gamma} = \arg \min_{b} \sup_{v \in C^{\gamma}} \mathbb{E}_{v}[ (Y-X^\intercal b)^{2}  ].
\end{equation*}
Thus, if $b^{\gamma}$ converges for $\gamma \rightarrow \infty$, $b^{\rightarrow \infty}$ corresponds to the prediction rule that results in the least-growing worst-case prediction loss for $v \in C^{\gamma}$, $\gamma \rightarrow \infty$.

In the next discussion we will give two examples to shed some light on the full-rank assumption.

\subsection{Shape of $C^{\gamma}$}\label{sec:full-rank-assumption}
In the preceding section we saw examples where $A$ has only one child leading to very simple forms of $C^{\gamma}$. In this section we will discuss two slightly more involved examples, with two covariates $(X_{1},X_{2})$ and one hidden confounder $H$.
The examples are depicted in Figure~\ref{full-rank-graph}. Invariance in the sense of Theorem~\ref{thm:invariance} is only achievable for the graph on the right: It can be shown that $I = \emptyset$ for the graph on the left. 
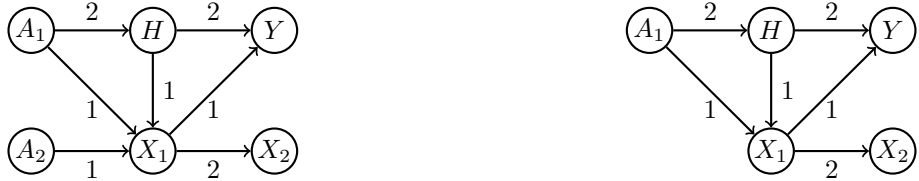
\begin{figure}
\begin{multicols}{2}
\begin{center}
\begin{tikzpicture}[->,node distance=1.6cm,thick]
  \tikzstyle{every state}=[draw=black,text=black, inner sep=0.4pt, minimum size=17pt]

  \node[state] (Y) {$Y$};
  \node[state] (H) [ left of=Y] {$H$};
  \node[state] (X1) [below of=H] {$X_{1}$};
  \node[state] (A) [left of = H] {$A_{1}$};
  \node[state] (X2) [below of = Y] {$X_{2}$};
  \node[state] (A2) [left of = X1] {$A_{2}$};

\draw  (X1)  edge node[below] {1} (Y);
\draw  (H)   edge node[above] {2}  (Y);
\draw  (H)  edge node[right] {1} (X1);
\draw (A) edge  node[below ] {1}(X1);
\draw (A2) edge node[below] {1}(X1);
\draw (A) edge node[above] {2} (H);
\draw (X1) edge node[below] {2} (X2);
\end{tikzpicture} \end{center}
\columnbreak
\begin{center}
\begin{tikzpicture}[->,node distance=1.6cm,thick]
  \tikzstyle{every state}=[draw=black,text=black, inner sep=0.4pt, minimum size=17pt]

  \node[state] (Y) {$Y$};
  \node[state] (H) [ left of=Y] {$H$};
  \node[state] (X1) [below of=H] {$X_{1}$};
  \node[state] (A) [left of = H] {$A_{1}$};
  \node[state] (X2) [below of = Y] {$X_{2}$};

\draw  (X1)  edge node[below] {1} (Y);
\draw  (H)   edge node[above] {2}  (Y);
\draw  (H)  edge node[right] {1} (X1);
\draw (A) edge  node[below ] {1}(X1);
\draw (A) edge node[above] {2} (H);
\draw (X1) edge node[below] {2} (X2);
\end{tikzpicture} \end{center}
\end{multicols}
\caption{In the example on the left, the full-rank assumption holds. In the example on the right, the full-rank assumption does not hold. The deterministic shifts in $C^{\gamma}$ for $\gamma=1$ are visualized in Figure~\ref{full-rank-cpgamma}. }\label{full-rank-graph}
\end{figure}

 In both examples, assume that $\Etrain[A A^{\intercal}] = \mathrm{Id}$. Then, in the example on the left, we have
\begin{equation*}
  \mathbf{M} = \begin{pmatrix}
    1 & 1 \\
    0 & 0 \\
    0 & 0 \\
    2 & 0
  \end{pmatrix}, \text{ and hence } C^{1} = \left\{ v \in \mathbb{R}^{4} \text{ such that } v_{2}=v_{3}=0 \text{ and } \left(v_{1}-\frac{v_{4}}{2} \right)^{2} + \frac{v_{4}^{2}}{4} \le 1  \right\}.
\end{equation*}
$v_{1}$ corresponds to interventions on $X_{1}$, whereas $v_{4}$ corresponds to interventions on $H$. The set $C^{1}$ is visualized in Figure~\ref{full-rank-cpgamma} on the left-hand side. As $v_{2}=v_{3}=0$ for all $v \in C^{1}$, only the dimensions $v_{1}$ and $v_{4}$ (interventions on $X_{1}$ and $H$) are shown.
The full-rank assumption holds, as
\begin{equation*}
  \mathbf{M}_{T,\bullet}= \begin{pmatrix}
    1 & 1 \\
    2 & 0
  \end{pmatrix} \text{ has full row-rank.}
\end{equation*}
 On the right-hand side the situation is different, as we only have one anchor.  Here,
\begin{equation*}
  \mathbf{M} = \begin{pmatrix}
    1  \\
    0  \\
    0  \\
    2
  \end{pmatrix}, \text{ hence } C^{1} = \left\{ v \in \mathbb{R}^{3} \text{ such that } v_{2}=v_{3}=0, 2v_{1}=v_{4}  \text{ and }   v_{1}^{2} \le 1  \right\}.
\end{equation*}
 Analogously as above, the deterministic shifts in $C^{1}$ are visualized in Figure~\ref{full-rank-cpgamma} on the right-hand side. The ellipsoid is degenerate and the full-rank assumption is not fullfilled as
\begin{equation*}
  \mathbf{M}_{T,\bullet} = \begin{pmatrix}
      1 \\
      2
  \end{pmatrix} \text{ does not have full row-rank.}
\end{equation*}
If $(\mathbf{M}A)_{T}$ is degenerate, then we observe shifts only in certain linear subspaces of $\mathbb{R}^{|T|}$ and \emph{anchor regression} optimizes the MSE only under these restricted interventions.
 It seems desirable to include as many anchors as possible to  optimize predictive performance under a wide range of interventions. However, this comes at a cost.  Adding anchors that correspond to shifts that will not occur in the test data set can result in overly conservative predictive performance.

\begin{figure}
\begin{multicols}{2}
\begin{center}

\includegraphics[scale=0.7]{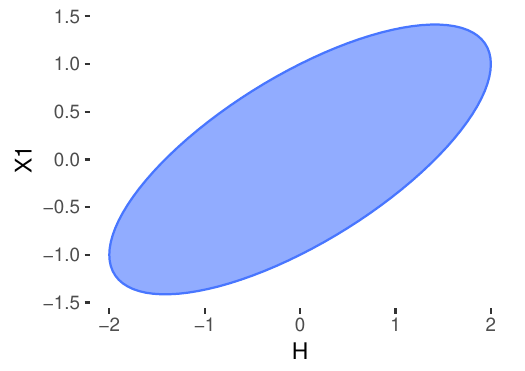}
\end{center}

\columnbreak
\begin{center}

\includegraphics[scale=0.7]{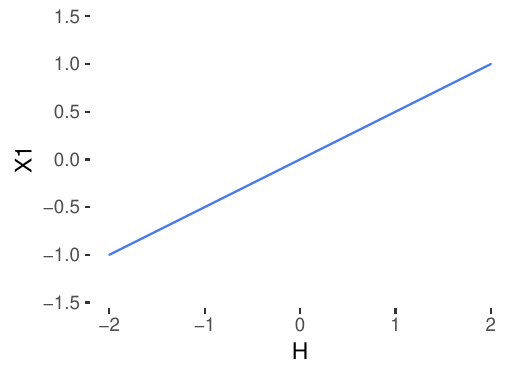}
\end{center}

\end{multicols}
\caption{The blue areas correspond to the interventions in $C^{1}$ for the examples in Figure~\ref{full-rank-graph}. On the left-hand side the full-rank assumption holds. Loosely speaking, for $\gamma \rightarrow \infty$ the ellipsoid grows larger and larger, eventually containing arbitrary shift interventions on $X_{1}$ and $H$. On the right-hand side, the full-rank assumption does not hold, hence $C^{\gamma}$ for $\gamma \rightarrow \infty$ only contains interventions on $X_{1}$ and $H$ that satisfy certain linear constraints.}\label{full-rank-cpgamma}
\end{figure}

\subsection{Theorem~\ref{thm:anchor-regression} for random shifts}\label{sec:theor-refthm:-regr}

\begin{theorem}\label{thm:anchor-regression-2}
For any $b \in \mathbb{R}^{d}$
we have
\begin{equation}
 \Etrain[((\mathrm{Id}- \mathrm{P}_{A})(Y-X^\intercal b))^{2}] + \gamma  \Etrain[( \mathrm{P}_{A}(Y-X^\intercal b))^{2}]  = \sup_{ \mathbb{P}_{v} \in C^{\gamma}} \mathbb{E}_{v}[ (Y-X^\intercal b)^{2}  ],
\end{equation}
where
\begin{align*}
  C^{\gamma} :=\{ & \text{probability measures } \mathbb{P}_{v} : \\
                  &\text{ the assumptions of Section~\ref{sec:setting-notation} are satisfied, and } \mathbb{E}_{v}[v v^{\intercal}] \preceq \gamma  \mathbf{M}\Etrain[A A^{\intercal}] \mathbf{M}^{\intercal}\}.
\end{align*}
\end{theorem}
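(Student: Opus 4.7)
The plan is to reduce both sides to quadratic forms in a single vector built from the SEM and then match them. First I would solve the SEM equilibrium equation from Section~\ref{sec:setting-notation} to write, under $\mathbb{P}_v$,
\begin{equation*}
(X,Y,H)^{\intercal} = (\mathrm{Id}-\mathbf{B})^{-1}(\varepsilon + v),
\end{equation*}
and introduce the selector $\tilde b \in \mathbb{R}^{d+1+r}$ with a $1$ in the $Y$-coordinate and $-b_k$ in the $k$-th $X$-coordinate so that $Y - X^{\intercal}b = \tilde b^{\intercal}(X,Y,H)^{\intercal}$. Setting $w := (\mathrm{Id}-\mathbf{B})^{-\intercal}\tilde b$, this yields the key identity $Y - X^{\intercal}b = w^{\intercal}(\varepsilon + v)$, valid both under $\mathbb{P}_{\text{train}}$ (with $v = \mathbf{M}A$) and under every $\mathbb{P}_v \in C^{\gamma}$.

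Next I would expand $\mathbb{E}_v[(Y-X^{\intercal}b)^2] = w^{\intercal}\Sigma_\varepsilon w + 2\,w^{\intercal}\mathbb{E}_v[\varepsilon v^{\intercal}]w + w^{\intercal}\mathbb{E}_v[vv^{\intercal}]w$, where $\Sigma_\varepsilon := \mathbb{E}[\varepsilon\varepsilon^{\intercal}]$. The cross term vanishes because $v \perp \varepsilon$ and $\mathbb{E}[\varepsilon]=0$ by the assumptions of Section~\ref{sec:setting-notation} carried over to $\mathbb{P}_v$. Specialising to the training distribution and using $\varepsilon \perp A$ with $\mathbb{E}[\varepsilon]=0$, one gets $\mathrm{P}_A\varepsilon = 0$, hence
\begin{equation*}
\mathrm{P}_A(Y - X^{\intercal}b) = w^{\intercal}\mathbf{M}A, \qquad (\mathrm{Id}-\mathrm{P}_A)(Y - X^{\intercal}b) = w^{\intercal}\varepsilon,
\end{equation*}
so the left-hand side of~\eqref{eq:27} collapses to $w^{\intercal}\Sigma_\varepsilon w + \gamma\,w^{\intercal}\mathbf{M}\,\Etrain[AA^{\intercal}]\mathbf{M}^{\intercal}w$.

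For the inequality $(\ge)$ of~\eqref{eq:27} I would invoke the Löwner-order constraint $\mathbb{E}_v[vv^{\intercal}] \preceq \gamma\,\mathbf{M}\,\Etrain[AA^{\intercal}]\mathbf{M}^{\intercal}$, which implies $w^{\intercal}\mathbb{E}_v[vv^{\intercal}]w \le \gamma\,w^{\intercal}\mathbf{M}\,\Etrain[AA^{\intercal}]\mathbf{M}^{\intercal}w$ for every $\mathbb{P}_v \in C^{\gamma}$. For the matching lower bound I would exhibit an attaining measure by taking $v := \sqrt{\gamma}\,\mathbf{M}A'$ with $A'$ an independent copy of $A$ (so that $v \perp \varepsilon$, as required, while being allowed to be $A$-dependent). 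Then $\mathbb{E}_v[vv^{\intercal}] = \gamma\,\mathbf{M}\,\Etrain[AA^{\intercal}]\mathbf{M}^{\intercal}$ with equality, saturating the bound and giving the supremum.

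The substantive step is Step~3: verifying $\mathrm{P}_A\varepsilon = 0$ and therefore the clean split of $Y - X^{\intercal}b$ into an $A$-measurable linear part and an $\varepsilon$-only residual. Once this decomposition is in hand, the remainder is a straightforward manipulation of quadratic forms and the Löwner order, and the attaining construction is immediate. No delicate measure-theoretic issue arises because the constraint set $C^{\gamma}$ is specified purely through second moments of $v$.
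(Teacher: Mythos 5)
Your proposal is correct and follows essentially the same route as the paper's proof: reduce $Y-X^{\intercal}b$ to $w^{\intercal}(\varepsilon+v)$ via $(\mathrm{Id}-\mathbf{B})^{-1}$, kill the cross term using uncorrelatedness and $\mathbb{E}[\varepsilon]=0$, identify $\mathrm{P}_A(Y-X^{\intercal}b)=w^{\intercal}\mathbf{M}A$ and $(\mathrm{Id}-\mathrm{P}_A)(Y-X^{\intercal}b)=w^{\intercal}\varepsilon$, and match the penalty with $\gamma\,w^{\intercal}\mathbf{M}\,\Etrain[AA^{\intercal}]\mathbf{M}^{\intercal}w$. Your only (welcome) addition is making the supremum attainment explicit via the Löwner-order bound together with the measure $v=\sqrt{\gamma}\,\mathbf{M}A'$, a step the paper leaves implicit in "using the definition of $C^{\gamma}$".
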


\subsection{Proof of Theorem~\ref{thm:anchor-regression} and Theorem~\ref{thm:anchor-regression-2}}

\begin{proof}
  We will show Theorem~\ref{thm:anchor-regression}. The proof of Theorem~\ref{thm:anchor-regression-2} proceeds analogously. Using the model assumptions of Section~\ref{sec:setting-notation}, under $\mathbb{P}_{v}$,
\begin{equation*}
  Y-X^\intercal b = ( (\mathrm{Id}-\mathbf{B})_{d+1, \bullet}^{-1} - b^{\intercal}(\mathrm{Id}-\mathbf{B})_{1:d, \bullet}^{-1}) (\varepsilon+v).
\end{equation*}
In the following, for brevity we write $w = ( (\mathrm{Id}-\mathbf{B})_{d+1, \bullet}^{-1} - b^{\intercal}(\mathrm{Id}-\mathbf{B})_{1:d, \bullet}^{-1})^{\intercal}$.
As $\mathbb{E}_{v}[\varepsilon] =0$ and using that $\varepsilon$ and $v$ are uncorrelated under $\mathbb{P}_{v}$,
\begin{align*}
   \mathbb{E}_{v}[ (Y-X^\intercal b)^{2} ] &= \mathbb{E}_{0}[(Y-X^\intercal b)^{2}] +  \mathbb{E}_{v}[ (  w^{\intercal}v)^{2}].
\end{align*}
Taking the supremum over $C^{\gamma}$, using the definition of $C^{\gamma}$,
\begin{align}\label{eq:16}
\begin{split}
  \sup_{v \in C^{\gamma}} \mathbb{E}_{v}[ (Y-X^\intercal b)^{2} ] &= \mathbb{E}_{0}[(Y-X^\intercal b)^{2}]  + \sup_{v \in C^{\gamma}} \mathbb{E}_{v}[ (w^{\intercal} v)^{2}]\\
      &= \mathbb{E}_{0}[(Y-X^\intercal b)^{2}]  + \sup_{v \in C^{\gamma}} w^{\intercal}\mathbb{E}_{v}[ v v^{\intercal}] w\\
&=  \mathbb{E}_{0}[(Y-X^\intercal b)^{2}]  + \gamma  w^{\intercal} \mathbf{M} \Etrain[A A^{\intercal}] \mathbf{M}^{\intercal} w \\
&=  \mathbb{E}_{0}[(Y-X^\intercal b)^{2}]  + \gamma   \Etrain [(w^{\intercal} \mathbf{M} A)^{2}] \\
\end{split}
\end{align}
By the model assumptions of Section~\ref{sec:setting-notation}, $\varepsilon$ and $A$ are independent and $\Etrain[\varepsilon]=0$, which together with the definition of $w$ implies that under $\Ptrain$,
\begin{align}\label{eq:34}
  \begin{split}
  \Etrain[Y-X^\intercal b|A] &= \Etrain[w^{\intercal}(\varepsilon+\mathbf{M}A)|A]= w^{\intercal} \mathbf{M} A, \text{ and } \\
  Y-X^\intercal b - \Etrain[Y-X^\intercal b |A] &= w^{\intercal} (\varepsilon + \mathbf{M}A) - w^{\intercal} \mathbf{M} A = w^{\intercal} \varepsilon.
  \end{split}
\end{align}
Note that by definition under $\mathbb{P}_{0}$, $Y-X^\intercal b$ has the same distribution as $w^{\intercal} \varepsilon$ under $\Ptrain$. Hence, under $\Ptrain$, $Y-X^\intercal b - \Etrain[Y-X^\intercal b |A]$ has the same distribution as $Y-X^\intercal b$ under $\mathbb{P}_{0}$. Thus, using the equations~\eqref{eq:34} in equation~\eqref{eq:16} yields
\begin{equation*}
\sup_{v \in C^{\gamma}} \mathbb{E}_{v}[ (Y-X^\intercal b)^{2} ] = \Etrain[(Y-X^\intercal b - \Etrain[Y-X^\intercal b |A])^{2}] + \gamma \Etrain[(\Etrain[Y-X^\intercal b|A])^{2}],
\end{equation*}
which concludes the proof.
\end{proof}

\subsection{Proof of Lemma~\ref{lemma:interpr-anch-regr}}
\begin{proof}
We can rewrite $\mathbb{E}[(Y-X^\intercal b)^{2} |A]$:
\begin{align*}
\mathbb{E}[(Y-X^\intercal b)^{2} |A] = \mathbb{E}[(Y-X^\intercal b - \mathbb{E}[Y-X^\intercal b |A])^{2} |A ] + (\mathbb{E}[Y-X^\intercal b|A])^{2}
\end{align*}
As $(X,Y,A)$ follows a centered multivariate Gaussian distribution,
\begin{align*}
  \mathbb{E}[Y-X^\intercal b|A] &\sim \mathcal{N}(0,\mathbb{E}[(\mathbb{E}[Y-X^\intercal b|A])^{2}]) \\
\text{ and } \mathbb{E}[(Y-X^\intercal b - \mathbb{E}[Y-X^\intercal b |A])^{2} |A ] &= \mathbb{E}[(Y-X^\intercal b - \mathbb{E}[Y-X^\intercal b |A])^{2}  ].
\end{align*}
Hence the $\alpha$-th quantile of  $\mathbb{E}[(Y-X^\intercal b)^{2} |A]$ is equal to
\begin{equation*}
  \mathbb{E}[(Y-X^\intercal b - \mathbb{E}[Y-X^\intercal b |A])^{2} ] + \chi^{2}_{1}(\alpha) \mathbb{E}[(\mathbb{E}[Y-X^\intercal b|A])^{2}],
\end{equation*}
where $\chi_{1}^{2}(\alpha)$ denotes the $\alpha$-th quantile of a $\chi^{2}$-distributed random variable with one degree of freedom.
\end{proof}

\subsection{Lemma~\ref{lemma:interpr-anch-regr} for discrete anchors}\label{sec:lemma-refl-anch}

\begin{lemma}[Version of Lemma~\ref{lemma:interpr-anch-regr} for discrete anchors]\label{lemma:interpr-anch-regr-discr}
Assume that we have several training data sets $a \in \mathcal{A}$ and one test dataset. For each $a \in \mathcal{A}$, the data are drawn i.i.d. from $\mathbb{P}^a = \mathbb{P}[\bullet|A=a]$. On the test data set, the data are drawn i.i.d. from the distribution of $\mathbb{P}^\text{test}$. We write $\mathbb{E}^a[\bullet]$ to denote the expectation on data set $a \in \mathcal{A}$ and $\mathbb{E}^\text{test}$ to denote the expectation on the test data set. We assume that the data sets differ by a shift $\delta^a := \mathbb{E}^a[(X,Y)]$, i.e., that $(X,Y) - \mathbb{E}^a[(X,Y)]$ under $\mathbb{P}^a$ has the same distribution as $(X,Y) - \mathbb{E}^{a'}[(X,Y)]$ under $\mathbb{P}^{a'}$ for all $a \in \mathcal{A} \cup \{ \text{new}\}$. We assume that the shift $\delta^a$ is constant on each data set, but random between the data sets, with distribution $\delta^a \sim \mathcal{N}(0,\Sigma)$ for some positive semi-definite $\Sigma$.  Write $\mathbb{E}^{a,\delta}$ for the expectation both with respect to the randomness of $\mathbb{P}^a$ and the randomness of the shift $\delta^a$. Due to the randomness of $\delta^\text{new}$, the risk $\mathbb{E}^\text{new}[(Y-X^\intercal b)^2]$ is random and we write $Q(\alpha)$ for the quantiles of the risk on the new data set. Then,
\begin{equation}
  Q(\alpha) = \frac{1}{|\mathcal{A}|}  \sum_{a \in \mathcal{A}} \mathbb{E}^{a,\delta}[((\mathrm{Id}- \mathbb{E}^a)[Y-X^\intercal b])^{2}] +   \frac{\gamma}{|\mathcal{A}|} \sum_{a \in \mathcal{A}} \mathbb{E}^{a,\delta}[(\mathbb{E}^a[Y-X^\intercal b])^2],
\end{equation}
where $\gamma$
equals the $\alpha$-th quantile of a $\chi^{2}$-distributed random variable with one degree of freedom.
\end{lemma}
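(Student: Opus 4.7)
\textbf{Proof plan for Lemma~\ref{lemma:interpr-anch-regr-discr}.}

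The plan is to mimic the argument of Lemma~\ref{lemma:interpr-anch-regr} after using the shift-invariance hypothesis to reduce the per-dataset risk to the sum of an invariant conditional variance and the square of a Gaussian shift. First I would condition on $\delta^{\mathrm{new}}$ and apply the elementary decomposition
\begin{equation*}
\mathbb{E}^{\mathrm{new}}[(Y-X^{\intercal}b)^{2}]
= \mathrm{Var}^{\mathrm{new}}(Y-X^{\intercal}b)
+ \bigl(\mathbb{E}^{\mathrm{new}}[Y-X^{\intercal}b]\bigr)^{2}.
\end{equation*}
Since only the mean is random (as a function of $\delta^{\mathrm{new}}$), extracting the $\alpha$-quantile of the LHS reduces to computing the $\alpha$-quantile of the last term.

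Next I would exploit the assumption that $(X,Y)-\delta^{a}$ under $\mathbb{P}^{a}$ has a common distribution for all $a\in\mathcal{A}\cup\{\mathrm{new}\}$. This immediately gives two things: (i) the conditional variance $\sigma^{2}(b):=\mathrm{Var}^{a}(Y-X^{\intercal}b)=\mathbb{E}^{a,\delta}[((\mathrm{Id}-\mathbb{E}^{a})[Y-X^{\intercal}b])^{2}]$ is the same for every $a$ and does not depend on $\delta^{a}$; (ii) the conditional mean satisfies $\mathbb{E}^{a}[Y-X^{\intercal}b]=\delta^{a}_{Y}-(\delta^{a}_{X})^{\intercal}b$, a fixed linear functional of $\delta^{a}\sim\mathcal{N}(0,\Sigma)$, hence Gaussian with variance $c(b):=\mathbb{E}^{a,\delta}[(\mathbb{E}^{a}[Y-X^{\intercal}b])^{2}]$, independent of $a$. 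Applied to $a=\mathrm{new}$, this yields $(\mathbb{E}^{\mathrm{new}}[Y-X^{\intercal}b])^{2}\sim c(b)\cdot\chi^{2}_{1}$.

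Combining the two ingredients, the random risk equals $\sigma^{2}(b)+c(b)\cdot W$ with $W\sim\chi^{2}_{1}$, whose $\alpha$-quantile is $\sigma^{2}(b)+\gamma\, c(b)$ with $\gamma=\chi^{2}_{1}(\alpha)$. Finally, since $\sigma^{2}(b)$ and $c(b)$ do not depend on $a$, the two averages
\begin{equation*}
\frac{1}{|\mathcal{A}|}\sum_{a\in\mathcal{A}}\mathbb{E}^{a,\delta}[((\mathrm{Id}-\mathbb{E}^{a})[Y-X^{\intercal}b])^{2}]
\quad\text{and}\quad
\frac{1}{|\mathcal{A}|}\sum_{a\in\mathcal{A}}\mathbb{E}^{a,\delta}[(\mathbb{E}^{a}[Y-X^{\intercal}b])^{2}]
\end{equation*}
reproduce exactly $\sigma^{2}(b)$ and $c(b)$, so the RHS of the lemma matches $\sigma^{2}(b)+\gamma\, c(b)=Q(\alpha)$.

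The main obstacle is bookkeeping: carefully separating the two layers of randomness (sampling within a dataset, versus the Gaussian draw of the cluster-level shift), and verifying that the shift-invariance assumption really makes the within-cluster variance deterministic while the between-cluster mean picks up all the randomness that drives the quantile. Once this conditioning is organized, the rest is an exact copy of the Gaussian-case computation in Lemma~\ref{lemma:interpr-anch-regr}.
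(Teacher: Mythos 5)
Your proposal is correct and follows essentially the same route as the paper's proof: a bias--variance decomposition of the new-dataset risk into an invariant within-dataset term plus the square of a centered Gaussian shift functional, extraction of the $\chi^2_1$ quantile, and identification of both terms with their training-dataset averages via the shift-invariance and i.i.d.\ assumptions on $\delta^a$. No substantive differences or gaps.
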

Note that with $\Etrain \equiv \frac{1}{|\mathcal{A}|} \sum_{a \in \mathcal{A}} \mathbb{E}^{a,\delta}$ and $\mathbb{E}^a \equiv \Etrain[\bullet | A=a]$ the right-hand side coincides with the anchor risk if each level of $A$ has equal probability (or under a re-weighting such that each level of $A$ has equal probability).

\begin{proof}
  First, note that using a bias-variance decomposition we can rewrite the risk on the new data set as
  \begin{equation*}
    \mathbb{E}^{\text{new}}[(Y-X^\intercal b)^2] = \mathbb{E}^\text{new}[(Y - \delta_{p+1}^\text{new}-(X - \delta_{1:p}^\text{new})^{\intercal} b)^2] + (\delta_{p+1}^\text{new} - (\delta_{1:p}^\text{new})^{\intercal} b)^2
  \end{equation*}
  By assumption, the first part of this term is constant and the second part is a centered Gaussian random variable. Hence, the $\alpha$-Quantile of this term is
  \begin{equation}\label{eq:discrete_decomp}
    Q(\alpha) = \mathbb{E}^\text{new}[(Y - \delta_{p+1}^\text{new}-(X - \delta_{1:p}^\text{new})^{\intercal} b)^2] + \gamma \mathbb{E}^{\text{new},\delta}[(\delta_{p+1}^\text{new} - (\delta_{1:p}^\text{new})^{\intercal} b)^2]
  \end{equation}
  Now using that the distribution of $(X,Y) - \delta^\text{new}$ under $\mathbb{P}^\text{new}$ is the same as the distribution of $(X,Y) - \delta^\text{new}$ under $\mathbb{P}^a$, we obtain for all $a \in \mathcal{A}$
  \begin{equation}\label{eq:firstparteq}
    \mathbb{E}^\text{new}[(Y - \delta_{p+1}^\text{new}-(X - \delta_{1:p}^\text{new})^{\intercal} b)^2] = \mathbb{E}^{\text{a},\delta}[(Y - \delta_{p+1}^\text{a}-(X - \delta_{1:p}^\text{a})^{\intercal} b)^2].
  \end{equation}
  Using that the $\delta^a$ are i.i.d.,
\begin{equation}\label{eq:secondparteq}
  \mathbb{E}^{\text{new},\delta}[(\delta_{p+1}^\text{new} - (\delta_{1:p}^\text{new})^{\intercal} b)^2] =  \mathbb{E}^{\text{a},\delta}[(\delta_{p+1}^\text{a} - (\delta_{1:p}^\text{a})^{\intercal} b)^2].
\end{equation}
Using equation~\eqref{eq:firstparteq} and equation~\eqref{eq:secondparteq} in equation~\eqref{eq:discrete_decomp} completes the proof.
\end{proof}

  \subsection{Proof of Lemma~\ref{lemma:project}}

  \begin{proof}

    Due to linearity of the model in \eqref{eq:32}, $\Etrain[Y|A]$ and
    $\Etrain[X|A]$ are linear functions of $A$ whose coefficients are
    given by the least squares principle. That is:
    \begin{eqnarray*}
     & &\mathbb{E}_{\text{train}}[Y|A] = 
         A^{\intercal} \mathbb{E}_{\text{train}}[A A^{\intercal}]^{-1}
         \mathbb{E}_{\text{train}}[A Y],\\
      & &\mathbb{E}_{\text{train}}[X^{\intercal}|A] = A^{\intercal}
          \mathbb{E}_{\text{train}}[A A^{\intercal}]^{-1}
          \mathbb{E}_{\text{train}}[A X^{\intercal}]. 
    \end{eqnarray*}
    Thus, $\mathbb{E}_{\text{train}}[Y - X^{\intercal} b|A] = 0$ if and only if
    \begin{eqnarray}\label{proof-add1}
\text{Cov}_{\text{train}}(A,X) b = \text{Cov}_{\text{train}}(A,Y),
    \end{eqnarray}
    where we used that $\mathbb{E}_{\text{train}} [A A^T]$ is
    invertible and covariances replace expectations since $X$ and $Y$ are
    assumed to have mean zero. 
    Equation \eqref{proof-add1} is a linear system of
      equations in the variables $b$: by the Rouch\'e-Capelli theorem, it
      has a solution if and only if
      \begin{eqnarray*}
        \text{rank}(\text{Cov}_{\text{train}}(A,X)) =
        \text{rank}\left(\text{Cov}_{\text{train}}(A,X) |
        \text{Cov}_{\text{train}}(A,Y)\right),
      \end{eqnarray*}
      which completes the proof.
      \end{proof}

  \subsection{Proof of Theorem~\ref{thm:repbinfty}}

\begin{proof}

Due to the projectability condition in \eqref{def:projectability} and Lemma
\ref{lemma:project} we know that $I \neq \emptyset$. The projectability
condition also holds for $X',Y',A'$ since one can verify that the rank
condition only depends on $\mathbf{B}$ and $\mathbf{M}$. Thus, we also have that $I'
\neq \emptyset$.

\medskip\noindent
(i) Characterization of $b^{\rightarrow \infty}$.\\
We first consider the residual term
\begin{eqnarray*}
  \eta_b = Y - X^\intercal b
\end{eqnarray*}
for any $b \in I$. Analogously as in the proof of Theorem
\ref{thm:anchor-stability-pred-stab} consider
\begin{eqnarray*}
  w_b = \left( (\mathrm{Id} - \mathbf{B})_{d+1,\bullet}^{-1} - b^\intercal (\mathrm{Id} -
  \mathbf{B})_{1:d,\bullet}^{-1} \right)^\intercal.
\end{eqnarray*}
We then have that $\eta_b = w_b^\intercal (\varepsilon + \mathbf{M} (\kappa A + \xi))$. Since $b \in I$,
we have that $\Etrain[\eta_b|A] = 0$ and therefore $\Etrain[\eta_b A^\intercal] =
\Etrain[w_b^\intercal (\mathbf{M} \kappa A A^\intercal)] = 0$ (where we used in the first equality
relation that  $\Etrain[\xi] =
\Etrain[\varepsilon] = 0$ and $\xi, A, \varepsilon$ are jointly
independent). Since $\Etrain[A A^\intercal]$ is invertible we obtain
\begin{eqnarray}\label{f1}
  w_b^\intercal \mathbf{M} = 0\ \forall\ b \in I.
\end{eqnarray}
Therefore $\eta_b = w_b^\intercal \varepsilon$ and thus we have:
\begin{eqnarray}\label{f3}
  b^{\rightarrow \infty} = \mbox{argmin}_{b \in I} \Etrain[\eta_b^2] =
  w_b^\intercal \Sigma_{\varepsilon} w_b,
  \end{eqnarray}
where $\Sigma_{\varepsilon} = \text{Cov}(\varepsilon)$.

\medskip\noindent
(i) Characterization of $b'^{\rightarrow \infty}$.\\
One can derive exactly along the same lines as above the analogue of
\eqref{f1}:
\begin{eqnarray}\label{f4}
  w_b^\intercal \mathbf{M} = 0\ \forall b \in I'.
\end{eqnarray}

\medskip\noindent
Because of \eqref{f4} and using \eqref{cov-add} we have that
\begin{eqnarray}\label{f5}
  b'^{\rightarrow \infty} &=& \mbox{argmin}_{b \in I'} \mathbb{E}[(Y' - (X')^{\intercal} b)^2]
                  \nonumber \\
  &=& \mbox{argmin}_{b \in I'} \mathbb{E}[(\eta'_b)^2] =
      \mbox{argmin}_{b \in I'} L w_b^\intercal \Sigma_{\varepsilon} w_b.
      \end{eqnarray}
This is to be compared with \eqref{f3}.

\medskip\noindent
It remains to show that
\begin{eqnarray}\label{f6}
  I = I'.
\end{eqnarray}
``$\subseteq$'': take $b \in I$. Then, by \eqref{f1}, $w_b^\intercal \mathbf{M} = 0$. Therefore
\begin{eqnarray*}
  \mathbb{E}_{\text{test}}[\eta'_b |A'] = \mathbb{E}[w_b^\intercal \varepsilon'|A'] = 0
\end{eqnarray*}
where the last equality follows by independence of $\varepsilon'$ and $A'$ and $\mathbb{E}_{\text{test}}[\varepsilon'] =
0$. Thus, $b \in I'$. \\
``$\supseteq$'': take $b \in I'$. Then, by \eqref{f4}, $w_b^\intercal \mathbf{M} = 0$. Therefore
\begin{eqnarray*}
  \Etrain[\eta_b |A] = \Etrain[w_b^\intercal \varepsilon|A] = 0
\end{eqnarray*}
where the last equality follows by independence of $\varepsilon$ and $A$
and $\Etrain[\varepsilon] = 0$. Thus, $b \in I$. \\
These two relations prove \eqref{f6}.

\medskip\noindent
By \eqref{f3}, \eqref{f5} and \eqref{f6}, we complete the proof of the
theorem.
\end{proof}

\subsection{Proof of Proposition~\ref{prop:constant}}

\begin{proof}
  Define $f(b) := \Etrain[ (\mathrm{P}_{A}(Y-X^\intercal b))^{2}]$ and $g(b) :=  \Etrain[ ((\mathrm{Id} - \mathrm{P}_{A})(Y-X^\intercal b))^{2}]$. By assumption, $\partial_{b} f(b^{0}) = \partial_{b} f(b^{\infty}) = \partial_{b} g(b^{0}) = \partial_{b} g(b^{\infty}) = 0$. The objective functional of anchor regression for a fixed value of $\gamma \ge 0$ is $g(b) + \gamma f(b)$. Hence also the derivative of the objective functional at $b^{0}$ is zero. As the objective functional $g(b) + \gamma f(b)$ is convex in $b$, $b=b^{0}$ is a minimizer of the objective functional. This completes the proof.
\end{proof}

\subsection{Proof of Theorem~\ref{thm:anchor-stability-pred-stab}}

\begin{proof}
  Define $\eta= Y - X^\intercal b^0$. As $b^0=b^{\rightarrow \infty}$, using equation~\eqref{eq:reachzero},
$\Etrain[\eta|A] =  0$.
This implies that \begin{equation*}
    \Etrain[\eta \cdot A ] = \Etrain[ \Etrain[\eta|A] \cdot A] = 0. \end{equation*}
Define $w = ( (\mathrm{Id}-\mathbf{B})_{d+1, \bullet}^{-1} -
(b^0)^{\intercal}(\mathrm{Id}-\mathbf{B})_{1:d,
 \bullet}^{-1})^{\intercal}$. By using the model assumptions, under $\Ptrain$,
\begin{equation*}\eta = w^\intercal (\varepsilon + \mathbf{M} A)
\end{equation*}
Using $\Etrain[w^\intercal \mathbf{M} A A^t] = \Etrain[\eta \cdot A] = 0$ and that $\Etrain[A A^t]$
is invertible, we have $w^\intercal
\mathbf{M} = 0$. Now, let $v$ be a random variable uncorrelated of $\varepsilon$ that takes values in $ \text{span}(\mathbf{M})$. As $w^{\intercal} \mathbf{M} =0$, $w^\intercal v = 0$. Thus, under $\mathbb{P}_{v}$,
\begin{equation*}Y - X^\intercal b^0 = w^\intercal (\varepsilon + v) = w^\intercal \varepsilon
 \end{equation*}
 Hence $Y-X^\intercal b^{0}$ has the same distribution under $\mathbb{P}_{v}$ as under $\mathbb{P}_{0}$. Thus, for all $b$ we have
\begin{equation*}
  \mathbb{E}_{v}[(Y - X^\intercal b)^2] \ge \mathbb{E}_{0}[(Y - X^\intercal b)^2] \ge \mathbb{E}_{0}[(Y - X^\intercal b^0)^2] =  \mathbb{E}_{v}[(Y - X^\intercal b^0)^2].
\end{equation*}
In the first step we used that $v$ is uncorrelated of $\varepsilon$ and equation~\eqref{eq:def-perturbed-distr}. In the second step we used the definition of $b^0$. In the third step we used that $Y-X^\intercal b^{0}$ has the same distribution under $\mathbb{P}_{v}$ as under $\mathbb{P}_{0}$.
Thus,
\begin{equation*}
  b^0 \in \argmin \mathbb{E}_{v}[(Y - X^\intercal b)^2].
\end{equation*}
This completes the proof. \end{proof}

\subsection{Generalized version of Theorem~\ref{thm:stabcaus}}\label{sec:gener-vers-theor}

In the following we will relax the assumptions to allow for endogeneous anchors.
\subsubsection{Relaxed anchor assumptions}\label{sec:relax-anch-assumpt}
 Let the distribution of $(X,Y,H,A)$ under $\Ptrain$ be a solution of the SEM
\begin{equation}\label{eq:35}
\begin{pmatrix}X \\ Y \\ H  \\ A \end{pmatrix} = \mathbf{B} \cdot \begin{pmatrix}X \\ Y \\ H \\ A \end{pmatrix} + \varepsilon.
\end{equation}
where 
$\mathbf{B} \in \mathbb{R}^{(d+1+r + q) \times (d+1+r + q)}$ is an unknown
constant matrix and the covariates $X$, the anchors $A \in \mathbb{R}^q$,
the hidden variables $H \in \mathbb{R}^r$, and the noise $\varepsilon \in
\mathbb{R}^{d+1+r}$ are random vectors. We assume that under $\Ptrain$, $X$ and $Y$
are centered to mean zero, that $\varepsilon$ and $A$ have finite second
moments and that the components of $\varepsilon$ are independent of each
other. To make the distribution of $X,Y,H,A$ well-defined, in the following
we assume that $\mathrm{Id}-\mathbf{B}$ is invertible. The model induces a directed graph $G$, with the edges given by the
following construction: For every $\mathbf{M}_{k,l} \neq 0$, a directed
edge is drawn from $A_{l}$ to the $k$-th variable in the $(d+1+r +
q)$-dimensional vector $(X,Y,H,A)$. Analogously, for every
$\mathbf{B}_{k,l} \neq 0$, a directed edge is drawn from the $l$-th
variable in $(X,Y,H,A)$ to the $k$-th variable in $(X,Y,H,A)$.

\begin{theorem}[Anchor stability implies causality] \label{thm:stabcaus-general}
    Let the assumptions of Section~\ref{sec:relax-anch-assumpt} hold with
    an acyclic graph $G$, and assume the projectability
    condition~\eqref{def:projectability}. 
 Furthermore, assume that for every disjoint sets of variables $V_1, V_2, V_3
      \subset (X,Y,H,A)$,  $V_1$ is d-separated of $V_2$ in $G$
      given $V_3$ if and only if the partial correlation
      $\text{part.cor}(V_1,V_2|V_3) = 0$. Furthermore assume that for each
  $X_k$ there exists $k'$ such that $A_{k'} \rightarrow X_k$.
    If $b^{\rightarrow \infty}
  = b^0$, then
    \begin{equation}
      b^{\rightarrow \infty} = b^0 = \partial_x \mathbb{E}[Y|do(X=x)],
    \end{equation}
    where the do-operator $\mathbb{E}[\bullet |do(X=x)]$ is defined as in \citet{Pearl2009}, Chapter~1. In addition, there is no confounder between $X$ and $Y$, i.e., there is no $H_{k}$ that is both an ancestor of some $X_{k'}$ and $Y$ in $G$.
  \end{theorem}

\subsection{Proof of Theorem~\ref{thm:stabcaus} and Theorem~\ref{thm:stabcaus-general}}
\begin{proof}
The proof for both theorems proceeds analogously. In the following, the covariances and partial correlations are meant with respect to the measure $\Ptrain$. Define $\eta= Y - X^\intercal b^0$. As $b^0=b^{\rightarrow \infty}$, using equation~\eqref{eq:reachzero},
$\Etrain[\eta|A] =  0$.
This implies that
\begin{equation*}
    \Etrain[\eta \cdot A ] = \Etrain[ \Etrain[\eta|A] \cdot A] = 0. \end{equation*}
  As $\Etrain[\eta \cdot A] = 0$ and $\eta$ is centered, we have $\text{Cov}(Y-X^\intercal b^0,A)=0$. Using Proposition~\ref{prop:constant}, $b^{1}=b^{0}$. Let us write $b'$ for the linear regression coefficient of $A$ on $X$. We have $0=\text{Cov}(Y - X^\intercal b^1, A) =  \text{Cov}(Y - X^\intercal b^1, A - X^\intercal b') $. Thus, by the definition of partial correlation, $\text{part.cor}(Y,A|X) = 0$. By assumption this implies that $Y$ and $A$ are d-separated given $X$ in $G$.

  We want to show that every backdoor path from $X$ to $Y$ is blocked given the empty set. If we can show this, by the Backdoor-Criterion \citep{Pearl2009}, due to linearity, $b^1$ is equal to the causal effect $\partial_{x} \mathbb{E}[Y|do(X=x)]$. As we showed that $b^1=b^0$, this would imply the claim of the theorem. Hence it suffices to show that every backdoor path from $X$ to $Y$ is blocked given the empty set.\\
\textbf{Step 1:} First, we note that no descendant of $Y$ can be in $X$. We will prove this by contradiction. Choose $k$ such that $X_k$ is a descendant of $Y$ and maximal in the sense that no other $X_{k'}$, $k' \neq k$ is a descendant of $Y$ and an ancestor of $X_k$.  By construction, there exists a directed path $ X_k \leftarrow \ldots \leftarrow Y$ such that the nodes on this path do not lie in $X$.  The nodes on this path do also not lie in $A$ as $A$ is d-separated of $Y$ given $X$. By assumption there exists a $k'$ such that $A_{k'} \rightarrow X_k$. Hence there exists a path $A_{k'} \rightarrow X_k \leftarrow \ldots \leftarrow Y$ that is open given $X$. Hence $A$ is not d-separated of $Y$ given $X$, contradiction. \\ 
\textbf{Step 2:} Assume there exists a backdoor path from $X$ to $Y$
that is open given the empty set. As the path from $X$ to $Y$ is open
given the empty set, it cannot contain a collider. Let this path
starts at $X_k$.\\ 
We have shown that the path does not contain a collider, and by Step~1, $X_k$ is not a descendant of $Y$. As the backdoor path is open given the empty set, it must be of the form $X_k \leftarrow ... \leftarrow Z \rightarrow \ldots \rightarrow Y$ and the nodes on the path do not lie in $X$. 
No node of the path lies in $A$ as we showed that $A$ is d-separated of $Y$ given $X$. To sum it up, we can assume that no node on the path lies in $A$ or $X$. However, we assumed that there exists $k'$ such that $A_{k'} \rightarrow X_k$. This gives us a path $A_{k'} \rightarrow X_k \leftarrow \ldots \leftarrow Z \rightarrow \ldots \rightarrow Y$ from $A_{k'}$ to $Y$ that is open given $X$. Contradiction! Hence, every backdoor path from $X$ to $Y$ is blocked given the empty set. By the Backdoor-Criterion \citep{Pearl2009}, due to linearity, $b^1$ is equal to the causal effect $\partial_{x} \mathbb{E}[Y|do(X=x)]$. As we showed that $b^1=b^0$, the claim of the theorem follows.
\end{proof}

\subsection{Proof of Theorem~\ref{thm:finite-sample-bound-1} and auxiliary results}

\textbf{Notation.} Define the ``residuals''  $\Z^{(a)} := \Y^{(a)} - \X^{(a)} b^{\gamma}$ for all $a \in \mathcal{A}$. We write $\overline{\X}^{(a)}$ for the empirical mean of $\X^{a}$, i.e., $\overline{\X}^{(a)} := \frac{1}{n_{a}} \sum_{i=1}^{n_{a}}   \X_{i,\bullet}^{(a)}$. Analogously define $\overline{\Y}^{(a)} :=  \frac{1}{n_{a}} \sum_{i=1}^{n_{a}}   \Y_{i}^{(a)}$ and $\overline{\Z}^{(a)} :=  \frac{1}{n_{a}} \sum_{i=1}^{n_{a}}   \Z_{i}^{(a)}$. Additionally define the conditional means $\mu_{\X}^{(a)} := \Etrain[X|A=a]$, $\mu_{\Y}^{(a)} := \Etrain[Y|A=a]$ and  $\mu_{\Z}^{(a)} := \Etrain[Y - X^\intercal b^{\gamma}|A=a]$ for $a \in \mathcal{A}$.

\subsubsection{Proof of Theorem~\ref{thm:finite-sample-bound-1}}\label{sec:proof-theor-refthm:f}
\begin{proof}
\textbf{Preliminaries.}
We want to derive bounds for
\begin{equation*}
R(\hat b) - \min_{b} R(b),
\end{equation*}
where
\begin{equation*}
R(b)  = \Etrain[( Y - \Etrain[Y | A] - (X - \Etrain[X | A])^{\intercal} b )^{2}] + \frac{\gamma}{| \mathcal{A}|} \sum_{a \in \mathcal{A}} (\Etrain[Y|A=a] - \Etrain[X|A=a]^{\intercal} b)^{2}.
\end{equation*}
and
\begin{align*}
\begin{split}
\hat b = \, & \argmin_{b} \frac{1}{| \mathcal{A} |} \sum_{a \in \mathcal{A}} \frac{1}{n_{a}}  \sum_{i=1}^{n_{a}} \left( \Y^{(a)}_{i} - \overline{\Y}^{(a)} - (\X^{(a)}_{i,\bullet} - \overline{\X}^{(a)} ) b \right)^{2} +\\
& \,\qquad \qquad  \frac{\gamma}{| \mathcal{A}|} \sum_{a \in \mathcal{A}}  \left( \overline{\Y}^{(a)} - \overline{\X}^{(a)} b \right)^{2} + 2 \lambda  \| b \|_{1}.
\end{split}
\end{align*}
Using the assumptions of Section~\ref{sec:setting-notation}, $(Y,X)$ has the same distribution under $\mathbb{P}_{0}$ as  $(Y- \Etrain[Y|A],X - \Etrain[X|A])$ under $\Ptrain$. Hence with $\mu_{\X}^{(a)} = \Etrain[X|A=a]$ and $\mu_{\Y}^{(a)} = \Etrain[Y|A=a]$ we can rewrite the risk as
\begin{equation*}
R(b) = \mathbb{E}_{0}[(Y-X^\intercal b)^{2}] +  \frac{\gamma}{| \mathcal{A} |} \sum_{a \in \mathcal{A}} ( \mu_{\Y}^{(a)} - (\mu_{\X}^{(a)})^{\intercal} b)^{2}.
\end{equation*}
\textbf{Step 1: rewriting the excess risk.}
By definition we have $b^{\gamma} = \argmin_{b} R(b)$. For all $b \in \mathbb{R}^{d}$, all $\lambda \ge 0$, and all $\gamma \ge 0$ we thus have the decomposition
\begin{align*}
\mathbb{E}_{0}[(Y-X^\intercal b)^{2}] +  \frac{\gamma}{| \mathcal{A} |} \sum_{a \in \mathcal{A}} ( \mu_{\Y}^{(a)} - (\mu_{\X}^{(a)})^{\intercal} b)^{2} = \,  & \mathbb{E}_{0}[(X (b-b^{\gamma}))^{2}] +  \frac{\gamma}{| \mathcal{A} |} \sum_{a \in \mathcal{A}} ( (\mu_{\X}^{(a)})^{\intercal} (b-b^{\gamma}))^{2} \\
 &+ \mathbb{E}_{0}[(Y-X^\intercal b^{\gamma})^{2}] +  \frac{\gamma}{| \mathcal{A} |} \sum_{a \in \mathcal{A}} ( \mu_{\Y}^{(a)} - (\mu_{\X}^{(a)})^{\intercal} b^{\gamma})^{2}.
\end{align*}
Hence if we write $W(b) =  \mathbb{E}_{0}[(X^{\intercal} (b-b^{\gamma}))^{2}] +  \frac{\gamma}{| \mathcal{A} |} \sum_{a \in \mathcal{A}} ( (\mu_{\X}^{(a)})^{\intercal} (b-b^{\gamma}))^{2}$, we can rewrite the excess risk as
\begin{equation*}
  R(\hat b) - \min_{b} R(b)  = W(\hat b).
\end{equation*}
We want to relate this excess risk to the empirical excess risk. Define
\begin{equation*}
  \hat W(b) := \frac{1}{| \mathcal{A}|} \sum_{a \in \mathcal{A}} \frac{1}{n_{a}} \sum_{i=1}^{n_{a}}  \left( (\X_{i,\bullet}^{(a)} - \overline{\X}^{(a)})  (b- b^{\gamma}) \right)^{2}  + \frac{\gamma}{| \mathcal{A}|} \sum_{a \in \mathcal{A}} ( \overline{\X}^{(a)} (b-b^{\gamma}))^{2}.
\end{equation*}
As $(X,Y,H)^{\intercal} = (\mathrm{Id} - \mathbf{B})^{-1} \varepsilon$ under $\mathbb{P}_{0}$ and $\varepsilon$ follows a centered multivariate Gaussian distribution, under $\mathbb{P}_{0}$, $(X,Y)$ follows a centered multivariate Gaussian distribution as well. Recall that in the proof of Theorem~\ref{thm:anchor-regression} we have shown that the distribution of $(Y,X)$ under $\mathbb{P}_{0}$ is the same as $(Y-\mathbb{E}[Y|A],X - \mathbb{E}[X|A]) = (Y-\mu_{\Y}^{A},X - \mu_{\X}^{A})$ under $\Ptrain$. As $A$ and $\varepsilon$ are independent, the distribution of $(Y,X)|(A=a)$ under $\Ptrain$ is the same as the distribution of $ (Y + \mu_{\Y}^{(a)},X + \mu_{X}^{(a)}) $ under $\mathbb{P}_{0}$. Using Lemma~\ref{lemma-new:excess-risk}, we obtain that with probability exceeding $1-4\exp(-t)$,
\begin{equation}\label{eq:new-18}
  W(\hat b) \le \frac{C''}{|S^{*}|} \| \hat b - b^{\gamma} \|_{1}^{2} + \hat W(\hat b),
\end{equation}
where $C''$ is a constant that depends on $c'$, $ \max_{k} \mathrm{Var}(X^{0}_{k})$, $\max_{a \in \mathcal{A}} \| \mu_{\X}^{(a)} \|_{\infty}$  and $\gamma$. \\
\textbf{Step 2: bounds for empirical excess risk $\hat W(\hat b)$ and $\| \hat b - b^{\gamma} \|_1$.}
It turns out that it is straightforward to derive finite-sample bounds for $\hat W(\hat b)$ and $\| \hat b - b^{\gamma} \|_{1}$, leveraging existing finite-sample bounds for the Lasso. To this end, let us define
\begin{equation*}
z^{*} :=  \left\| \frac{1}{| \mathcal{A} |}  \sum_{a \in \mathcal{A}}  \frac{1}{n_{a}} \sum_{i=1}^{n_{a}} (\X_{i,\bullet}^{(a)} - \overline{\X}^{(a)})^{\intercal} \left( \Z_{i}^{(a)} - \overline{\Z}^{(a)}  \right)  + \frac{\gamma}{| \mathcal{A} |} \sum_{a \in \mathcal{A}} (\overline{\X}^{(a)})^{\intercal} \overline{\Z}^{(a)}  \right\|_{\infty}
\end{equation*}
From Lemma~\ref{lemma-new:risk} it follows that with probability $1-6 \exp(-t)$ we have $ 2 \| z^{*} \|_{\infty} \le \lambda $.
Now we can use Lemma~\ref{lemma:finite-sample-bound}  to bound $\hat W( \hat b)$ and $ \| \hat b - b^{\gamma} \|_{1}$.  Lemma~\ref{lemma:finite-sample-bound}  follows directly from Theorem 2.2 in \citet{van2016estimation}, but the notation is different. Details can be found in Section~\ref{sec:proof-coroll-refc}. Use Lemma~\ref{lemma:finite-sample-bound} with $b=b^{\gamma}$, $S=S^{*}$, $ \lambda_{\varepsilon} = \| z^{*} \|_{\infty}$ and $\delta = 0.5$. This gives $\underline{\lambda} = \lambda - \| z^{*} \|_{\infty} \ge \frac{\lambda}{2}$, $ \overline{\lambda} = 1.5 \lambda + 0.5 \| z^{*} \|_{\infty} \le 2 \lambda$ and $L \le 8$ to obtain
\begin{align*}
\hat W(\hat b)
 &\le  \frac{4\lambda^{2} |S^{*}|}{\hat{\phi}^{2}(8,S)}, \\
  \text{ and }\| \hat b - b^{\gamma} \|_{1} &\le  \frac{ 8  \lambda |S^{*}|}{\hat{\phi}^{2}(8,S^{*})}.
\end{align*}
Combining these two bounds with equation~\eqref{eq:new-18} yields the desired result.
\end{proof}
\subsubsection{Lemma~\ref{lemma-new:excess-risk}}

\begin{lemma}\label{lemma-new:excess-risk}
Let $X$ follow a centered multivariate Gaussian distribution under $\mathbb{P}_{0}$. Let $\X_{i,\bullet}^{(a)}$, $i=1,\ldots,n_{a}$ be i.i.d. random variables that have the same distribution as $\mu_{\X}^{(a)} + X$ for some deterministic vectors $\mu_{\X}^{(a)} \in \mathbb{R}^{d}$ for $a \in \mathcal{A}$.
Let $\sigma_{\mathrm{max}}^{2} := \max_{k} \mathrm{Var}(X_{k})$, $n_{\mathrm{min}} := \min_{a \in \mathcal{A}} n_{a}$, $\mu_{\mathrm{max}} := \max_{a \in \mathcal{A}} \| \mu_{\X}^{(a)} \|_{\infty}$ and define the empirical means $\overline{\X}^{(a)} := \frac{1}{n_{a}} \sum_{i=1}^{n_{a}} \X_{i,\bullet}^{(a)}$ for $a \in \mathcal{A}$. Let $t \ge 0$ such that
\begin{equation}\label{eq:19}
 |S^{*}|^{2} ( t +  \log(d) + \log( | \mathcal{A}|))/n_{\mathrm{min}}   \le c',
\end{equation}
for some constant $c'>0$. Then, with probability exceeding $1-4 \exp(-t)$, for any vectors $b, b^{\gamma} \in \mathbb{R}^{d}$,
\begin{align*}
   & \mathbb{E}_{0}[(X^{\intercal}(b-b^{\gamma}))^{2}] + \frac{\gamma}{| \mathcal{A} |} \sum_{a \in \mathcal{A}} ((\mu_{\X}^{(a)})^{\intercal}(b-b^{\gamma}))^{2}  \\
\le \, &  \frac{1}{| \mathcal{A}|} \sum_{a \in \mathcal{A}} \frac{1}{n_{a}} \sum_{i=1}^{n_{a}}  \left( (\X_{i,\bullet}^{(a)} - \overline{\X}^{(a)})  (b- b^{\gamma}) \right)^{2}  + \frac{\gamma}{| \mathcal{A}|} \sum_{a \in \mathcal{A}} ( \overline{\X}^{(a)} (b-b^{\gamma}))^{2}  +  \frac{C''}{|S^{*}|}   \| b - b^{\gamma} \|_{1}^{2},
\end{align*}
where $C''$ is a constant that depends on $c'$, $\sigma_{\mathrm{max}}$, $\mu_{\mathrm{max}}$ and $\gamma$.
\end{lemma}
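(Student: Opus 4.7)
The plan is to treat both sides of the inequality as quadratic forms in $\Delta := b - b^\gamma$ and bound the gap by off-the-shelf Gaussian concentration for sample covariances and sample means. Writing $\X_{i,\bullet}^{(a)} = \mu_{\X}^{(a)} + X_{i,\bullet}^{(a)}$ with $X_{i,\bullet}^{(a)}$ i.i.d.\ centered Gaussian with covariance $\Sigma := \mathrm{Cov}_0(X)$, and noting $\overline{X}^{(a)} := n_a^{-1}\sum_i X_{i,\bullet}^{(a)} = \overline{\X}^{(a)} - \mu_\X^{(a)}$, so that $\X_{i,\bullet}^{(a)} - \overline{\X}^{(a)} = X_{i,\bullet}^{(a)} - \overline{X}^{(a)}$, the difference between the left-hand side and the first two summands on the right-hand side of the lemma reduces to $T_1(\Delta) + \gamma\, T_2(\Delta)$, where
\begin{equation*}
T_1(\Delta) = \Delta^{\top}(\Sigma - \hat{\Sigma})\Delta, \qquad \hat{\Sigma} := \frac{1}{|\mathcal{A}|}\sum_{a \in \mathcal{A}}\frac{1}{n_a}\sum_{i=1}^{n_a}(X_{i,\bullet}^{(a)} - \overline{X}^{(a)})(X_{i,\bullet}^{(a)} - \overline{X}^{(a)})^{\top},
\end{equation*}
\begin{equation*}
T_2(\Delta) = \frac{1}{|\mathcal{A}|}\sum_{a \in \mathcal{A}}\left[-2\bigl((\mu_{\X}^{(a)})^{\top}\Delta\bigr)\bigl((\overline{X}^{(a)})^{\top}\Delta\bigr) - \bigl((\overline{X}^{(a)})^{\top}\Delta\bigr)^{2}\right].
\end{equation*}
The last summand of $T_2$ is non-positive and will simply be dropped, since I only need an upper bound.

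For $T_1$, the dual bound gives $|T_1(\Delta)| \le \|\Sigma - \hat{\Sigma}\|_{\infty}\,\|\Delta\|_{1}^{2}$, where $\|\cdot\|_{\infty}$ denotes the maximum absolute entry. Standard Gaussian concentration of empirical covariances in entry-wise norm (e.g.\ Lemma~14.12 of Bühlmann--van de Geer, 2011), applied per group $a$ and union-bounded over $\mathcal{A}$, yields
\begin{equation*}
\|\hat{\Sigma} - \Sigma\|_{\infty} \le C_1 \sigma_{\max}^{2}\sqrt{(t + \log d + \log|\mathcal{A}|)/n_{\min}}
\end{equation*}
on an event of probability at least $1-2e^{-t}$. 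For $T_2$, I bound each cross term by Hölder as $|2(\mu_{\X}^{(a)})^{\top}\Delta \cdot (\overline{X}^{(a)})^{\top}\Delta| \le 2\mu_{\max}\,\|\overline{X}^{(a)}\|_{\infty}\,\|\Delta\|_{1}^{2}$ and control the maximum of the Gaussian means by a coordinate-wise Gaussian tail and a union bound across $d$ coordinates and $|\mathcal{A}|$ groups (each coordinate of $\overline{X}^{(a)}$ has variance at most $\sigma_{\max}^{2}/n_a$), giving
\begin{equation*}
\max_{a \in \mathcal{A}}\|\overline{X}^{(a)}\|_{\infty} \le C_2 \sigma_{\max}\sqrt{(t + \log d + \log|\mathcal{A}|)/n_{\min}}
\end{equation*}
on an event of probability at least $1-2e^{-t}$.

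Intersecting the two events (failure probability $\le 4e^{-t}$), the total slack is at most $C_3(\sigma_{\max}^{2} + \gamma \mu_{\max}\sigma_{\max})\,\|\Delta\|_{1}^{2}\sqrt{(t+\log d + \log|\mathcal{A}|)/n_{\min}}$. The assumption \eqref{eq:19} is tailored precisely so that the square-root factor is at most $\sqrt{c'}/|S^{*}|$, which converts the bound into the required form $C''\|\Delta\|_{1}^{2}/|S^{*}|$ with $C''$ depending only on $c'$, $\sigma_{\max}$, $\mu_{\max}$, and $\gamma$. The main obstacle I expect is the bookkeeping: carefully tracking the union bounds over groups and coordinates so the final failure probability is of the stated form, and verifying that $C''$ indeed depends only on the quantities listed in the lemma and not, e.g., on $d$ or $|\mathcal{A}|$; the algebraic decomposition and the individual concentration statements are essentially standard.
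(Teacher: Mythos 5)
Your proposal is correct and takes essentially the same route as the paper's proof: the same reduction to a covariance term and a group-mean term in $\Delta := b-b^{\gamma}$, Hölder pairing $\|\Delta\|_1^2$ with entrywise sup-norms, sub-Gaussian/sub-Gamma concentration union-bounded over coordinates and anchor levels, intersection of the events to get $1-4\exp(-t)$, and finally the assumption $|S^{*}|^{2}(t+\log d+\log|\mathcal{A}|)/n_{\mathrm{min}}\le c'$ to turn the rate into $C''\|\Delta\|_1^2/|S^{*}|$. The only cosmetic deviations are that you drop the non-positive term $-\bigl((\overline{X}^{(a)})^{\intercal}\Delta\bigr)^{2}$ rather than bounding its absolute value, and you quote the covariance concentration with only the square-root term, which is harmless since the same assumption bounds $(t+\log d+\log|\mathcal{A}|)/n_{\mathrm{min}}$ by $c'$ so the sub-exponential (linear) part folds into the constant, exactly as the paper handles it explicitly.
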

\begin{proof}
We will derive bounds for
\begin{align}\label{eq:10}
  \begin{split}
    \left| \frac{1}{| \mathcal{A}|} \sum_{a \in \mathcal{A}} \frac{1}{n_{a}} \sum_{i=1}^{n_{a}}  \left( (\X_{i,\bullet}^{(a)} - \overline{\X}^{(a)})  (b- b^{\gamma}) \right)^{2} - \mathbb{E}_{0}[(X^{\intercal}(b - b^{\gamma}))^{2}] \right|
  \end{split}
\end{align}
and
\begin{align}\label{eq:17}
  \begin{split}
 \left|  \frac{\gamma}{| \mathcal{A}|} \sum_{a \in \mathcal{A}} ( \overline{\X}^{(a)} (b-b^{\gamma}))^{2} - \frac{\gamma}{| \mathcal{A} |} \sum_{a \in \mathcal{A}} ( (\mu_{\X}^{(a)})^{\intercal}(b-b^{\gamma}))^{2} \right|
  \end{split}
\end{align}
separately. By elementary linear algebra,
\begin{align*}
\begin{split}
 & \left| \frac{1}{| \mathcal{A}|} \sum_{a \in \mathcal{A}} \frac{1}{n_{a}} \sum_{i=1}^{n_{a}}  \left( (\X_{i,\bullet}^{(a)} - \overline{\X}^{(a)})  (b- b^{\gamma}) \right)^{2} - \mathbb{E}_{0}[(X^{\intercal}(b - b^{\gamma}))^{2}] \right| \\
\le &\| b- b^{\gamma} \|_{1}^{2} \left\|  \frac{1}{| \mathcal{A}|} \sum_{a \in \mathcal{A}} \frac{1}{n_{a}} (\X_{i,\bullet}^{(a)} - \overline{\X}^{(a)})^{\intercal} (\X_{i,\bullet}^{(a)} - \overline{\X}^{(a)})  - \mathbb{E}_{0}[X X^{\intercal}]  \right\|_{\infty}
\end{split}
\end{align*}
Now, using $\sum_{i=1}^{n_{a}} (\X_{i,\bullet}^{(a)}- \overline{\X}^{(a)}) =0$ repeatedly,
\begin{align}\label{eq:7}
\begin{split}
   \frac{1}{| \mathcal{A}|} \sum_{a \in \mathcal{A}} \frac{1}{n_{a}} \sum_{i=1}^{n_{a}}  (\X_{i,\bullet}^{(a)} - \overline{\X}^{(a)})^{\intercal} (\X_{i,\bullet}^{(a)} - \overline{\X}^{(a)}) &=  \frac{1}{| \mathcal{A}|} \sum_{a \in \mathcal{A}} \frac{1}{n_{a}} \sum_{i=1}^{n_{a}} (\X_{i,\bullet}^{(a)} - (\mu_{\X}^{(a)})^{\intercal})^{\intercal} (\X_{i,\bullet}^{(a)} - (\overline{\X}^{(a)})) \\
&=   \frac{1}{| \mathcal{A}|} \sum_{a \in \mathcal{A}} \frac{1}{n_{a}} \sum_{i=1}^{n_{a}} (\X_{i,\bullet}^{(a)} - (\mu_{\X}^{(a)})^{\intercal})^{\intercal} (\X_{i,\bullet}^{(a)} - (\mu_{\X}^{(a)})^{\intercal}) \\
&-  \frac{1}{| \mathcal{A}|} \sum_{a \in \mathcal{A}} (\overline{\X}^{(a)} - (\mu_{\X}^{(a)})^{\intercal})^{\intercal} (\overline{\X}^{(a)} - (\mu_{\X}^{(a)})^{\intercal})
\end{split}
\end{align}
We treat these two terms separately. First, using a sub-Gamma tail bound \citep[][Chapter 2]{boucheron2013concentration}, with probability exceeding $1-2 \exp(-t)$,
\begin{align*}
 &\max_{a \in \mathcal{A}} \left\| \frac{1}{n_{a}} \sum_{i=1}^{n_{a}} (\X_{i,\bullet}^{(a)} - (\mu_{\X}^{(a)})^{\intercal})^{\intercal} (\X_{i,\bullet}^{(a)} - (\mu_{\X}^{(a)})^{\intercal}) - \mathbb{E}_{0}[X X^{\intercal}] \right\|_{\infty} \\
\le \, &\sigma_{\text{max}}^{2} \left( \sqrt{ \frac{4t + 4 \log(d^{2} \cdot | \mathcal{A}|)}{n_{\text{min}}}} + \frac{4 t + 4 \log(d^{2} \cdot | \mathcal{A}| )}{n_{\text{min}}} \right).
\end{align*}
Using a  sub-Gaussian tail bound \citep[][Chapter 2]{boucheron2013concentration}, with probability exceeding $1-2 \exp(-t)$,
\begin{equation}\label{eq:11}
 \max_{a \in \mathcal{A}} \| \overline{\X}^{(a)} - (\mu_{\X}^{(a)})^{\intercal} \|_{\infty} \le \sqrt{2 \frac{\sigma_{\text{max}}^{2}}{n_{\text{min}}}(t + \log(d \cdot | \mathcal{A}|))}.
\end{equation}
On this event,
\begin{align*}
  \left\|  \frac{1}{| \mathcal{A}|} \sum_{a \in \mathcal{A}} (\overline{\X}^{(a)} - (\mu_{\X}^{(a)})^{\intercal})^{\intercal} (\overline{\X}^{(a)} - (\mu_{\X}^{(a)})^\intercal) \right\|_{\infty} &\le \max_{a \in \mathcal{A}}  \left\| (\overline{\X}^{(a)} - (\mu_{\X}^{(a)})^\intercal)^{\intercal} (\overline{\X}^{(a)} - (\mu_{\X}^{(a)})^\intercal) \right\|_{\infty} \\  &\le  2 \frac{\sigma_{\text{max}}^{2}}{n_{\text{min}}} (t+ \log(d \cdot | \mathcal{A}|))
\end{align*}
Using these two bounds in equation~\eqref{eq:7}, we obtain the following bound for equation~\eqref{eq:10}:
\begin{align}\label{eq:21}
\begin{split}
& \quad \left| \frac{1}{| \mathcal{A}|} \sum_{a \in \mathcal{A}} \frac{1}{n_{a}} \sum_{i=1}^{n_{a}}  \left( (\X_{i,\bullet}^{(a)} - \overline{\X}^{(a)})  (b- b^{\gamma}) \right)^{2} - \mathbb{E}_{0}[(X^{\intercal}(b - b^{\gamma}))^{2}] \right| \\
& \le \| b -b^{\gamma} \|_{1}^{2} \left\|  \frac{1}{| \mathcal{A}|} \sum_{a \in \mathcal{A}} \frac{1}{n_{a}} \sum_{i=1}^{n_{a}}  (\X_{i,\bullet}^{(a)} - \overline{\X}^{(a)})^{\intercal} (\X_{i,\bullet}^{(a)} - \overline{\X}^{(a)}) - \mathbb{E}_{0}[X X^{\intercal}] \right\|_{\infty} \\
&\le   \| b -b^{\gamma} \|_{1}^{2} \left(  \sigma_{\text{max}}^{2} \left( \sqrt{ \frac{4t + 4 \log(d^{2} \cdot | \mathcal{A}|)}{n_{\text{min}}}} + \frac{4 t + 4 \log(d^{2} \cdot | \mathcal{A}| )}{n_{\text{min}}} \right)  \, +   2 \frac{\sigma_{\text{max}}^{2}}{n_{\text{min}}} (t+ \log(d \cdot | \mathcal{A}|)) \right)
\end{split}
\end{align}
Let us now treat equation~\eqref{eq:17}. Analogously as above,
\begin{align}\label{eq:22}
  \begin{split}
  &  \left|  \frac{\gamma}{| \mathcal{A}|} \sum_{a \in \mathcal{A}} ( \overline{\X}^{(a)} (b-b^{\gamma}))^{2} - \frac{\gamma}{| \mathcal{A} |} \sum_{a \in \mathcal{A}} ((\mu_{\X}^{(a)})^\intercal(b-b^{\gamma}))^{2} \right| \\
\le \, & \gamma \| b -b^{\gamma} \|_{1}^{2} \max_{a \in \mathcal{A}} \left\| (\overline{\X}^{(a)})^{\intercal} \overline{\X}^{(a)}  -  \mu_{\X}^{(a)} (\mu_{\X}^{(a)})^\intercal \right\|_{\infty}.
  \end{split}
\end{align}
Again, we can use a decomposition
\begin{align*}
\begin{split}
   (\overline{\X}^{(a)})^{\intercal} \overline{\X}^{(a)}  -  \mu_{\X}^{(a)}  (\mu_{\X}^{(a)})^{\intercal} &=  (\overline{\X}^{(a)})^{\intercal} (\overline{\X}^{(a)}-(\mu_{\X}^{(a)})^{\intercal}) + (\overline{\X}^{(a)})^{\intercal} (\mu_{\X}^{(a)})^{\intercal}  -  \mu_{\X}^{(a)} (\mu_{\X}^{(a)})^{\intercal} \\
&=  (\overline{\X}^{(a)} - (\mu_{\X}^{(a)})^{\intercal})^{\intercal} (\overline{\X}^{(a)}-(\mu_{\X}^{(a)})^{\intercal}) +  \mu_{\X}^{(a)} (\overline{\X}^{(a)}-(\mu_{\X}^{(a)})^{\intercal}) \\ & \qquad+ (\overline{\X}^{(a)}  -  (\mu_{\X}^{(a)})^{\intercal})^{\intercal} (\mu_{\X}^{(a)})^{\intercal}
\end{split}
\end{align*}
Using this decomposition in equation~\eqref{eq:22},
\begin{align*}
  \begin{split}
     &  \left|  \frac{\gamma}{| \mathcal{A}|} \sum_{a \in \mathcal{A}} ( \overline{\X}^{(a)} (b-b^{\gamma}))^{2} - \frac{\gamma}{| \mathcal{A} |} \sum_{a \in \mathcal{A}} ((\mu_{\X}^{(a)})^{\intercal}(b-b^{\gamma}))^{2} \right| \\
\le \, & \gamma \| b -b^{\gamma} \|_{1}^{2}  \max_{a \in \mathcal{A}} \bigg( \| (\overline{\X}^{(a)} - (\mu_{\X}^{(a)})^{\intercal})^{\intercal} (\overline{\X}^{(a)}-(\mu_{\X}^{(a)})^{\intercal}) \|_{\infty} + \|  \mu_{\X}^{(a)} (\overline{\X}^{(a)}- (\mu_{\X}^{(a)})^{\intercal}) \|_{\infty} \\
& + \| (\overline{\X}^{(a)}  -  (\mu_{\X}^{(a)})^{\intercal})^{\intercal} (\mu_{\X}^{(a)})^{\intercal}\|_{\infty}   \bigg)
  \end{split}
\end{align*}
Recall that $\mu_{\text{max}} = \max_{a \in \mathcal{A}} \| \mu_{\X}^{(a)} \|_{\infty}$. Using the sub-Gaussian tail bound of equation~\eqref{eq:11} for all three terms in the preceding equation, we obtain the following bound:
\begin{align}\label{eq:20}
\begin{split}
 &  \left|  \frac{\gamma}{| \mathcal{A}|} \sum_{a \in \mathcal{A}} ( \overline{\X}^{(a)} (b-b^{\gamma}))^{2} - \frac{\gamma}{| \mathcal{A} |} \sum_{a \in \mathcal{A}} ((\mu_{\X}^{(a)})^{\intercal}(b-b^{\gamma}))^{2} \right| \\
\le \, &   \gamma \| b -b^{\gamma} \|_{1}^{2} 3  \max \left( \mu_{\text{max}} \sqrt{2 \frac{\sigma_{\text{max}}^{2}}{n_{\text{min}}}(t + \log(d \cdot | \mathcal{A}|))} , 2 \frac{\sigma_{\text{max}}^{2}}{n_{\text{min}}}(t + \log(d \cdot | \mathcal{A}|))   \right)
\end{split}
\end{align}
Using equation~\eqref{eq:19} in equation~\eqref{eq:20} and equation~\eqref{eq:21} yields the desired result.
\end{proof}

\subsubsection{Lemma~\ref{lemma-new:risk}}

\begin{lemma}\label{lemma-new:risk} Let $(X,Y)$ follow a centered multivariate Gaussian distribution under $\mathbb{P}_{0}$. Let $\mathcal{A}$ be a finite set and $(\X_{i,\bullet}^{(a)},\Y_{i}^{(a)})$, $i=1,\ldots,n_{a}$ i.i.d. observations that have the same distribution as $(\mu_{\X}^{(a)} + X,\mu_{\Y}^{(a)} + Y)$ for some deterministic quantities $\mu_{\X}^{(a)} \in \mathbb{R}^{d}$ and $\mu_{\Y}^{(a)} \in \mathbb{R}$ for $a \in \mathcal{A}$.  Let $b^{\gamma} \in \mathbb{R}^{d}$ such that
\begin{equation}\label{eq:23}
  b^{\gamma} = \argmin_{b} \mathbb{E}_{0} \left[ \sum_{a \in \mathcal{A}}  \frac{1}{n_{a}}  \sum_{i=1}^{n_{a}}( \Y_{i}^{(a)} - \mu_{\Y}^{(a)} - (\X_{i,\bullet}^{(a)} - (\mu_{\X}^{(a)})^{\intercal}) b)^{2} \right] + \gamma \sum_{a \in \mathcal{A}}  ( \mu_{\Y}^{(a)} - (\mu_{\X}^{(a)})^{\intercal} b )^{2}.
\end{equation}
Define $n_{\mathrm{min}} := \min_{a \in \mathcal{A}} n_{a}$. 
Let $t \ge 0$ such that
\begin{equation*}
 \frac{t+ \log(d \cdot | \mathcal{A} |)}{n_{\text{min}}} \le c'
\end{equation*}
for some constant $c'>0$. Then, with probability exceeding $1- 6 \exp(-t)$,
\begin{align}\label{eq-new:7}
\begin{split}
z^{*} = & \left\| \frac{1}{| \mathcal{A} |}  \sum_{a \in \mathcal{A}}  \frac{1}{n_{a}} \sum_{i=1}^{n_{a}} (\X_{i,\bullet}^{(a)} - \overline{\X}^{(a)})^{\intercal} \left( \Z_{i}^{(a)} - \overline{\Z}^{(a)}  \right)  + \frac{\gamma}{| \mathcal{A} |} \sum_{a \in \mathcal{A}} (\overline{\X}^{(a)})^{\intercal} \overline{\Z}^{(a)}  \right\|_{\infty} \\
\le \, &  \frac{C}{2} \sqrt{\frac{t+ \log( d \cdot | \mathcal{A} |)}{n_{\text{min}}} },
\end{split}
\end{align}
where $\Z^{(a)} = \Y^{(a)} - \X^{(a)} b^{\gamma}$ and  $ \overline{\Z}^{(a)} = \frac{1}{n_{a}} \sum_{i=1}^{n_{a}} \Z_{i}^{(a)} $. The constant $C$ depends on $\mu_{\mathrm{max}}$, $\sigma_{\mathrm{max}}$, $\gamma$ and $c'$. Here, $\sigma_{\mathrm{max}}$ denotes the maximal standard deviation, i.e., $\sigma_{\mathrm{max}}^{2} := \max(\max_{k} \mathrm{Var}(X_{k}),\mathrm{Var}(Y - X^\intercal b^{\gamma}))$ and $\mu_{\mathrm{max}}$ denotes the maximal mean, i.e.,   $\mu_{\mathrm{max}} := \max( \max_{a \in \mathcal{A}} \| \mu_{\X}^{(a)} \|_{\infty} , | \mu_{\Y}^{(a)} - (\mu_{\X}^{(a)})^{\intercal} b^{\gamma} | )$.
\end{lemma}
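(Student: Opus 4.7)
The quantity $z^*$ is the empirical gradient (with respect to $b$, evaluated at $b^{\gamma}$) of a stochastic version of the objective in \eqref{eq:23}, up to a factor of $|\mathcal{A}|$. The plan is therefore to invoke the first-order optimality condition for $b^\gamma$, which forces the \emph{population} analog of $z^*$ to vanish, and then bound the remaining stochastic fluctuations coordinatewise via standard concentration inequalities. The first step is to introduce the centered variables $\tilde{\X}_{i,\bullet}^{(a)} := \X_{i,\bullet}^{(a)} - (\mu_{\X}^{(a)})^{\intercal}$ and $\tilde{\Z}_i^{(a)} := \Z_i^{(a)} - \mu_{\Z}^{(a)}$ with $\mu_{\Z}^{(a)} := \mu_{\Y}^{(a)} - (\mu_{\X}^{(a)})^{\intercal} b^\gamma$; by assumption $(\tilde{\X}_{i,\bullet}^{(a)}, \tilde{\Z}_i^{(a)})$ is distributed as $(X, Y - X^{\intercal} b^\gamma)$ under $\mathbb{P}_0$, i.e.\ jointly centered Gaussian. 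Using the identity $\frac{1}{n_a}\sum_i (\X_{i,\bullet}^{(a)} - \overline{\X}^{(a)})^{\intercal}(\Z_i^{(a)} - \overline{\Z}^{(a)}) = \frac{1}{n_a}\sum_i (\X_{i,\bullet}^{(a)})^{\intercal}\Z_i^{(a)} - (\overline{\X}^{(a)})^{\intercal}\overline{\Z}^{(a)}$ and expanding around the means leads to the decomposition
\begin{align*}
z^* &= \mathbf{I} + \mathbf{II} + \mathbf{III} + \mathbf{IV} + \frac{\gamma}{|\mathcal{A}|}\sum_{a} \mu_{\X}^{(a)}\,\mu_{\Z}^{(a)}, \\
\mathbf{I} &= \frac{\gamma}{|\mathcal{A}|}\sum_{a} \mu_{\X}^{(a)}\,\overline{\tilde{\Z}}^{(a)}, \qquad \mathbf{II} = \frac{\gamma}{|\mathcal{A}|}\sum_{a} (\overline{\tilde{\X}}^{(a)})^{\intercal}\,\mu_{\Z}^{(a)}, \\
\mathbf{III} &= \frac{1}{|\mathcal{A}|}\sum_a \frac{1}{n_a}\sum_i (\tilde{\X}_{i,\bullet}^{(a)})^{\intercal}\tilde{\Z}_i^{(a)}, \qquad \mathbf{IV} = \frac{\gamma-1}{|\mathcal{A}|}\sum_a (\overline{\tilde{\X}}^{(a)})^{\intercal}\overline{\tilde{\Z}}^{(a)}.
\end{align*}
The first-order optimality condition for $b^\gamma$ in \eqref{eq:23}, after dividing by $|\mathcal{A}|$, reads $\mathbb{E}_0[X(Y - X^{\intercal} b^\gamma)] + \frac{\gamma}{|\mathcal{A}|}\sum_a \mu_{\X}^{(a)}\mu_{\Z}^{(a)} = 0$, and since $\mathbb{E}[\mathbf{III}] = \mathbb{E}_0[X(Y-X^{\intercal} b^\gamma)]$, this deterministic term exactly cancels the mean of $\mathbf{III}$, leaving four centered fluctuations.

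For each coordinate $k \in \{1,\dots,d\}$, the terms $\mathbf{I}_k$ and $\mathbf{II}_k$ are weighted linear combinations of independent centered Gaussians with variance at most $\gamma^2 \mu_{\max}^2 \sigma_{\max}^2/(|\mathcal{A}| n_{\min})$, so I would bound them by sub-Gaussian tail inequalities of order $\gamma \mu_{\max}\sigma_{\max}\sqrt{(t+\log(d|\mathcal{A}|))/n_{\min}}$ after a union bound over $d \cdot |\mathcal{A}|$ indices, in the same way that \eqref{eq:11} was obtained in Lemma~\ref{lemma-new:excess-risk}. The centered version of $\mathbf{III}_k$ is a sum of centered products of jointly Gaussian variables, hence sub-exponential, and I would control it by a Bernstein-type inequality in the spirit of \citet[Chapter~2]{boucheron2013concentration} to get a bound of order $\sigma_{\max}^2\bigl(\sqrt{(t+\log(d|\mathcal{A}|))/n_{\min}} + (t+\log(d|\mathcal{A}|))/n_{\min}\bigr)$. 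Finally, $\mathbf{IV}_k$ is a sum of products of two Gaussian sample means, each of order $\sigma_{\max}\sqrt{(t+\log(d|\mathcal{A}|))/n_{\min}}$ with high probability, giving a second-order contribution of order $|\gamma-1|\sigma_{\max}^2(t+\log(d|\mathcal{A}|))/n_{\min}$. Under the hypothesis $(t+\log(d|\mathcal{A}|))/n_{\min} \le c'$, the second-order $1/n_{\min}$ pieces can be absorbed into the leading $\sqrt{\cdot}$ term, and summing the four failure probabilities yields the claimed bound with probability at least $1-6\exp(-t)$, with the constant $C$ consolidating the dependence on $\mu_{\max}$, $\sigma_{\max}$, $\gamma$, and $c'$.

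The main obstacle is the bound on $\mathbf{III}$: products of jointly Gaussian variables are not sub-Gaussian, so a Bernstein-type (sub-exponential) inequality is required, and this is what forces the constant $C$ to depend on $\sigma_{\max}^2$ (and hence on $\mathrm{Var}(Y - X^{\intercal} b^\gamma)$) rather than only on $\sigma_{\max}$. The other three contributions reduce to essentially one-line Gaussian tail bounds combined with a union bound over the $d|\mathcal{A}|$ indices.
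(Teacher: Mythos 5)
Your proposal is correct and follows essentially the same route as the paper's proof: both exploit the first-order optimality condition of \eqref{eq:23} to cancel the deterministic term $\frac{\gamma}{|\mathcal{A}|}\sum_a \mu_{\X}^{(a)}\mu_{\Z}^{(a)}$ against the population cross-covariance, decompose $z^*$ around the group means into sample-mean terms (sub-Gaussian), centered Gaussian-product terms (sub-Gamma/Bernstein), and second-order product-of-sample-means terms, and then absorb the higher-order pieces using $(t+\log(d|\mathcal{A}|))/n_{\min}\le c'$ with a union bound giving the $1-6\exp(-t)$ guarantee. The only difference is cosmetic bookkeeping: you expand directly in the centered variables (yielding the $(\gamma-1)$ coefficient on the product-of-means term), whereas the paper first subtracts the covariance inside the within-group term and splits by the triangle inequality (yielding a $(\gamma+1)$ coefficient), which is an algebraically equivalent rearrangement.
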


\begin{proof}
Recall that $\mu_{\Z}^{(a)} = \mu_{\Y}^{(a)} - (\mu_{\X}^{(a)})^{\intercal} b^{\gamma}$ for $a \in \mathcal{A}$. By taking the derivative of the objective functional in equation~\eqref{eq:23} with respect to $b$,
\begin{equation*}
  \mathbb{E}_{0} \left[ \sum_{a \in \mathcal{A}} \frac{1}{n_{a}} \sum_{i=1}^{n_{a}}  (\X_{i,\bullet}^{(a)} - (\mu_{\X}^{(a)})^{\intercal})^{\intercal} \left( \Z_{i}^{(a)} - \mu_{\Z}^{(a)}  \right) \right] + \gamma \sum_{a \in \mathcal{A}}  \mu_{\X}^{(a)}   \mu_{\Z}^{(a)}  = 0.
\end{equation*}
Using this, we can decompose equation~\eqref{eq-new:7}:
\begin{align}\label{eq-new:9}
\begin{split}
& \left\| \frac{1}{| \mathcal{A} |}  \sum_{a \in \mathcal{A}}  \frac{1}{n_{a}} \sum_{i=1}^{n_{a}} (\X_{i,\bullet}^{(a)} - \overline{\X}^{(a)})^{\intercal} \left( \Z_{i}^{(a)} - \overline{\Z}^{(a)}  \right)  + \frac{\gamma}{| \mathcal{A} |} \sum_{a \in \mathcal{A}} (\overline{\X}^{(a)})^{\intercal} \overline{\Z}^{(a)}  \right\|_{\infty} \\
&\le  \frac{1}{| \mathcal{A}|} \sum_{a \in \mathcal{A}} \left\|  \frac{1}{n_{a}} \sum_{i=1}^{n_{a}} (\X_{i,\bullet}^{(a)} - \overline{\X}^{(a)})^{\intercal} \left( \Z_{i}^{(a)} - \overline{\Z}^{(a)}  \right) - \text{Cov}(\X_{i,\bullet}^{(a)}, \Z_{i}^{(a)}) \right\|_{\infty} \\
&+ \frac{\gamma}{| \mathcal{A} |}  \sum_{a \in \mathcal{A}} \left\|  (\overline{\X}^{(a)})^{\intercal}  \overline{\Z}^{(a)} - \mu_{\X}^{(a)} \mu_{\Z}^{(a)} \right\|_{\infty}
\end{split}
\end{align}
As $\sum_{i=1}^{n_{a}} (\Z_{i}^{(a)} - \overline{\Z}^{(a)}) = 0$ and $\sum_{i=1}^{n_{a}} (\X_{i,\bullet}^{(a)} - \overline{\X}^{(a)}) = 0$,
\begin{align*}
 & \frac{1}{n_{a}} \sum_{i=1}^{n_{a}} (\X_{i,\bullet}^{(a)} - \overline{\X}^{(a)})^{\intercal} ( \Z_{i}^{(a)} - \overline{\Z}^{(a)}) \\
= \, & \frac{1}{n_{a}} \sum_{i=1}^{n_{a}} (\X_{i,\bullet}^{(a)} -  (\mu_{\X}^{(a)})^{\intercal})^{\intercal}  ( \Z_{i}^{(a)} - \overline{\Z}^{(a)}) \\
= \, & \frac{1}{n_{a}} \sum_{i=1}^{n_{a}} (\X_{i,\bullet}^{(a)} -  (\mu_{\X}^{(a)})^{\intercal})^{\intercal} ( \Z_{i}^{(a)} - \mu_{\Z}^{(a)} ) + \frac{1}{n_{a}} \sum_{i=1}^{n_{a}} ( \X_{i,\bullet}^{(a)} - (\mu_{\X}^{(a)})^{\intercal})^{\intercal} ( \mu_{\Z}^{(a)} - \overline{\Z}^{(a)}) \\
 =  \, & \frac{1}{n_{a}} \sum_{i=1}^{n_{a}} (\X_{i,\bullet}^{(a)} -  (\mu_{\X}^{(a)})^{\intercal})^{\intercal} ( \Z_{i}^{(a)} - \mu_{\Z}^{(a)} ) +  ( \overline{\X}^{(a)} - (\mu_{\X}^{(a)})^{\intercal})^{\intercal} ( \mu_{\Z}^{(a)} - \overline{\Z}^{(a)}).
\end{align*}
Similarly,
\begin{align*}
  &  (\overline{\X}^{(a)})^{\intercal}  \overline{\Z}^{(a)} - \mu_{\X}^{(a)} \mu_{\Z}^{(a)} \\
 = \quad &    (\overline{\X}^{(a)} - (\mu_{\X}^{(a)})^{\intercal}  )^{\intercal} ( \overline{\Z}^{(a)} - \mu_{\Z}^{(a)} ) +  ( \overline{\X}^{(a)} - (\mu_{\X}^{(a)})^{\intercal} )^{\intercal} \mu_{\Z}^{(a)}  + \mu_{\X}^{(a)} ( \overline{\Z}^{(a)} - \mu_{\Z}^{(a)} ).
\end{align*}
Combining these decompositions with equation~\eqref{eq-new:9},
\begin{align}\label{eq-new:10}
\begin{split}
  & \left\| \frac{1}{| \mathcal{A} |}  \sum_{a \in \mathcal{A}}  \frac{1}{n_{a}} \sum_{i=1}^{n_{a}} (\X_{i,\bullet}^{(a)} - \overline{\X}^{(a)})^{\intercal} \left( \Z_{i}^{(a)} - \overline{\Z}^{(a)}  \right)  + \frac{\gamma}{| \mathcal{A} |} \sum_{a \in \mathcal{A}} (\overline{\X}^{(a)})^{\intercal} \overline{\Z}^{(a)} \right\|_{\infty} \\
\le \, & \max_{a \in \mathcal{A}}  \left\| \frac{1}{n_{a}} \sum_{i=1}^{n_{a}} (\X_{i,\bullet}^{(a)} -  (\mu_{\X}^{(a)})^{\intercal})^{\intercal} ( \Z_{i}^{(a)} - \mu_{\Z}^{(a)} ) - \text{Cov}(\X_{i,\bullet}^{(a)}, \Z_{i}^{(a)}) \right\|_{\infty} \\
+ \, & (\gamma+1) \max_{a \in \mathcal{A}} \|  ( \overline{\X}^{(a)} - (\mu_{\X}^{(a)})^{\intercal})^{\intercal} ( \mu_{\Z}^{(a)} - \overline{\Z}^{(a)}) \|_{\infty} \\
+ \, &  \gamma \max_{a \in \mathcal{A}} \| ( \overline{\X}^{(a)} - (\mu_{\X}^{(a)})^{\intercal} )^{\intercal} \mu_{\Z}^{(a)} \|_{\infty} \\
+ \, &  \gamma \max_{a \in \mathcal{A}} \|  \mu_{\X}^{(a)} ( \overline{\Z}^{(a)} - \mu_{\Z}^{(a)} )\|_{\infty}
\end{split}
\end{align}
Using a sub-Gaussian tail bound  \citep[][Chapter 2]{boucheron2013concentration}, with probability exceeding $1 - 4 \exp(-t)$ we have
\begin{equation*}
 \max_{a \in \mathcal{A}} \max ( \| \overline{\X}^{(a)} - (\mu_{\X}^{(a)})^{\intercal} \|_{\infty} ,|\overline{\Z}^{(a)} - \mu_{\Z}^{(a)}|) \le \sqrt{ \frac{2 \sigma_{\text{max}}^{2} (\log(d \cdot |\mathcal{A}|) + t )}{n_{\text{min}}}},
\end{equation*}
With a sub-Gamma tail bound \citep[][Chapter 2]{boucheron2013concentration}, with probability exceeding $1 - 2 \exp(-t)$ we have that
\begin{align*}
 &\max_{a \in \mathcal{A}} \left\| \frac{1}{n_{a}} \sum_{i=1}^{n_{a}} (\X_{i,\bullet}^{(a)} - (\mu_{\X}^{(a)})^{\intercal})^{\intercal} (\Z_{i}^{(a)} - \mu_{\Z}^{(a)}) - \text{Cov}_{\text{train}}(\X_{i,\bullet}^{(a)},\Z_{i}^{(a)})\right\|_{\infty} \\
\le \qquad & \sigma_{\text{max}}^{2} \left( \frac{4 t + 4 \log(d \cdot |\mathcal{A} |)}{n_{\text{min}}} + \sqrt{\frac{4t+4 \log(d \cdot | \mathcal{A} |)}{n_{\text{min}}} } \right).
\end{align*}
Recall that by assumption
\begin{equation*}
 \frac{t+ \log(d \cdot | \mathcal{A} |)}{n_{\text{min}}} \le c'.
\end{equation*}
Using these bounds in equation~\eqref{eq-new:10}, we obtain that with probability exceeding $1 - 6 \exp(-t)$
\begin{align*}
  \| z^{*} \|_{\infty} \le \frac{C}{2} \sqrt{\frac{t+ \log(d \cdot | \mathcal{A} |)}{n_{\text{min}}} },
\end{align*}
where $C$ depends on $\sigma_{\text{max}}$, $\mu_{\text{max}}$, $c'$ and $\gamma$.
\end{proof}

\subsubsection{Lemma~\ref{lemma:finite-sample-bound}}\label{sec:proof-coroll-refc}

The following result provides a bound on $ \| \hat b - b^{\gamma} \|_{1}$ and  $\hat W( \hat b)$, with $\hat{W}(\bullet)$ and $\hat b$ defined as in Section~\ref{sec:proof-theor-refthm:f}. It follows directly from Theorem 2.2 in \citet{van2016estimation}, but the notation is different.
\begin{lemma}\label{lemma:finite-sample-bound}
  Let $\lambda_{\varepsilon}$ satisfy $\lambda_{\varepsilon} \ge \| z^{*} \|_{\infty}$. Let $0 \le \delta < 1$ be arbitrary and define for $\lambda > \lambda_{\varepsilon}$ and all $S \subseteq  \{1,\ldots, d\}$
\begin{align*}
  \underline{\lambda} &:= \lambda - \lambda_{\varepsilon}, \qquad \overline{\lambda} := \lambda + \lambda_{\varepsilon} + \delta \underline{\lambda}, \qquad L := \frac{\overline{\lambda}}{(1-\delta) \underline{\lambda}}, \\
\hat{\phi}^{2}(L,S)&:= \\
\min_{\| b_{S} \|_{1} =1 , \| b_{-S} \|_{1} \le L} |S| & \left( \frac{1}{| \mathcal{A}|} \sum_{a \in \mathcal{A}} \frac{1}{n_{a}} \sum_{i=1}^{n_{a}}  \left( (\X_{i}^{(a)} - \overline{\X}^{(a)}) b \right)^{2}  + \frac{\gamma}{| \mathcal{A}|} \sum_{a \in \mathcal{A}} ( \overline{\X}^{(a)} b)^{2}  \right).
\end{align*}
Then for all $b \in \mathbb{R}^{d}$ and all sets $S$,
\begin{align*}
  2 \delta \underline{\lambda} \| \hat b - b\|_{1} +&  \hat{W}(\hat b)  \le \hat{W}(b)+ \frac{\overline{\lambda}^{2} |S|}{\hat{\phi}^{2}(L,S)} + 4 \lambda \| b_{-S}\|_{1}.
\end{align*}
\end{lemma}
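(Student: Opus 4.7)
The plan is to reduce the statement to a standard sharp oracle inequality for the Lasso, namely Theorem~2.2 of \citet{van2016estimation}; once the reduction is identified, only a translation of notation remains. First, I recast anchor regression as a plain Lasso on a transformed design, in the spirit of~\eqref{eq:14} but adapted to the equal-weight discrete-anchor loss. Concretely, I form a matrix $\tilde{\X}$ by stacking the $n$ within-group centered vectors $(\X_{i,\bullet}^{(a)} - \overline{\X}^{(a)})/\sqrt{n_a |\mathcal{A}|}$ followed by the $|\mathcal{A}|$ group-mean rows $\sqrt{\gamma/|\mathcal{A}|}\,\overline{\X}^{(a)}$, and build $\tilde{\Y}$ from the corresponding scalars. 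The anchor-regression objective then equals $\|\tilde{\Y} - \tilde{\X} b\|_2^2 + 2\lambda\|b\|_1$, so $\hat b$ is the Lasso estimator on $(\tilde{\Y}, \tilde{\X})$. Under this identification, $\hat W(b) = \|\tilde{\X}(b - b^\gamma)\|_2^2$, the effective noise vector is $\tilde{\X}^{\intercal}(\tilde{\Y} - \tilde{\X} b^\gamma)$ with $\ell_\infty$-norm exactly $z^{*}$, and $\hat\phi^2(L,S)$ is the ordinary Lasso compatibility constant of $\tilde{\X}$.

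Next I invoke the \emph{sharp} basic inequality. The KKT conditions for the Lasso, combined with the subgradient inequality for $\|\cdot\|_1$, give for any $b$
\begin{equation*}
\|\tilde{\Y} - \tilde{\X}\hat b\|_2^2 + \|\tilde{\X}(\hat b - b)\|_2^2 + 2\lambda\|\hat b\|_1 \;\le\; \|\tilde{\Y} - \tilde{\X} b\|_2^2 + 2\lambda\|b\|_1.
\end{equation*}
Expanding both squared losses around $b^\gamma$, cancelling the constant $\|\tilde{\Y} - \tilde{\X} b^\gamma\|_2^2$, and bounding the cross term $2\langle\tilde{\X}^{\intercal}(\tilde{\Y}-\tilde{\X} b^\gamma),\hat b - b\rangle$ by H\"older's inequality using $\|z^{*}\|_\infty \le \lambda_\varepsilon$, I obtain
\begin{equation*}
\hat W(\hat b) + Q(\hat b - b) + 2\lambda\|\hat b\|_1 \;\le\; \hat W(b) + 2\lambda_\varepsilon\|\hat b - b\|_1 + 2\lambda\|b\|_1,
\end{equation*}
where $Q(u) := \|\tilde{\X} u\|_2^2$ is precisely the quadratic form appearing in the definition of $\hat\phi^2(L,S)$.

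I then perform the usual $\ell_1$-splitting across $S$ and $S^c$: writing $\|\hat b\|_1 - \|b\|_1 \ge \|(\hat b - b)_{S^c}\|_1 - \|(\hat b - b)_S\|_1 - 2\|b_{S^c}\|_1$, shifting a fraction $\delta\underline\lambda$ of the penalty to the left, and using $2(\lambda+\lambda_\varepsilon) + 2\delta\underline\lambda = 2\overline\lambda$, the inequality above becomes
\begin{equation*}
\hat W(\hat b) + Q(\hat b - b) + 2\delta\underline\lambda\|\hat b - b\|_1 + 2(1-\delta)\underline\lambda\|(\hat b - b)_{S^c}\|_1 \;\le\; \hat W(b) + 2\overline\lambda\|(\hat b - b)_S\|_1 + 4\lambda\|b_{S^c}\|_1.
\end{equation*}
A two-case analysis then closes the argument. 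If $\|(\hat b - b)_{S^c}\|_1 > L\|(\hat b - b)_S\|_1$, the definition $L=\overline\lambda/((1-\delta)\underline\lambda)$ forces $2(1-\delta)\underline\lambda\|(\hat b - b)_{S^c}\|_1 \ge 2\overline\lambda\|(\hat b - b)_S\|_1$, and the claim follows after dropping $Q(\hat b - b)\ge 0$. Otherwise $\hat b - b$ satisfies the cone condition $\|(\hat b - b)_{S^c}\|_1 \le L\|(\hat b - b)_S\|_1$, so the compatibility definition yields $\|(\hat b - b)_S\|_1 \le \sqrt{|S|/\hat\phi^2(L,S)}\,\sqrt{Q(\hat b - b)}$. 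Applying Young's inequality $2\overline\lambda\sqrt{|S|/\hat\phi^2(L,S)}\sqrt{Q(\hat b - b)}\le \overline\lambda^2 |S|/\hat\phi^2(L,S) + Q(\hat b - b)$ produces the claimed term $\overline\lambda^2|S|/\hat\phi^2(L,S)$, and $Q(\hat b - b)$ cancels against the matching term on the left-hand side, leaving exactly the stated inequality.

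The only delicate point is using the \emph{sharp} form of the basic inequality (with the extra $\|\tilde{\X}(\hat b - b)\|_2^2$ on the left), so that the quadratic-form quantity produced by compatibility plus Young cancels rather than inflating the coefficient of $\hat W(b)$ on the right. Every other step is direct algebra or H\"older; once this convexity-based sharpening is in place, the bound reduces verbatim to the sharp oracle inequality of Theorem~2.2 in \citet{van2016estimation}.
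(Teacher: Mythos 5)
Your proposal is correct and takes essentially the same route as the paper: both recast the equal-weight anchor objective as a Lasso on a stacked, rescaled design $(\tilde\Y,\tilde\X)$ so that $\hat W$, $z^{*}$ and $\hat\phi^{2}(L,S)$ become the usual prediction loss, effective noise level and compatibility constant, and then apply the sharp oracle inequality of Theorem~2.2 in \citet{van2016estimation}. The only difference is that the paper stops at the change of notation and cites that theorem, whereas you re-derive it (sharp KKT-based basic inequality, H\"older with $\|z^{*}\|_{\infty}\le\lambda_{\varepsilon}$, $\ell_1$-splitting over $S$ and its complement, cone/compatibility plus Young), and that derivation is correct.
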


\begin{proof}
From a mathematical perspective, the proof of this result is immediate. However, it requires a change of notation. Define $\tilde n := \sum_{a \in \mathcal{A}} n_{a} + |\mathcal{A}|$. With some abuse of notation we can define $\tilde \Y \in \mathbb{R}^{\tilde n}$ as the row-wise concatenation of $\sqrt{\frac{\tilde n}{| \mathcal{A}| n_{a}}}(\Y^{(a)}- \overline{\Y}^{(a)}) \in \mathbb{R}^{n_{a}}$, $a \in \mathcal{A}$ and $\sqrt{\frac{\tilde n \gamma}{| \mathcal{A}|}} \cdot\overline{\Y}^{(a)} \in \mathbb{R}$, $a \in \mathcal{A}$. Analogously define $\tilde \X \in \mathbb{R}^{ \tilde n \times d}$ as the row-wise concatenation of  $\sqrt{\frac{\tilde n}{| \mathcal{A}| n_{a}}}(\X^{(a)}- \overline{\X}^{(a)})$, $a \in \mathcal{A}$ and  $\sqrt{\frac{\tilde n \gamma}{| \mathcal{A}|}}\cdot\overline{\Y}^{(a)}$, $a \in \mathcal{A}$. Recall that
\begin{equation}\label{eq:24}
  \hat W(b) = \frac{1}{| \mathcal{A}|} \sum_{a \in \mathcal{A}} \frac{1}{n_{a}} \sum_{i=1}^{n_{a}}  \left( (\X_{i,\bullet}^{(a)} - \overline{\X}^{(a)})  (b- b^{\gamma}) \right)^{2}  + \frac{\gamma}{| \mathcal{A}|} \sum_{a \in \mathcal{A}} ( \overline{\X}^{(a)} (b-b^{\gamma}))^{2}.
\end{equation}
 By this definition, we can rewrite $\hat{W}(b)$ as
\begin{equation*}
  \hat W(b) = \frac{1}{\tilde n} \sum_{i=1}^{\tilde n} ( \tilde \X_{i, \bullet} (b - b^{\gamma}))^{2}.
\end{equation*}
Furthermore, \emph{anchor regression} $\hat{b}$ minimizes the functional
\begin{equation*}
   \frac{1}{\tilde n} \sum_{i=1}^{\tilde n} ( \tilde \Y_{i} - \tilde \X_{i, \bullet} b)^{2} + 2 \lambda \| b \|_{1}.
\end{equation*}
We can also rewrite $z^{*}$ as
\begin{equation}\label{eq:25}
  z^{*} = \frac{1}{\tilde n} \left\| \tilde \X^{\intercal} (\tilde \Y - \tilde \X b^{\gamma}) \right\|_{\infty}
\end{equation}
and define $\tilde \varepsilon := \tilde \Y - \tilde \X b^{\gamma} $.
Now let us cite the following Theorem.
\begin{theorem}[Theorem 2.2 in  \citet{van2016estimation}]
  Let $\lambda_{\varepsilon}$ satisfy $\lambda_{\varepsilon} \ge   \frac{1}{\tilde n} \left\| \tilde \X^{\intercal} \tilde \varepsilon \right\|_{\infty}$. Let $0 \le \delta < 1$ be arbitrary and define for $\lambda > \lambda_{\varepsilon}$ and all sets $S \subseteq \{1,\ldots,d\}$
\begin{align*}
  \underline{\lambda} &:= \lambda - \lambda_{\varepsilon}, \qquad \overline{\lambda} := \lambda + \lambda_{\varepsilon} + \delta \underline{\lambda}, \qquad L := \frac{\overline{\lambda}}{(1-\delta) \underline{\lambda}}, \\
\hat{\phi}^{2}(L,S)&:= \min_{\| b_{S} \|_{1} =1 , \| b_{-S} \|_{1} \le L} |S|  \frac{1}{ \tilde{n}}  \sum_{i=1}^{ \tilde{n}} ( \tilde{\X}_{i,\bullet}(b_{S}- b_{-S}))^{2}.
\end{align*}
Then for all $b $ and all sets $S$,
\begin{equation*}
  2 \delta \underline{\lambda} \| \hat b - b\|_{1} +\frac{1}{\tilde n}  \sum_{i=1}^{\tilde n} ( \tilde \X_{i,\bullet}(\hat b- b^{\gamma}))^{2}  \le   \frac{1}{\tilde n}  \sum_{i=1}^{\tilde n} ( \tilde \X_{i,\bullet}(b^{\gamma}- b))^{2} + \frac{\overline{\lambda}^{2} |S|}{\hat{\phi}^{2}(L,S)} + 4 \lambda \| b_{-S}\|_{1}.
\end{equation*}
\end{theorem}
Using the above-mentioned change of notation concludes the proof of Lemma~\ref{lemma:finite-sample-bound}.

\end{proof}

\subsection{Proof of Theorem~\ref{thm:invariance}}
\begin{proof}
Define $\mathbf{G} := \Etrain[A A^{\intercal}]$. Recall that
\begin{equation*}
I = \{ b: \Etrain[A \cdot (Y-X^\intercal b)] =0 \}
\end{equation*}
and define
\begin{align*}
J := \,  &\{ b : \text{ for all $v$ in the span of $\mathbf{M}$ we have that} \\
&     Y - X^\intercal b \text{ has the same distribution under $\mathbb{P}_{v}$ as under $\Ptrain$}\}.
\end{align*}
We will show $I = J$. For simplicity, in the following we will write $w_{b} := ((\mathrm{Id}-\mathbf{B})_{d+1, \bullet}^{-1}-  b^{\intercal}(\mathrm{Id}-\mathbf{B})_{1:d, \bullet}^{-1} )^{\intercal} $. Using the model assumptions of Section~\ref{sec:setting-notation},
\begin{align*}
 \Etrain [A \cdot (  Y - X^\intercal b)] &=  \Etrain \left[ A \cdot \left(   w_{b}^{\intercal} ( \varepsilon + \mathbf{M} A ) \right) \right] \\
&=  \Etrain \left[ A \cdot \left(  w_{b}^{\intercal}  \mathbf{M} A  \right) \right].
\end{align*}
 As $\Etrain[A A^{\intercal}] = \mathbf{G}$, it can be rewritten as $A = \mathbf{G}^{1/2} Z$ with $\Etrain[Z Z^{\intercal}] = \mathrm{Id}$. Hence,
\begin{align*}
  \Etrain [A \cdot (Y-X^\intercal b)] &= \Etrain \left[ (\mathbf{G}^{1/2} Z) \cdot  \left( w_{b}^{\intercal}  \mathbf{M} \mathbf{G}^{1/2} Z \right) \right] \\
&= \mathbf{G}^{1/2} \Etrain \left[  Z \cdot  \left( w_{b}^{\intercal} \mathbf{M} \mathbf{G}^{1/2} Z \right) \right] \\
&= \mathbf{G}^{1/2}  \left( w_{b}^{\intercal} \mathbf{M} \mathbf{G}^{1/2}\right)^{\intercal}
\end{align*}
As $\mathbf{G}$ is assumed to be invertible, $\Etrain[A \cdot(Y-X^\intercal b)] = 0$ if and only if $ w_{b}^{\intercal} \mathbf{M} = 0$. This implies
\begin{equation}\label{eq:31}
  I = \{ b : w_{b}^{\intercal} \mathbf{M} = 0 \}.
\end{equation}
Using the model assumptions of Section~\ref{sec:setting-notation}, under $\mathbb{P}_{v}$,
\begin{align*}
    Y - X^\intercal b &=  w_{b}^{\intercal} (\varepsilon+v), \\
 \text{ and under $\Ptrain$ we have } Y - X^\intercal b &= w_{b}^{\intercal} (\varepsilon+\mathbf{M}A ).
\end{align*}
The distributions of these random variables are equal for all $v \in \text{span}(\mathbf{M})$ if and only if  $ w_{b}^{\intercal} \mathbf{M} = 0$.
Hence,
\begin{align*}
J = \,  &\{ b : w_{b}^{\intercal} \mathbf{M} = 0 \}.
\end{align*}
Using equation~\eqref{eq:31} concludes the proof.
\end{proof}

\subsection{Figures for evaluating replicabilty} \label{sec:addfig}

We show here additional results for replicability of variable
  selection in the GTEx data.
\begin{figure}[ht]
  \begin{center}
  \includegraphics[width=0.48\textwidth]{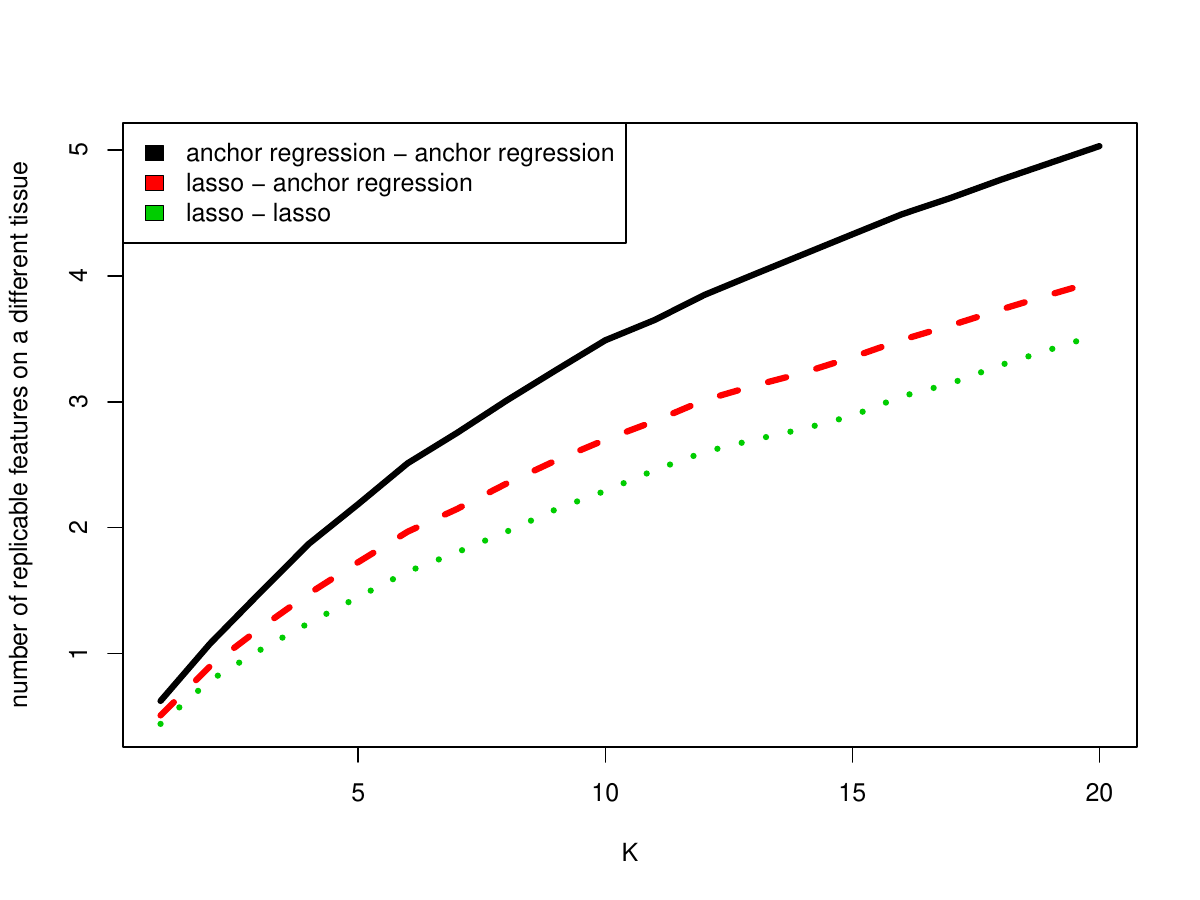}\hfill
  \includegraphics[width=0.48\textwidth]{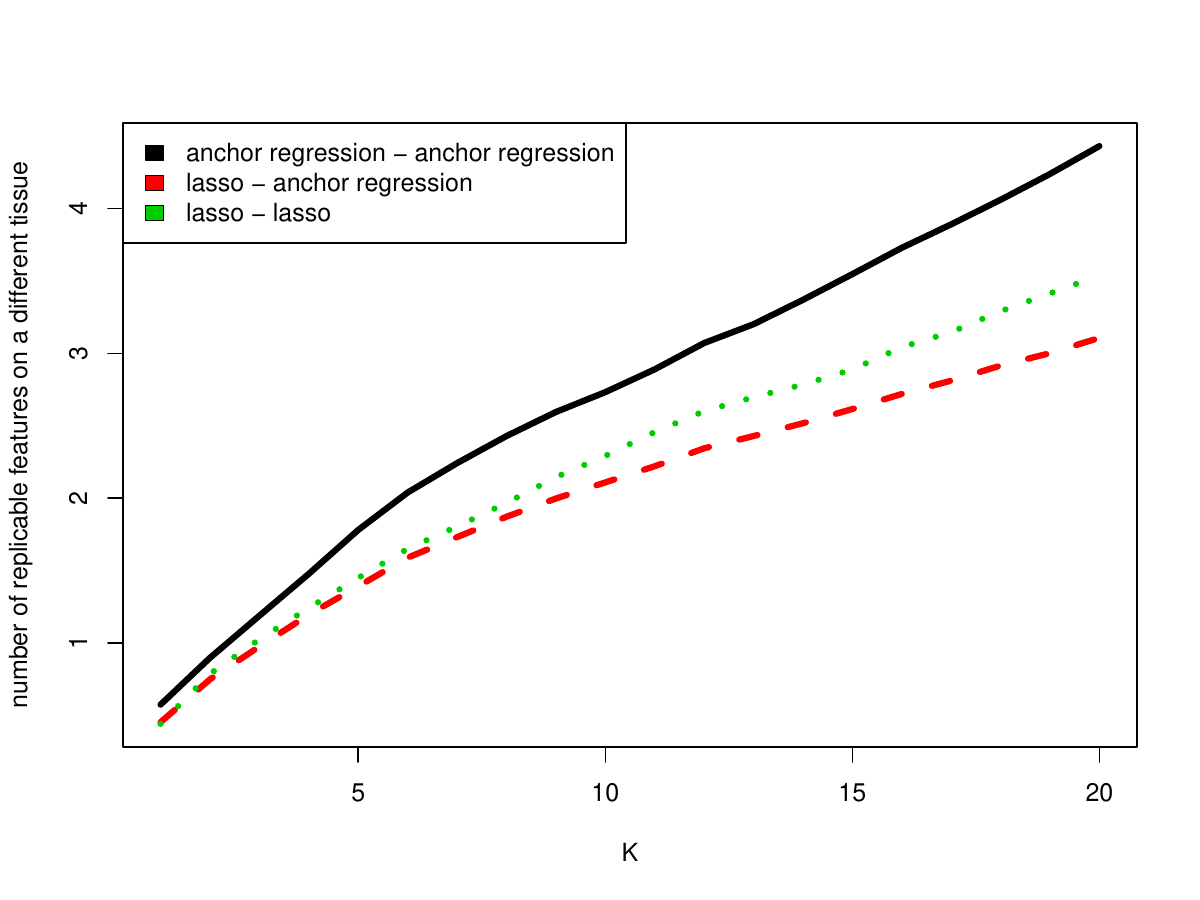}
  \end{center}
  \caption{Replicability of variable selection on GTEx
    data. Same caption as in Figure~\ref{fig:gtexcomparison}, but now with
  $a_{y,k,t} := \min_{\gamma \in [0,.25]}| \hat b_k^{\gamma,\lambda} |$ on the left, and with $a_{y,k,t} := \min_{\gamma \in [0,16]}| \hat b_k^{\gamma,\lambda} |$ on the right.}
\end{figure}

\begin{figure}[ht]
  \begin{center}
  \includegraphics[width=0.48\textwidth]{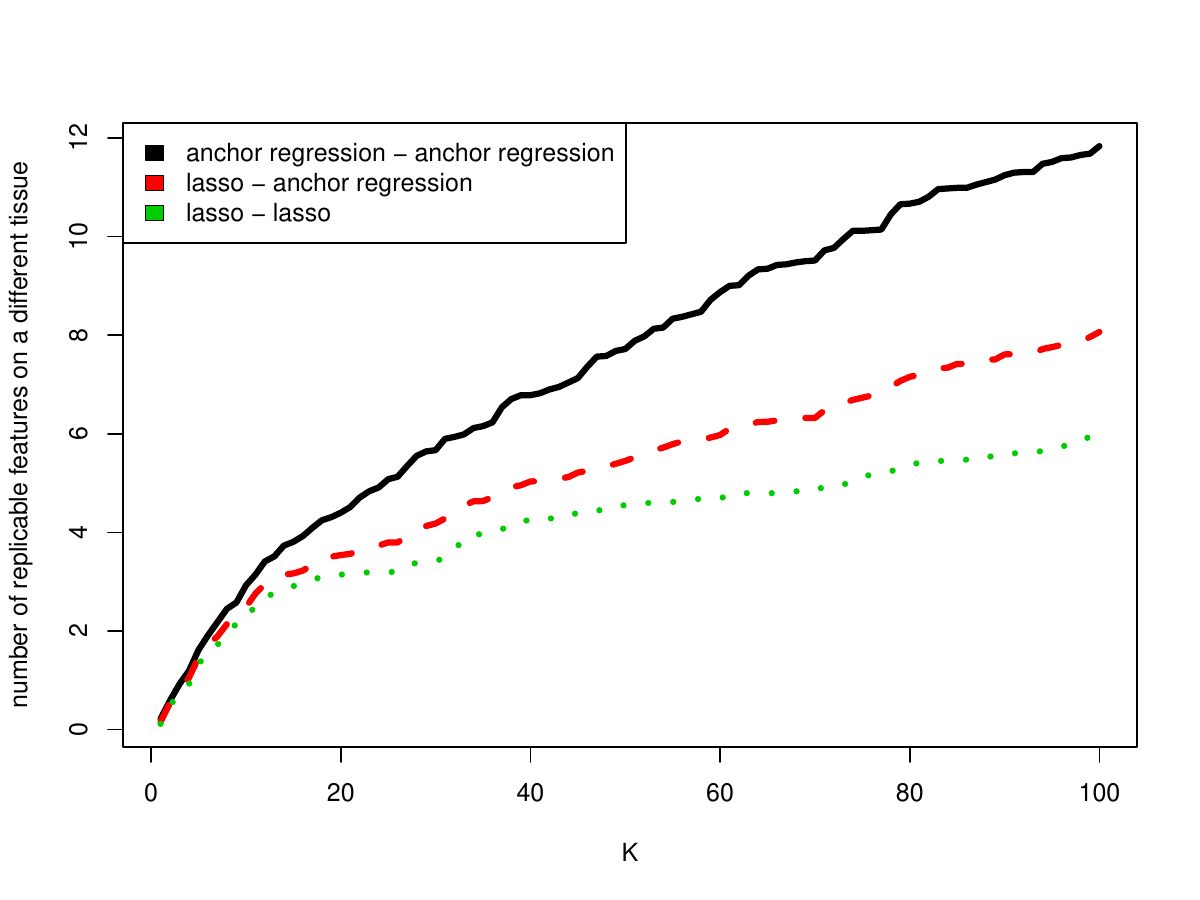}\hfill
  \includegraphics[width=0.48\textwidth]{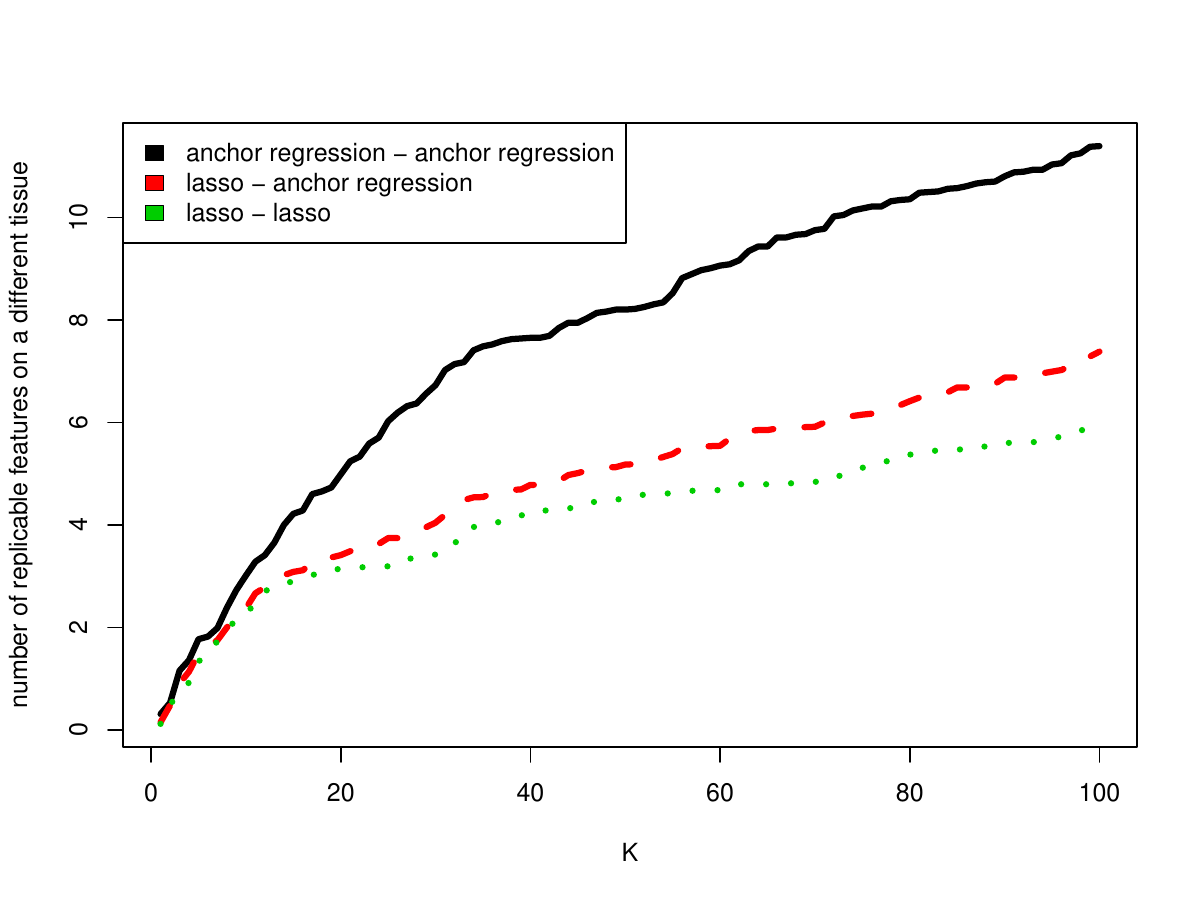}
  \end{center}
  \caption{Replicability of variable selection on GTEx
      data. Same caption as in
      Figure~\ref{fig:gtexcomparison}, but with 
  $a_{y,k,t} := \min_{\gamma \in [0,.25]}| \hat b_k^{\gamma,\lambda} |$ on
  the left, and with $a_{y,k,t} := \min_{\gamma \in [0,16]}| \hat
  b_k^{\gamma,\lambda} |$ on the right. Furthermore, the variable ranking
  is done over the 200 choices of the target variable $y$ and
    averaging the results, instead of a
  fixed target $y$.}
\end{figure}
\begin{figure}[ht]
  \begin{center}
  \includegraphics[width=0.48\textwidth]{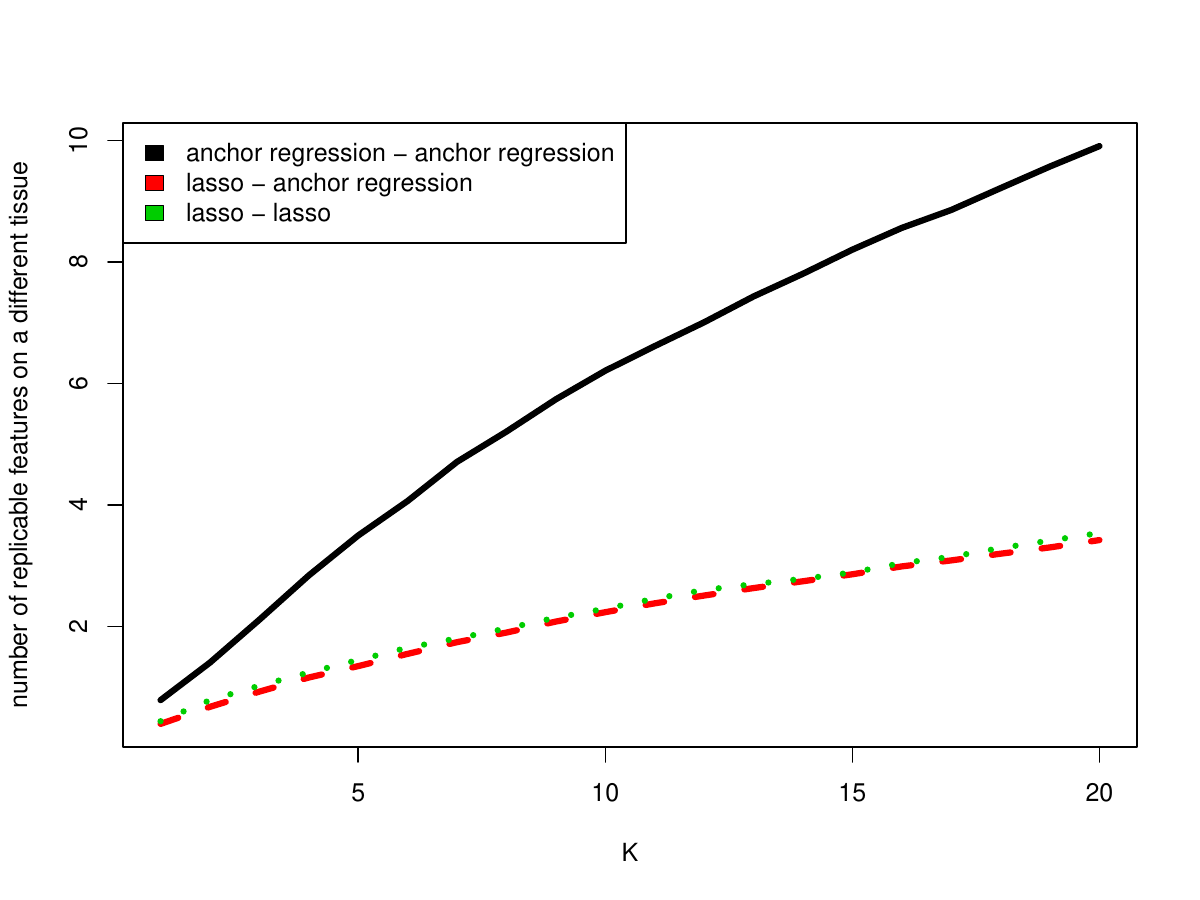}\hfill
 \includegraphics[width=0.48\textwidth]{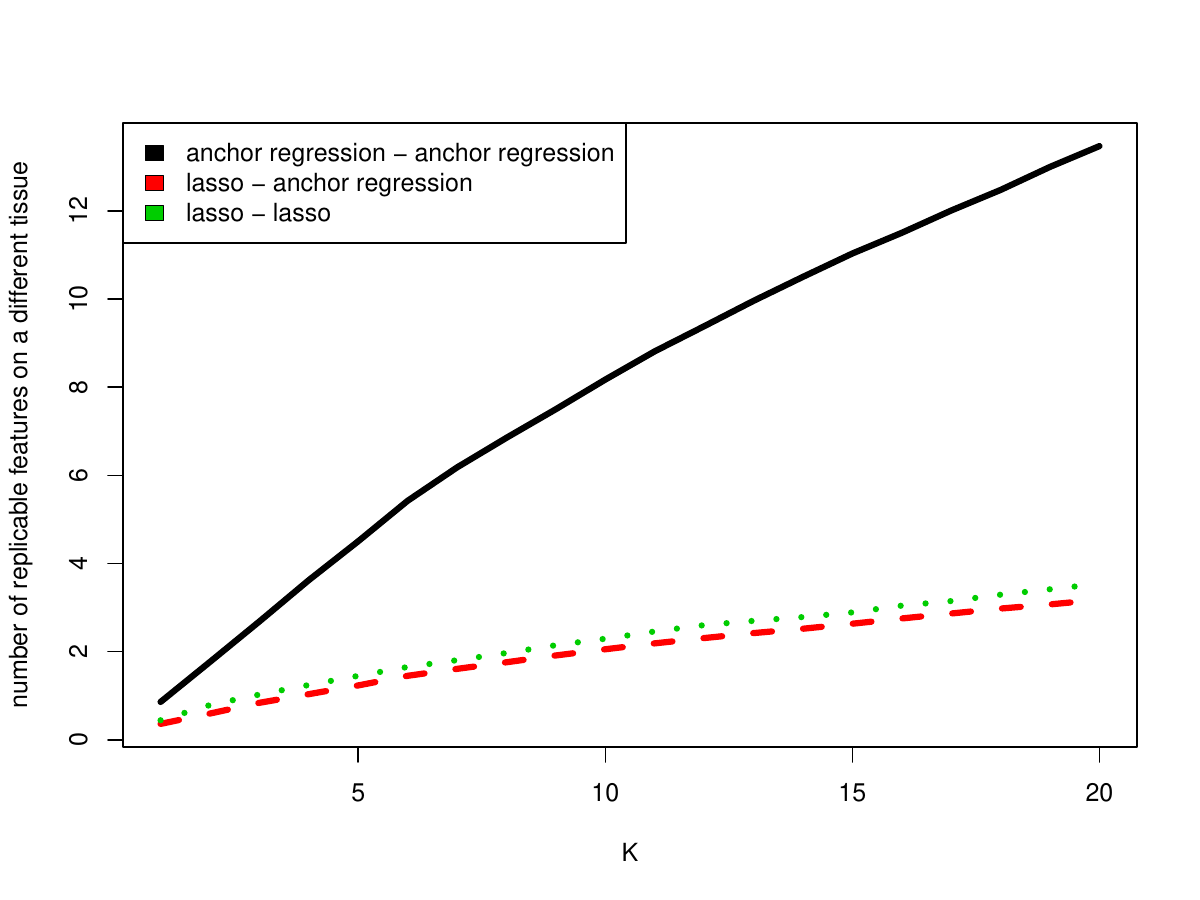}
  \end{center}
  \caption{Replicability of variable selection on GTEx
      data. Same caption as in Figure~\ref{fig:gtexcomparison}, but with $a_{y,k,t} = |\hat b^{\gamma,\lambda}_{k}|$ for $\gamma = 8$ (left) and $\gamma = 16$ (right). While the coefficients show high replicability it depends on the interpretation of the anchor whether the coefficients are scientifically meaningful quantities. This is further discussed at the end of Section~\ref{sec:repl-param-bright}.}
\end{figure}

\subsection{Figures for the bike sharing application} \label{sec:add-hour}

\begin{figure}
\begin{center}
\includegraphics[scale=0.4]{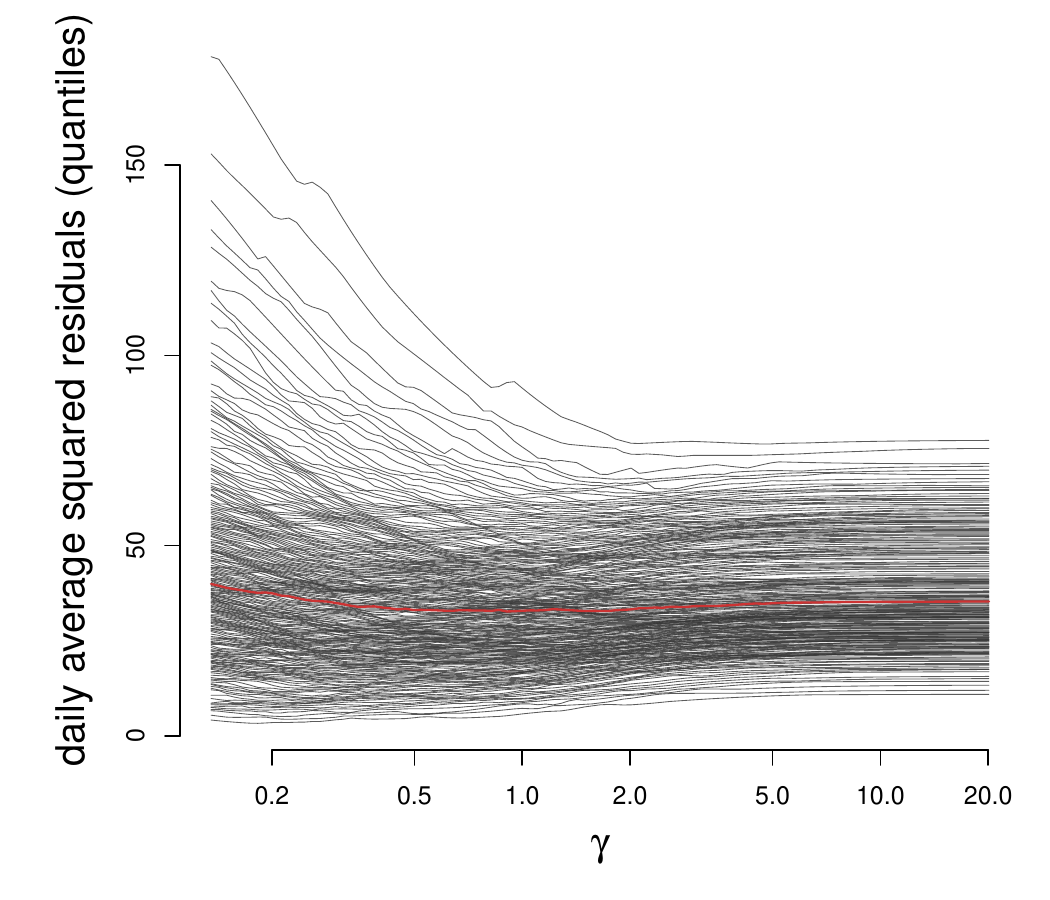}
\end{center}
\caption{
The plot is computed similarly as in Figure~\ref{fig:cvxquantiles}, but without removing the effect of working day, weekday and holiday in a pre-processing step. The two plots are very similar, i.e.\ practically it makes little difference whether the effect of working day, weekday and holiday are removed in a pre-processing step or not. }
\label{fig:pre-processing}
\end{figure}

\begin{figure}
\begin{center}
\includegraphics[scale=0.4]{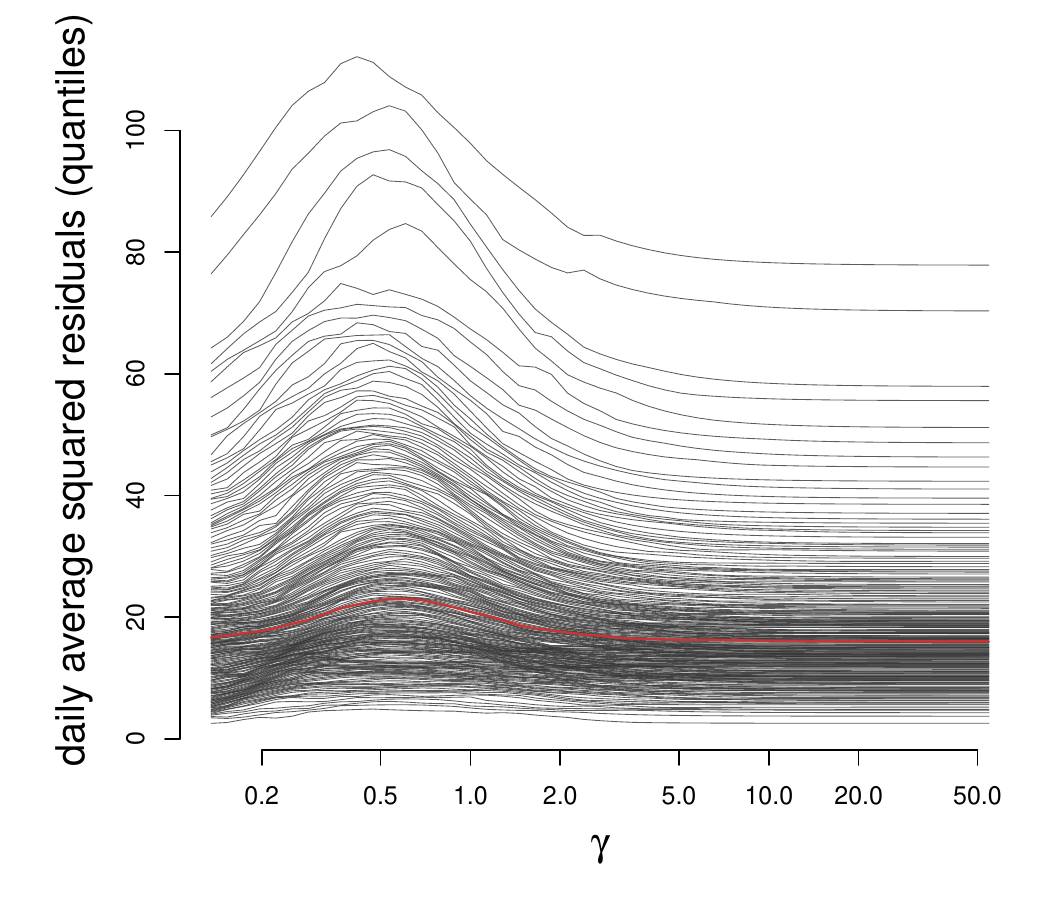}
\end{center}
\caption{
The plot is computed similarly as in Figure~\ref{fig:cvxquantiles}, but with the modified anchor procedure which is described at the end of Section~\ref{sec:bike-sharing-data}. For large quantiles of the conditional loss, $\gamma \gg 1$ outperforms $\gamma < 1$, but the relationship is not monotonous and the performance of $\gamma \approx 0$ and $\gamma = 50$ are close. }
\label{fig:hour}
\end{figure}

\begin{figure}
\begin{center}
\includegraphics[scale=0.6]{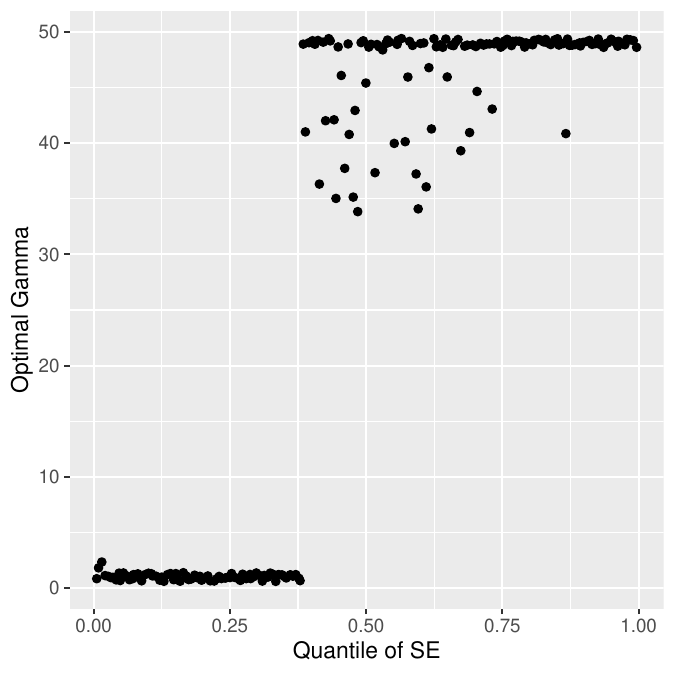}
\includegraphics[scale=0.6]{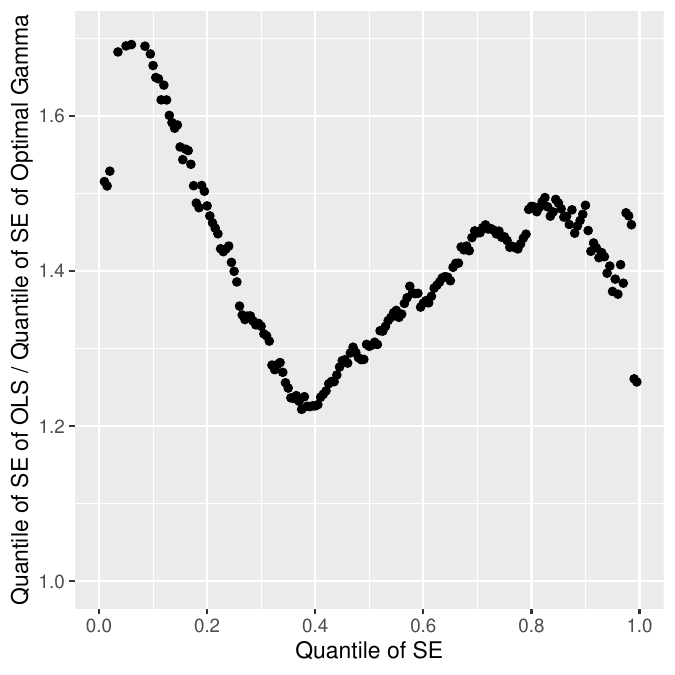}
\end{center}
\caption{
  The plots were computed similarly as in Figure~\ref{fig:optgamma}, but with the modified anchor procedure which is described at the end of Section~\ref{sec:bike-sharing-data}.  For small quantiles, $\gamma \approx 0$ is optimal, while for large quantiles $\gamma \approx 50$ is optimal. However, as can be seen in Figure~\ref{fig:hour}, the performance of $\gamma = 0$ and $\gamma \approx 50$ are close. The anchor regression procedure performs better than ordinary least-squares for all considered quantiles.} \label{fig:optgamma_hour}
\end{figure}

\end{document}